\documentclass[12pt,notitlepage]{article}


\usepackage{enumitem}
\usepackage[margin=1in]{geometry}
\usepackage{relsize}
\usepackage{graphicx}
\usepackage[colorlinks=true,citecolor=black,linkcolor=black,urlcolor=blue]{hyperref}
\hypersetup{colorlinks={true},linkcolor={blue},citecolor=green}
\usepackage{mathtools}
\usepackage{mathpazo}
\usepackage{mathrsfs}
\usepackage{url}

\usepackage{amsthm,amsmath,amssymb}
\usepackage{enumitem}
\usepackage[nameinlink]{cleveref}
\usepackage{authblk}


\theoremstyle{plain}
\newtheorem{theorem}{Theorem}
\newtheorem{lemma}[theorem]{Lemma}
\newtheorem{corollary}[theorem]{Corollary}
\newtheorem{proposition}[theorem]{Proposition}

\newtheorem{observation}[theorem]{Observation}

\theoremstyle{definition}
\newtheorem{definition}[theorem]{Definition}
\newtheorem{example}[theorem]{Example}

\newtheorem{problem}[theorem]{Problem}
\newtheorem{question}[theorem]{Question}
\newtheorem{remark}[theorem]{Remark}

\theoremstyle{remark}

\crefname{problem}{problem}{problems}
\crefname{proposition}{proposition}{propositions}
\crefname{question}{question}{questions}

\newlist{probenum}{enumerate}{1} 
\setlist[probenum]{label=\alph*), ref=\theproposition~(\alph*)}
\crefalias{probenumi}{problem} 

\newlist{thmenum}{enumerate}{1} 
\setlist[thmenum]{label=\alph*), ref=\theproposition~(\alph*)}
\crefalias{thmenumi}{theorem} 

\newlist{propenum}{enumerate}{1} 
\setlist[propenum]{label=\alph*), ref=\theproposition~(\alph*)}
\crefalias{propenumi}{proposition} 

\title{Waring Rank, Parameterized and Exact Algorithms}
\author{Kevin Pratt \footnote{\href{mailto:kpratt@andrew.cmu.edu}{kpratt@andrew.cmu.edu}}}
\affil{Computer Science Department, Carnegie Mellon University}
\begin{document}
\maketitle
\begin{abstract}
Given nonnegative integers $n$ and $d$, where $n \gg d$, what is the minimum number $r$ such that there exist linear forms $\ell_1, \ldots, \ell_r \in \mathbb{C}[x_1, \ldots, x_n]$ so that $\ell_1^d + \cdots + \ell_r^d$ is supported exactly on the set of all degree-$d$ multilinear monomials in $x_1, \ldots, x_n$? We show that this and related questions have surprising and intimate connections to the areas of parameterized and exact algorithms, generalizing several well-known methods and providing a concrete approach to obtain faster approximate counting and deterministic decision algorithms. This gives a new application of Waring rank, a classical topic in algebraic geometry with connections to algebraic complexity theory, to computer science.

To illustrate the amenability and utility of this approach, we give a randomized $4.075^d \cdot \mathrm{poly}(n, \varepsilon^{-1})$-time algorithm for computing a $(1 + \varepsilon)$ approximation of the sum of the coefficients of the multilinear monomials in a degree-$d$ homogeneous $n$-variate polynomial with nonnegative coefficients. As an application of this we give a faster algorithm for approximately counting subgraphs of bounded treewidth, improving on earlier work of Alon et al. Along the way we give an exact answer to an open problem of Koutis and Williams and sharpen a lower bound on the size of perfectly balanced hash families given by Alon and Gutner.
\end{abstract}
\clearpage

\section{Introduction}
The \textit{Waring rank} of a homogeneous $n$-variate degree-$d$ polynomial $f \in \mathcal{S}_d^n \coloneqq \mathbb{C}[x_1, \ldots, x_n]_d$, denoted $\mathbf{R}(f)$, is the minimum $r$ such that
\begin{equation}\label{wdecomp}
f = \ell_1^d + \cdots + \ell_r^d,
\end{equation}
for some linear forms $\ell_1 , \ldots , \ell_r \in \mathcal{S}_1^n$. The study of Waring rank is a classical problem in algebraic geometry and invariant theory, with pioneering work done in the second half of the 19th century by A. Clebsch, J.J. Sylvester, and T. Reye, among others \cite[Introduction]{iarrobino1999power}. It has enjoyed a recent resurgence of popularity within algebraic geometry \cite{iarrobino1999power,landsberg2012tensors} and has connections in computer science to the limiting exponent of matrix multiplication $\omega$ \cite{chiantini2018polynomials}, the Mulmuley-Sohoni Geometric Complexity Theory program \cite{burgisser2019no}, and several other areas in algebraic complexity \cite{landsberg_2017,efremenko2018barriers}. This paper adds \textit{parameterized algorithms} to this list, showing that several methods in this area (color-coding methods \cite{alon1995color,alon2007balanced,huffner2008algorithm}, the group-algebra/determinant sum approach \cite{koutis2008faster,williams2009finding,bjorklund2010determinant}, and inclusion-exclusion methods) fundamentally result from rank upper bounds for a specific family of polynomials. In a situation analogous to that of $\omega$, better explicit upper bounds on the Waring rank of these polynomials yield faster algorithms for certain problems in a completely black-box manner, and lower bounds on the Waring rank of these polynomials imply barriers such algorithms face.

This connection should not come as a complete surprise, as many algorithms work by solving a question about the coefficients of some efficiently-computable ``generating polynomial'' determined by the input. The insight of this paper, which has been largely unexploited, is that in general this is a question about Waring rank.

Let $e_{n,d} \coloneqq \sum_{1 \le i_1 < i_2 < \cdots < i_d \le n} x_{i_1} \cdots x_{i_d}$ denote the elementary symmetric polynomial of degree $d$ in $n$ variables. We will study the following questions:

\begin{question}\label{supprquest}
What is $A(n,d)$, the minimum Waring rank among all $g \in \mathcal{S}_d^n$ with the property that $\mathrm{supp}(g) = \mathrm{supp}(e_{n,d})$?\footnote{Here $\mathrm{supp}( \sum_{\alpha \in \mathbb{N}^n} c_\alpha x_1^{\alpha_1} \cdots x_n ^{\alpha_n}) \coloneqq \{\alpha \in \mathbb{N}^n : c_\alpha \neq 0 \}$.}
\end{question}

\begin{question}\label{apndq}
What is $A^+(n,d)$, the the minimum Waring rank among all $g \in \mathbb{R}_{\ge 0}[x_1, \ldots, x_n]$ with the property that $\mathrm{supp}(g) = \mathrm{supp}(e_{n,d})$?
\end{question}

\begin{question}\label{aendq}
For $0 \le \varepsilon < 1$, what is $A^\epsilon(n,d)$, the minimum Waring rank among all $g~\in~\mathbb{R}[x_1, \ldots, x_n]$ with the property that $\mathrm{supp}(g) = \mathrm{supp}(e_{n,d})$ and the nonzero coefficients of $g$ are in the range $1 \pm \varepsilon$?
\end{question}

We now illustrate the algorithmic relevance of these questions with a new and very simple $\binom{n}{\lfloor d/2 \rfloor}\mathrm{poly}(n)$-time and $\mathrm{poly}(n)$-space algorithm for exactly counting simple cycles (i.e., closed walks with no repeated vertices) of length $d$ in an $n$-vertex graph. This is the fastest polynomial space algorithm for this problem, improving on a $2^d \binom{n}{\lfloor d/2 \rfloor}\mathrm{poly}(n)$-time algorithm of Fomin et al.~\cite{fomin2012faster} which in turn improved on a $2^d (d/2)! \binom{n}{\lfloor d/2 \rfloor}\mathrm{poly}(n)$-time algorithm of Vassilevska Williams and Williams \cite{vassilevska2009finding}.

Given a directed graph $G$, let $A_G$ be the symbolic matrix with entry $(i,j)$ equal to the variable $x_i$ if there is an edge from vertex $v_i$ to vertex $v_j$, and zero otherwise. By the trace method,
\begin{equation}\label{gencycles}
f_G \coloneqq \mathrm{tr}(A_G^d) = \sum_{\substack{\mathrm{closed~walks} \\ (v_{i_1},v_{i_2},\ldots,v_{i_d}) \in G}} x_{i_1} \cdots x_{i_d} \in \mathcal{S}_d^n.
\end{equation}
Now we denote by $g(\partial \mathbf{x})$ the partial differential operator $g(\frac{\partial}{\partial x_1}, \ldots, \frac{\partial}{\partial x_n})$. The algorithm is based on two easy observations:

\begin{observation}
The number of simple cycles of length $d$ in $G$ equals $e_{n,d}(\partial \mathbf{x})f_G$.
\end{observation}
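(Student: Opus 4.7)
The plan is to unfold both sides and match them monomial by monomial; the entire content is bookkeeping about what differential operators do to monomials.

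First I would expand the operator as
\[
e_{n,d}(\partial \mathbf{x}) = \sum_{1 \le i_1 < \cdots < i_d \le n} \frac{\partial^d}{\partial x_{i_1} \cdots \partial x_{i_d}},
\]
and then work out its action on an arbitrary degree-$d$ monomial $x_{j_1} x_{j_2} \cdots x_{j_d}$ (with possible repetitions). A quick case analysis on whether $\{j_1, \ldots, j_d\}$ (as a multiset) coincides with the set $\{i_1, \ldots, i_d\}$ shows that $\frac{\partial^d}{\partial x_{i_1} \cdots \partial x_{i_d}}(x_{j_1} \cdots x_{j_d}) = 1$ exactly when the $j$'s are a permutation of distinct indices equal to the $i$'s, and $0$ otherwise. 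Hence for \emph{any} degree-$d$ polynomial $h$, the scalar $e_{n,d}(\partial \mathbf{x}) h$ equals the sum of the coefficients of the multilinear monomials in $h$.

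Next, I would apply this principle with $h = f_G$. By \eqref{gencycles}, the coefficient of a multilinear monomial $x_{i_1} \cdots x_{i_d}$ in $f_G$ counts closed walks $(v_{k_1}, \ldots, v_{k_d})$ in $G$ with $\{k_1, \ldots, k_d\} = \{i_1, \ldots, i_d\}$; since all indices here are distinct, these are precisely the closed walks of length $d$ with no repeated vertices that sit on the vertex set $\{v_{i_1}, \ldots, v_{i_d}\}$. Summing these counts over all $d$-subsets partitions the collection of length-$d$ closed walks with no repeated vertices by their vertex support, so the total is exactly the number of simple cycles of length $d$ in $G$, matching the definition in the paper.

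Since both of the above steps are direct computations, there is no real obstacle; the only place where one has to be mildly careful is in Step 1, verifying that non-multilinear monomials contribute $0$ (because a $d$-fold partial derivative with $d$ distinct indices must hit each variable exactly once, so any monomial containing a repeated variable is killed by the single-variable derivatives of the variables it omits).
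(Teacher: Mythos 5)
Your proof is correct and fills in exactly the computation the paper leaves implicit (the paper states this as an unproved ``Observation''). The two steps you spell out — that $e_{n,d}(\partial \mathbf{x})$ applied to a degree-$d$ form returns the sum of its multilinear coefficients because $\partial^\beta x^\alpha$ with $|\alpha|=|\beta|=d$ is nonzero only when $\alpha=\beta$, and that the multilinear coefficients of $f_G$ count closed walks with no repeated vertices partitioned by vertex support — are precisely the intended justification, matching the paper's own remark that $e_{n,d}(\partial\mathbf{x})f$ is ``the sum of the coefficients of the multilinear monomials in $f$'' and its definition of a simple cycle as a closed walk with no repeated vertices.
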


\begin{observation}\label{psrev}
If $g = a_1 \ell_1^d + \cdots + a_d \ell_r^d$, where $\ell_i = c_{i,1} x_1 + \cdots + c_{i,n}x_n$ for $i = 1, \ldots , r$, then for all $f \in \mathcal{S}_d^n$,
\[g(\partial \mathbf{x})f = d! \sum_{i=1}^r a_i f(c_{i,1} , \ldots , c_{i,n}).\]
\end{observation}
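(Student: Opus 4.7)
The plan is to prove the identity by two successive linearity reductions, after which it becomes a direct multinomial computation. Both sides of the claimed identity are linear in $g$, so it suffices to handle the case $r = 1$ and $a_1 = 1$, i.e., to show
\[(\ell^d)(\partial \mathbf{x})\, f = d!\, f(c_1, \ldots, c_n)\]
whenever $\ell = c_1 x_1 + \cdots + c_n x_n$. Both sides of this reduced identity are in turn linear in $f \in \mathcal{S}_d^n$, so it further suffices to verify it for each monomial $f = \mathbf{x}^\alpha$ with $\alpha \in \mathbb{N}^n$ and $|\alpha| = d$.

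At this point the identity is a one-line calculation. Expanding $\ell^d$ by the multinomial theorem gives
\[\ell^d = \sum_{|\beta| = d} \binom{d}{\beta} c^\beta \mathbf{x}^\beta, \qquad \binom{d}{\beta} \coloneqq \frac{d!}{\beta_1!\cdots \beta_n!},\]
with $c^\beta = c_1^{\beta_1}\cdots c_n^{\beta_n}$ and $\mathbf{x}^\beta = x_1^{\beta_1}\cdots x_n^{\beta_n}$. Reading $\ell^d$ as a differential operator and applying it to $\mathbf{x}^\alpha$, I use the elementary fact that for multi-indices of equal total degree $d$, $\partial^\beta \mathbf{x}^\alpha = \alpha!$ if $\beta = \alpha$ and $0$ otherwise. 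Only the term $\beta = \alpha$ survives, giving
\[(\ell^d)(\partial \mathbf{x})\,\mathbf{x}^\alpha \;=\; \binom{d}{\alpha}\, c^\alpha\, \alpha! \;=\; d!\, c^\alpha \;=\; d!\, \mathbf{x}^\alpha(c_1,\ldots,c_n),\]
as required. Undoing the two linearity reductions recovers the general statement with the factor $a_i$ out front.

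There is essentially no obstacle: the proof is the standard apolarity/polarization identity and consists of two routine linearity steps followed by a single multinomial expansion. The only care needed is notational consistency with the convention $\partial^\beta = \partial_1^{\beta_1}\cdots \partial_n^{\beta_n}$ and with the multinomial coefficient. The same computation also shows (not needed here, but worth remarking) that the bilinear form $(g, f) \mapsto g(\partial \mathbf{x}) f$ on $\mathcal{S}_d^n \times \mathcal{S}_d^n$ is symmetric and nondegenerate, which is the underlying reason this evaluation trick will be algorithmically useful in the sequel.
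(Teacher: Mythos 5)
Your proof is correct. The paper itself does not prove \Cref{psrev} directly; it presents it as an immediate consequence of \Cref{gendual}, which bundles two classical facts cited from the literature: the differentiation identity $f(\partial\mathbf{x})\sum_i \ell_i^j = d!\sum_i f(\ell_i^*)\ell_i^{j-d}$ (\Cref{genduala}) and the symmetry $f(\partial\mathbf{x})g = g(\partial\mathbf{x})f$ (\Cref{gendualb}). Concretely, the paper's derivation is: apply \Cref{gendualb} to rewrite $g(\partial\mathbf{x})f$ as $f(\partial\mathbf{x})g$, then apply \Cref{genduala} with $j=d$ (plus linearity to handle the scalars $a_i$). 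Your proof bypasses both citations and instead proves the statement from scratch by two linearity reductions and a multinomial expansion, which is in effect a self-contained proof of the relevant case of \Cref{genduala} without ever invoking the symmetry \Cref{gendualb}. The tradeoff is that the paper's route reuses the infrastructure it needs elsewhere (the symmetric bilinear pairing on $\mathcal{S}_d^n$), whereas yours is more elementary and verifiable in isolation; your closing remark that the pairing $(g,f)\mapsto g(\partial\mathbf{x})f$ is symmetric and nondegenerate is exactly the content of \Cref{gendualb} together with the orthogonality of the monomial basis under this pairing, so you have in fact quietly reproved the tool the paper cites.
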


It is immediate that we can compute the number of simple cycles in $G$ of length $d$ using $\mathbf{R}(e_{n,d}) = A^0(n,d)$ evaluations of $f_G$. Now, it was recently shown in \cite{lee2016power} that 
\[\mathbf{R}(e_{n,d}) \le \binom{n}{\le \lfloor d/2 \rfloor}\coloneqq \sum_{i=0}^{\lfloor d/2 \rfloor} \binom{n}{i}.\]
Explicitly, for $S \subseteq [n]$ and $i \in [n]$, define the indicator function $\delta_{S,i} \coloneqq -1$ if $i \in S$, and $\delta_{S,i} \coloneqq 1$ otherwise. Then for $d$ odd,
\[2^{d-1}d!\cdot e_{n,d} = \sum_{\substack{S \subset [n] \\ |S| \le \lfloor d/2 \rfloor}} (-1)^{|S|} \binom{n-\lfloor d/2 \rfloor-|S|-1}{\lfloor d/2 \rfloor-|S|}(\delta_{S,1} x_1 + \delta_{S,2} x_2 + \cdots + \delta_{S,n} x_n)^d.\]
(A similar formula holds for $d$ even.) It follows that the number of length-$d$ simple cycles in $G$ equals
\begin{equation}\label{lolpath}\frac{1}{2^{d-1}}\sum_{\substack{S \subset [n] \\ |S| \le \lfloor d/2 \rfloor}} (-1)^{|S|}\binom{n-\lfloor d/2 \rfloor-|S|-1}{\lfloor d/2 \rfloor-|S|} f_G(\delta_{S,1}, \ldots , \delta_{S,n}).
\end{equation}
This gives a closed form for the number of length-$d$ simple cycles in $G$ that is easily seen to be computable in the stated time and space bounds. This algorithm is much simpler, both computationally and conceptually, than those of previous approaches.\footnote{We note that the use of inclusion-exclusion  (or ``M\"{o}bius inversion'' \cite{nederlof2009fast}) in numerous exact-counting algorithms, such as Ryser's formula for computing the permanent \cite{leech_1964} and algorithms for counting Hamiltonian cycles \cite{kohn1977generating} and set packings \cite{bjorklund2006inclusion}, implicitly relies on a natural but suboptimal bound on $\mathbf{R}(e_{n,d})$; namely the one given by \Cref{ryser} below. We elaborate on this in \Cref{ryserex}.}


The above argument shows something very general: given $f \in \mathcal{S}_d^n$ as a black-box, we can compute $e_{n,d}(\partial \mathbf{x})f$ (that is, the sum of the coefficients of the \emph{multilinear monomials} in $f$)  using $\binom{n}{\le \lfloor d/2 \rfloor}$ queries. This answers a ``significant'' open problem asked by Koutis and Williams \cite{koutis2009limits} in a completely black-box way.\footnote{An alternate solution to this problem was given contemporaneously in \cite{recent}.} Moreover, it follows from a special case of our \Cref{opteval} that \emph{any} algorithm must make $\mathbf{R}(e_{n,d}) \ge \Omega(\binom{n}{\le \lfloor d/2 \rfloor})$ \cite{lee2016power} queries to compute $e_{n,d}(\partial \mathbf{x})f$ in the black-box setting:
\begin{theorem}\label{opteval}
Fix $g \in \mathcal{S}_d^n$ and let $f \in \mathcal{S}_d^n$ be given as a black-box. The minimum number of queries to $f$ needed to compute $g(\partial \mathbf{x}) f$ is $\mathbf{R}(g)$, assuming unit-cost arithmetic operations.
\end{theorem}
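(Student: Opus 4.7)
The upper bound is essentially Observation~\ref{psrev}: from any Waring decomposition $g = \sum_{i=1}^{\mathbf{R}(g)} a_i \ell_i^d$ with $\ell_i = \sum_j c_{i,j} x_j$, one computes $g(\partial\mathbf{x})f = d! \sum_i a_i f(c_{i,1}, \ldots, c_{i,n})$ using $\mathbf{R}(g)$ black-box queries and a constant number of arithmetic operations per query. For the lower bound, my plan is to extract, from any algorithm making $r$ queries, a Waring decomposition of $g$ with at most $r$ summands.

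To handle adaptivity, I would pin down a canonical set of query points by simulating the algorithm on input $f = 0$: the first query $v_1^0$ is fixed (it depends only on $n, d$), every response is zero, and each subsequent query $v_i^0$ is therefore determined deterministically. I then claim that for any $f \in \mathcal{S}_d^n$ in the subspace $V := \{h \in \mathcal{S}_d^n : h(v_i^0) = 0 \text{ for all } i\}$, the algorithm traces the same execution path as on $f = 0$. This follows by induction on the query index, since each response $f(v_i^0)$ matches the corresponding response from the zero input, and in the black-box model subsequent queries depend on $f$ only through previously returned evaluations. Thus the algorithm outputs $g(\partial\mathbf{x})0 = 0$ on all of $V$, and by correctness the linear functional $f \mapsto g(\partial\mathbf{x}) f$ must vanish on $V$. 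Elementary linear algebra (a functional vanishing on an intersection of hyperplanes lies in the span of their defining functionals) then yields scalars $\lambda_i \in \mathbb{C}$ with
\[g(\partial\mathbf{x})f = \sum_{i=1}^r \lambda_i f(v_i^0) \quad \text{for every } f \in \mathcal{S}_d^n.\]

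Reading Observation~\ref{psrev} in reverse, this identity is equivalent to $g = \frac{1}{d!}\sum_i \lambda_i \ell_i^d$ with $\ell_i := \sum_j (v_i^0)_j x_j$, by non-degeneracy of the apolar pairing. Absorbing each $\lambda_i/d!$ into $\ell_i^d$ (rescaling $\ell_i$ by a $d$-th root) and dropping terms with $\lambda_i = 0$ produces a Waring decomposition of length at most $r$, so $\mathbf{R}(g) \le r$. The one step that takes care to justify is the adaptivity reduction in the second paragraph; once that is established, the remainder of the proof is a straightforward application of linear algebra and the apolar correspondence, both already packaged in Observation~\ref{psrev}.
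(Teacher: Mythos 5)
Your proof is correct, and it is a close contrapositive of the paper's argument rather than an identical one. The paper argues adversarially: by the Apolarity Lemma, any set of $m<\mathbf{R}(g)$ query points $v_1,\ldots,v_m$ leaves a perturbation $p\in I(\{v_i\})_d$ with $g(\partial\mathbf{x})p\neq 0$, so $f$ and $f+p$ are indistinguishable yet have different answers. You instead argue constructively: from a correct $r$-query algorithm you extract an actual Waring decomposition of $g$ of length $\le r$, by observing that the functional $f\mapsto g(\partial\mathbf{x})f$ must vanish on the common kernel $V$ of the evaluation functionals $f\mapsto f(v_i^0)$, hence lies in their span, and then invoking \Cref{psrev} in reverse together with non-degeneracy of the apolar pairing. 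This last step is precisely the ``if'' direction of the Apolarity Lemma cited in the paper, which you re-derive from scratch via elementary linear algebra rather than invoking it as a black box; as a small bonus, your version does not need to assume the query points are pairwise linearly independent (proportional points give proportional evaluation functionals, and can simply be consolidated). The simulation-on-$f=0$ device to pin down a canonical query sequence is a clean way to handle adaptivity and is sound: with all responses equal to zero, the deterministic algorithm's trace is the same on all of $V$, so its output on $V$ is constant and must equal $g(\partial\mathbf{x})0=0$. Both proofs ultimately rest on the same apolarity/pairing machinery, but your packaging makes the decomposition-from-algorithm reading explicit, while the paper's packaging makes the fooling-set reading explicit.
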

In light of this lower bound, one might next ask for a $(1 \pm \varepsilon$) approximation of $e_{n,d}(\partial \mathbf{x})f$. This prompts our main algorithmic result, which is based on an answer to \Cref{aendq}:

\begin{theorem}\label{appl1}
Let $f \in \mathbb{R}_{\ge 0}[x_1, \ldots, x_n]_d$ be given as a black-box. There is a randomized algorithm which given any $0 < \varepsilon < 1$ computes a number $z$ such that with probability 2/3, 
\[(1- \varepsilon) \cdot e_{n,d}(\partial \mathbf{x})f < z < (1+\varepsilon)\cdot e_{n,d}(\partial \mathbf{x})f.\]
This algorithm runs in time $4.075^d \cdot \varepsilon^{-2}\log(\varepsilon^{-1}) \cdot \mathrm{poly}(n,s_f)$ and uses $\mathrm{poly}(n,s_f,\log (\varepsilon^{-1}))$ space. Here $s_f$ is the maximum bit complexity of $f$ on the domain $\{\pm 1\}^n$.
\end{theorem}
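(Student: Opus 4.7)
My plan is to combine color coding with the Lee et al.\ identity \eqref{lolpath} to build a cheap, randomized, unbiased estimator for $e_{n,d}(\partial \mathbf{x}) f$. Fix a number of colors $k = \alpha d$ with $\alpha>1$ to be optimized at the end, and sample a uniformly random coloring $c \colon [n] \to [k]$. For each color $j \in [k]$ define the linear form $\ell_j^c := \sum_{i \in c^{-1}(j)} x_i$. Since the supports of the $\ell_j^c$ are pairwise disjoint, the polynomial $g_c(\mathbf x) := e_d(\ell_1^c, \ldots, \ell_k^c)$, viewed as an element of $\mathcal S_d^n$, is automatically multilinear, with coefficient of $x^S$ equal to $1$ if $c|_S$ is injective (``$S$ is rainbow under $c$'') and $0$ otherwise.

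\textbf{Computing $g_c(\partial \mathbf{x}) f$.} Substituting $y_j \mapsto \ell_j^c$ into the decomposition \eqref{lolpath} applied to $e_{k,d}(y_1,\ldots,y_k)$ yields an explicit Waring decomposition $g_c = \sum_{a=1}^{r} b_a (\lambda_a^c)^d$ with $r \le \binom{k}{\le \lfloor d/2 \rfloor}$ summands, where each $\lambda_a^c$ is an explicit $\pm 1$-combination of the $\ell_j^c$ and hence a $\{-1,0,+1\}$-combination of $x_1,\ldots,x_n$. By \Cref{psrev},
\[
Y_c := g_c(\partial \mathbf{x}) f \;=\; d! \sum_{a=1}^r b_a\, f(\lambda_a^c) \;=\; \sum_{S \text{ rainbow under }c} f_S,
\]
and is computable in $r \cdot \mathrm{poly}(n, s_f)$ time using $r$ black-box evaluations of $f$ at points in $\{-1,0,+1\}^n$. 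Setting $p := k(k-1)\cdots(k-d+1)/k^d$ (the probability that a fixed $d$-set is rainbow), one has $\mathbb E_c[Y_c] = p \cdot e_{n,d}(\partial \mathbf{x}) f / d!$, so $\hat Z_c := (d!/p)\,Y_c$ is an unbiased estimator of $e_{n,d}(\partial \mathbf{x}) f$.

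\textbf{Variance and averaging.} Because the coefficients $f_S$ are nonnegative, the trivial bound $\Pr_c[S, S' \text{ both rainbow}] \le \Pr_c[S \text{ rainbow}] = p$ gives
\[
\mathbb E_c[\hat Z_c^{\,2}] \;=\; (d!/p)^2 \sum_{S,S'} f_S f_{S'}\, \Pr_c[\text{both rainbow}] \;\le\; p^{-1}\bigl(e_{n,d}(\partial \mathbf{x}) f\bigr)^2,
\]
so $\mathrm{Var}(\hat Z_c) \le p^{-1} (\mathbb E \hat Z_c)^2$. Chebyshev's inequality then shows that the mean of $N = O(p^{-1}\varepsilon^{-2})$ independent copies of $\hat Z_c$ is a $(1\pm\varepsilon)$-approximation with constant probability, and a median-of-means boost over $O(\log(\varepsilon^{-1}))$ such averages amplifies the success probability to $2/3$. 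The total running time is $r \cdot N \cdot \log(\varepsilon^{-1}) \cdot \mathrm{poly}(n, s_f)$.

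\textbf{Optimizing $\alpha$ and anticipated obstacles.} Stirling gives
\[
r \cdot p^{-1} \;=\; \bigl[\, 2^{\alpha H_2(1/(2\alpha))} \cdot e \cdot (1 - 1/\alpha)^{\alpha-1} \,\bigr]^{d} \cdot d^{O(1)},
\]
and a one-variable numerical optimization of the bracketed base shows it is $<4.075$ for $\alpha \approx 1.55$, which is what produces the final $4.075^d$ factor. The main obstacle I foresee is not conceptual but careful bookkeeping: the subleading $\mathrm{poly}(d)$ slack from Stirling, the magnitudes of the coefficients $b_a$ in \eqref{lolpath}, and the bit complexity of evaluating $f$ at $\{-1,0,+1\}^n$ must all be absorbed into the $\mathrm{poly}(n, s_f)$ factor rather than inflating the $4.075$ base. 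The nonnegativity hypothesis on $f$ is essential precisely for the variance inequality above; without it, the cross terms $f_S f_{S'}$ can be negative and the crude $\Pr[\text{both}] \le p$ bound no longer controls $\mathbb E[\hat Z_c^{\,2}]$.
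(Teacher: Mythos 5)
Your proposal follows essentially the same route as the paper: random colorings into $n_0=\lceil 1.55d\rceil$ classes, the explicit Waring decomposition of $e_{n_0,d}$ from Lee et al., an unbiased estimator whose variance is controlled by the crude bound $\Pr[\text{both rainbow}]\le p$ using nonnegativity of $f$, and Chebyshev plus averaging. One small bookkeeping slip: since $g_c$ has coefficient $1$ on exactly the rainbow multilinear monomials, $Y_c=g_c(\partial\mathbf{x})f=\sum_{S\,\text{rainbow}}f_S$ already has $\mathbb{E}_c[Y_c]=p\cdot e_{n,d}(\partial\mathbf{x})f$ with no extra $d!$; the correct estimator is $\hat Z_c=Y_c/p$, not $(d!/p)Y_c$ (the $d!$ in \Cref{psrev} is already accounted for when you equate $d!\sum_a b_a f(\lambda_a^c)$ with $\sum_{S\,\text{rainbow}}f_S$). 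This cancels out in the variance ratio, so it does not affect the claimed bound, but the expectation formula as written is off by $d!$. The median-of-means amplification is harmless but unnecessary: the paper reaches success probability $2/3$ directly from Chebyshev with $M=O(\varepsilon^{-2}/p)$ samples; the $\log(\varepsilon^{-1})$ in the stated runtime comes from bit-complexity bookkeeping, not from repetition.
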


The algorithm and the proof behind \Cref{appl1} are simple and can be found in \Cref{sec4}. Applying this theorem to to the graph polynomial $f_G$, an algorithm for approximately counting simple cycles of length $d$ is immediate.\footnote{In fact, \Cref{appl1} gives the fastest \emph{polynomial space} algorithm for approximately counting cycles that we are aware of.} More generally, we have the following:

\begin{theorem}\label{twcount}
Let $G$ and $H$ be graphs where $|G| = n$, $|H| = d$, and $H$ has treewidth $\mathrm{tw}(H)$. There is a randomized algorithm which given any $0<\varepsilon <1$ computes a number $z$ such that with probability $2/3$, 
\[(1-\varepsilon) \cdot \mathrm{Sub}(H,G) < z < (1+\varepsilon) \cdot \mathrm{Sub}(H,G).\]
This algorithm runs in time $4.075^d \cdot n^{\mathrm{tw}(H)+O(1)} \cdot \varepsilon^{-2}\log (\varepsilon^{-1})$. Here $\mathrm{Sub}(H,G)$ denotes the number of subgraphs of $G$ isomorphic to $H$.
\end{theorem}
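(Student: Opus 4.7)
The plan is to reduce the task of approximating $\mathrm{Sub}(H,G)$ to computing $e_{n,d}(\partial\mathbf{x})f$ for a suitable nonnegative black-box polynomial $f\in\mathcal{S}_d^n$, and then invoke \Cref{appl1}. The natural choice is the \emph{homomorphism generating polynomial}
\[
f_{G,H}(x_1,\ldots,x_n)\ \coloneqq\ \sum_{\substack{\phi:V(H)\to V(G)\\ \phi\text{ a homomorphism}}}\ \prod_{v\in V(H)} x_{\phi(v)},
\]
which is homogeneous of degree $d$ with nonnegative integer coefficients. A multilinear monomial $x_{i_1}\cdots x_{i_d}$ has coefficient equal to the number of injective homomorphisms $\phi$ with image exactly $\{v_{i_1},\ldots,v_{i_d}\}$, so $e_{n,d}(\partial\mathbf{x})f_{G,H}$ equals the total number of injective homomorphisms from $H$ to $G$, i.e.\ $|\mathrm{Aut}(H)|\cdot\mathrm{Sub}(H,G)$.

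The next step is to show that $f_{G,H}$ can be evaluated at any real point in time $n^{\mathrm{tw}(H)+O(1)}$. This is a standard dynamic program on a nice tree decomposition of $H$ of width $t=\mathrm{tw}(H)$: for each bag $B$, maintain a table $T_B$ indexed by maps $\psi:B\to V(G)$ that stores the partial sum $\sum \prod_{v}c_{\phi(v)}$ over homomorphisms $\phi$ of the subgraph of $H$ processed so far which extend $\psi$. The introduce, forget, and join operations each cost $n^{t+O(1)}$ and preserve the homomorphism constraint (edges of $H$ lying in the current bag are checked against the adjacency of $G$ at introduction time). A decomposition of $H$ of approximately optimal width can be computed once up front in $2^{O(d)}$ time. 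On $\{\pm 1\}^n$ each term of $f_{G,H}$ contributes $\pm 1$ and there are at most $n^d$ homomorphisms, so $|f_{G,H}|\le n^d$ on this domain and $s_{f_{G,H}}=O(d\log n)$.

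Feeding $f_{G,H}$ into \Cref{appl1} with this per-query cost yields a $(1\pm\varepsilon)$-approximation to $|\mathrm{Aut}(H)|\cdot\mathrm{Sub}(H,G)$ in time $4.075^d\cdot\varepsilon^{-2}\log(\varepsilon^{-1})\cdot n^{\mathrm{tw}(H)+O(1)}$. Dividing by $|\mathrm{Aut}(H)|$---computable in quasi-polynomial time in $d$ by Babai's graph isomorphism algorithm, hence negligible compared to $4.075^d$---gives the claimed bound. The main step to verify is the correctness and complexity of the treewidth dynamic program, specifically that enforcing $H$'s edges bag-by-bag recovers $f_{G,H}$ at the query point; this is routine in the parameterized-algorithms toolkit and is the only place the treewidth hypothesis on $H$ enters.
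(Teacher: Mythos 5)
Your proposal is correct and follows essentially the same route as the paper: define the homomorphism generating polynomial (the paper calls it $P_{H,G}$), observe that $e_{n,d}(\partial\mathbf{x})P_{H,G}$ counts injective homomorphisms (hence $|\mathrm{Aut}(H)|\cdot\mathrm{Sub}(H,G)$), evaluate queries in $n^{\mathrm{tw}(H)+O(1)}$ time, feed into \Cref{appl1}, and divide out $|\mathrm{Aut}(H)|$ via a quasi-polynomial graph-isomorphism computation. The only cosmetic difference is that where you sketch the standard treewidth dynamic program by hand, the paper instead cites a lemma of Brand et al.\ (\Cref{fasteval}) that packages the same dynamic program as an arithmetic formula of size $O(d\cdot n^{\mathrm{tw}(H)+1})$; both yield the same per-query cost and the same $s_f = O(d\log n)$ bit-complexity bound, so the arguments are interchangeable.
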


This is the fastest known algorithm for approximately counting subgraphs of bounded treewidth, improving on a $5.44^d  n^{\mathrm{tw}(H)+O(1)} \varepsilon^{-2}$-time algorithm of Alon et al. \cite{alon2008biomolecular} which in turn improved on a $5.44^{d\log \log d} n^{\mathrm{tw}(H)+O(1)} \varepsilon^{-2}$-time algorithm of Alon and Gutner \cite{alon2007balanced}. The first parameterized algorithm for a variant of this problem was given by Arvind and Raman \cite{arvind2002approximation} and has runtime $d^{O(d)}n^{\mathrm{tw}(H)+O(1)}$. In the case that $H$ has pathwidth $\mathrm{pw}(H)$, a recent algorithm of Brand et al.~\cite{Brand2018Extensorcoding} runs in time $4^d n^{\mathrm{pw}(H)+O(1)} \varepsilon^{-2}$. We stress that this application is only a motivating example -- \Cref{appl1} is extremely general and also applies to counting set partitions and packings \cite{bjorklund2006inclusion}, dominating sets \cite{koutis2009limits}, repetition-free longest common subsequences \cite{blin2012parameterized}, and functional motifs in biological networks \cite{guillemot2013finding}. 

 In the rest of this section we outline our approach. This will suggest a path to derandomize and improve the base of the exponent in \Cref{appl1} (and hence \Cref{twcount}) from $4.075$ to $2$. Specifically, we raise the following question:

\begin{question}
Is $A^\varepsilon(n,d) \le 2^d \cdot \mathrm{poly}(n,\varepsilon^{-1})$?
\end{question}

Prior to this work it was believed \cite{koutis2015algebraic} that a derandomization of polynomial identity testing would be needed to obtain, for instance, a deterministic $2^d \mathrm{poly}(n)$-time algorithm just for \emph{detecting} simple paths of length $d$ in a graph. On the contrary, an explicit affirmative answer to the above question would give a $2^d \mathrm{poly}(n, \varepsilon^{-1})$-time deterministic algorithm for \emph{approximately counting} simple paths.

\begin{remark}\label{gurvcase}
A focus on approximating $g(\partial \mathbf{x})f$ in the case that $f$ and $g$ are real stable has recently led to several advances in algorithms and combinatorics; see e.g. \cite{gurvits2006hyperbolic}. In particular, a result of Anari et al.~\cite{anari2016nash} shows that in this case $e_{n,d}(\partial \mathbf{x})f$ can be approximated (up to a factor of $e^{d + \varepsilon}$) deterministically in polynomial time given black-box access to $f$. This paper shows that the general (i.e., \emph{unstable}) case raises interesting questions as well.
\end{remark}

\subsection{Our Approach and Connections to Previous Work}
To continue with the previous example, note that the graph polynomial $f_G$ is supported on a multilinear monomial if and only if $G$ contains a cycle of length $d$. This motivates the following problem of well-recognized algorithmic importance \cite{gurvits2004combinatorial,koutis2008faster,williams2009finding}:

\begin{problem}\label{detectmultilin}
Given black-box access to $f \in \mathcal{S}_d^n$ over $\mathbb{C}^n$, decide if $f$ is supported on a multilinear monomial.
\end{problem}


It is not hard to see that any algorithm for computing $g(\partial \mathbf{x})f$, where $g$ is supported on exactly the set of degree-$d$ multilinear monomials, can be used to solve \Cref{detectmultilin} with one-sided error (\Cref{supptesta}). This suggests studying upper bounds on $A(n,d)$ (\Cref{supprquest}) as an approach to solve \Cref{detectmultilin}. Perhaps surprisingly though, it turns out that several known methods in parameterized algorithms can be understood as giving constructive upper bounds on $A(n,d)$, and better upper bounds to $A(n,d)$ would improve upon these methods. For example, the seminal color-coding method of Alon, Yuster, and Zwick \cite{alon1995color} can be recovered from an upper bound on $A(n,d)$ of $O(5.44^d \log n)$, and an improvement to color-coding given by H\"{u}ffner et al.~\cite{huffner2008algorithm} follows from an upper bound on $A(n,d)$ of $ O(4.32^d \log n)$ (\Cref{colorred}). The group-algebra/determinant sum approach of \cite{williams2009finding,koutis2008faster,bjorklund2010determinant} reduces to answering a generalization of \Cref{supprquest} (see \Cref{genand}) in the case that the underlying field is not $\mathbb{C}$ but of characteristic 2. (In \Cref{char2ez} we give the essentially optimal upper bound of $2^d - 1$ for this variant, which in turn can be used to recover \cite{williams2009finding,koutis2008faster,bjorklund2010determinant}). Prior to this work, no connection of this precision between these methods was known.

\Cref{supprquest} provides insight into lower bounds on previous methods as well. For example, the bounds on $\mathbf{R}(e_{n,d})$ given in \cite{lee2016power} directly yield asymptotically sharper lower bounds than those given by Alon and Gutner \cite[Theorem 1]{alon2009balanced} on the size of \emph{perfectly balanced hash families} used by exact-counting color-coding algorithms (\Cref{hashlower}). Curiously, this improvement is ultimately a consequence of \emph{B\'ezout's theorem} in algebraic geometry. \Cref{supprquest} and a classical lower bound on Waring rank (\Cref{methodderivs}) explain why \emph{disjointness matrices} arose in the context of lower bounds on color-coding \cite{alon2009balanced} and the group-algebra approach \cite{koutis2009limits}: they are the partial derivatives matrices of the elementary symmetric polynomials.

Our main answers to \Cref{supprquest} are the following. By our \Cref{lowerbound,catupper,pracub}, it follows that 
\[2^{d-1} \le A(n,d) \le \min(6.75^d, O(4.075^d \log n)).\]
Perhaps surprisingly, this gives an upper bound on $A(n,d)$ \emph{independent} of $n$. On the negative side, our lower bound on $A(n,d)$ rules out \Cref{supprquest} as an approach to obtain algorithms faster than $2^d \mathrm{poly}(n)$ for \Cref{detectmultilin}; moreover, we show in \Cref{multlintest} that there is also a lower bound of $2^{d-1}$ on the number of queries needed to solve \Cref{detectmultilin} with one-sided error. 

It is easily seen by \Cref{psrev} that constructive upper bounds on $A^+(n,d)$ yield deterministic algorithms for determining if $f$ is supported on a multilinear monomial in the case that $f$ has nonnegative real coefficients (as, e.g., the graph polynomial $f_G$ has), and constructive upper bounds on $A^\varepsilon(n,d)$ yield deterministic algorithms for approximating $e_{n,d}(\partial \mathbf{x}) f$. This broadly generalizes the use of color-coding in designing approximate counting and deterministic decision algorithms. 

Our bounds on $A(n,d)$ also hold for $A^+(n,d)$. Remarkably, we show in \Cref{d410} that if $A^+(33700,4) \le 10$ then $A^+(n,d) \le O(3.9999^d \log n)$. It follows from our \Cref{lowerbound} and \Cref{pracub} that 
\[2^{d-1} \le A^\varepsilon(n,d) \le O(4.075^d\varepsilon^{-2} \log n ),\]
and from our \Cref{aepsiloninf} that $\lim_{n \to \infty}A^\varepsilon(n,d) = \infty$ for all $d > 1$ and $\varepsilon<1/2$ -- unlike $A^+(n,d)$, $A^\varepsilon(n,d)$ depends on $n$. As an aside, it is immediate that \[\underline{\mathbf{R}}(e_{n,d}) \le \lim_{\varepsilon \to 0} A^\varepsilon(n,d) \le \mathbf{R}(e_{n,d}),\]
where $\underline{\mathbf{R}}(g)$ denotes the \emph{Waring border rank} of $g$, i.e., the minimum $r$ such that there exists a sequence of polynomials of Waring rank at most $r$ converging to $g$ in the Euclidean (or equivalently, Zariski) topology. 

\subsection{Paper Overview}
For ease of exposition, we work over $\mathbb{C}$ unless specified otherwise. Most of our theorems can be extended to infinite (or sufficiently large) fields of arbitrary characteristic by replacing the polynomial ring with the ring of divided power polynomials (see \cite[Appendix A]{iarrobino1999power}). Except for in \Cref{sec4}, we assume that arithmetic operations can be performed with infinite precision and at unit cost.

In \Cref{sec2} we introduce concepts related to Waring rank (in particular the \emph{Apolarity Lemma}) in order to better understand the following problems:

\begin{problem}
Fix $g\in \mathcal{S}_d^n$. Given black-box access to $f \in \mathcal{S}_d^n$,
\begin{probenum}
\item Compute $g(\partial \mathbf{x})f$.\label{p1a}
\item Compute a $(1 \pm \varepsilon)$ approximation of $g(\partial \mathbf{x})f$ (assuming $f,g \in \mathbb{R}_{\ge 0}[x_1, \ldots, x_n]$).\label{p1b}
\item Determine if $\mathrm{supp}(f) \cap \mathrm{supp}(g) = \emptyset$.\label{p1c}
\end{probenum}
\end{problem}
The fundamental connection between Waring rank and \Cref{p1a} is given by our \Cref{opteval}. Using similar ideas, we show that at least $2^{d-1}$ queries are required to test if $\mathrm{supp}(f) \cap \mathrm{supp}(e_{n,d}) = \emptyset$ with one-sided error in \Cref{multlintest}. We then introduce the new concepts of support rank, $\varepsilon$-support rank, and nonnegative support rank, which give upper bounds on the complexity of randomized and deterministic algorithms for \Cref{p1a,p1b,p1c}. A related notion of support rank for tensors has previously appeared in the context of $\omega$ and quantum communication complexity \cite{cohn2013fast,blaser2017border,walter2016multi}, but we are unaware of previous work on support rank in the symmetric (polynomial) case. In the case when $d=2$ these notions are related to the well-studied concepts of sign rank, zero-nonzero rank, and approximate rank of matrices \cite{barak2011rank, alon2013approximate}.

In \Cref{sec3} we study $A(n,d)$ and its variants. We start in \Cref{sec3.1} by proving negative results, showing that $A(n,d) \ge 2^{d-1}$ (\Cref{lowerbound}),  and that for sufficiently large $n$, $A(n,2) = 3$ (\Cref{d2case}) and $A(n,3) \ge 5$ (\Cref{d3case}). Using bounds on the \emph{$\varepsilon$-rank} of the identity matrix \cite{alon2003problems}, we show in \Cref{aepsiloninf} that for $1/\sqrt{n}\le \varepsilon < 1/2$, 
\[\Omega(\log n \cdot \varepsilon^{-2}/\log(\varepsilon^{-1})) \le A^\varepsilon(n,2) \le O(\log n \cdot \varepsilon^{-2}).\]
While it may at first seem like we are splitting hairs by focusing on particular values of $d$, we will later show in \Cref{d410} that, for example, proving that $A^+(n,4) \le 10$ for sufficiently large $n$ would yield improved upper bounds on $A^+(n,d)$ for \emph{all} $n$ and $d$. 

Curiously, our lower bound on $A(n,3)$ is a consequence of the classical \emph{Cayley-Salmon theorem} in algebraic geometry, and our general lower bound on $A(n,d)$ ultimately follows from B\'ezout's theorem via \cite{ranestad2011rank}. On this note, we show in \Cref{fermat} that \Cref{supprquest} is equivalent to a question about the geometry of linear spaces contained in the \emph{Fermat hypersurface} $\{x \in \mathbb{C}^n : \sum_{i=1}^n x_i^d = 0\}$.

The rest of \Cref{sec3} is focused on general upper bounds on $A(n,d)$ and its variants. \Cref{detgen} will give a simple explanation as to why \emph{determinant sums} (as in the title of \cite{bjorklund2010determinant}) can be computed in a parameterized way: for all $d \times n$ matrices $A$ and $B$, the Waring rank of
\begin{equation}\label{detsumpoly}
\sum_{\substack{\alpha \in \{0,1\}^n \\ |\alpha| = d}} \det(A_\alpha B_\alpha) x_1^{\alpha_1} \cdots x_n^{\alpha_n}
\end{equation}
is at most $\mathbf{R}(\det_d)$. A special case of this example is used in \Cref{catupper} to show that $A^+(n,d) < 6.75^d$. In order to improve this, it would suffice to find a better upper bound on the Waring rank of a single polynomial: the determinant of a symbolic $d \times d$ Hankel matrix. We show in \Cref{catlower} that the method of partial derivatives cannot give lower bounds on the Waring rank of this polynomial better than $2.6^d$. 

Next we define rank for polynomials over a field $\mathsf{k}$ of arbitrary characteristic -- as it is, our definition of rank is not valid in positive characteristic (example: try to write $xy$ as a sum of squares of linear forms over a field of characteristic two). Using this we define $A_\mathsf{k}(n,d)$, which equals $A(n,d)$ when $\mathrm{char}(\mathsf{k}) = 0$. We note in \Cref{generallower} that $A_\mathsf{k}(n,d) \ge 2^{d-1}$. \Cref{char2ez} shows that this lower bound is essentially optimal when $\mathrm{char}(\mathsf{k}) = 2$, as then $A_\mathsf{k}(n,d) \le 2^d-1$; specifically, this rank upper bound holds for \Cref{detsumpoly} in the case that $A = B$. This is a simple consequence of the fact that the permanent and the determinant agree in characteristic 2. We explain in this section how the group-algebra approach of \cite{koutis2008faster,williams2009finding} and the basis of \cite{bjorklund2010determinant} reduce to a slightly weaker fact than this upper bound. A precise connection between support rank and a certain ``product-property'' of abelian group algebras critical to \cite{koutis2008faster,williams2009finding} is given by \Cref{groupalg}. 

In \Cref{sec3.3} we present a method for translating upper bounds on $A^+(n_0,d_0)$ for some \emph{fixed} $n_0$ and $d_0$ into upper bounds on $A^+(n,d)$ for \emph{all} $n$ and $d$ (\Cref{masterthm}). This method also allows us to recursively bound $A^\varepsilon(n,d)$ for fixed $d$ (\Cref{aepsilonbound}). This approach can be seen as a vast generalization of color-coding methods, and is based on a \emph{direct power sum} operation on polynomials and a combinatorial tool generalizing \emph{splitters} that we call a \emph{perfect splitter}. We use this to show that $A^\varepsilon(n,d) \le O(4.075^d \varepsilon^{-2} \log n)$ in \Cref{pracub}.

In \Cref{sec4} we give applications of the previous section. We start by giving the proof \Cref{appl1}, which is then used to prove \Cref{twcount}. We end with an improved lower bound on the size of perfectly-balanced hash families in \Cref{hashlower}.

We conclude by giving several standalone problems.
\section{Preliminaries and Methods}\label{sec2}

We use multi-index notation: for $f \in \mathcal{S}_d^n$, we write $f  =  \sum_{\alpha \in \mathbb{N}^n} c_\alpha x^\alpha$, where $x^\alpha \coloneqq x_1^{\alpha_1} \cdots x_n^{\alpha_n}$.  For $\alpha \in \mathbb{N}^n$, we let $|\alpha| \coloneqq \sum_{i=1}^n \alpha_i$ and  $\alpha! \coloneqq \alpha_1! \alpha_2! \cdots \alpha_n!$. We then define $\mathbb{N}_d^n \coloneqq \{\alpha \in \mathbb{N}^n : |\alpha| = d\}$, and similarly $\{0,1\}^n_d \coloneqq \{\alpha \in \{0,1\}^n : |\alpha| = d\}$. Given $\beta \in \mathbb{N}^n$ we say that $\alpha \ge \beta$ if $\alpha_i \ge \beta_i$ for all $i \in [n]$. We denote by $\partial_i$ the differential operator $\frac{\partial}{\partial x_i}$, and we let $\partial^\alpha \coloneqq \partial_1^{\alpha_1} \cdots \partial_n^{\alpha_n}$. We let $\mathbf{V}(f) \coloneqq \{p \in \mathbb{C}^n : f(p) = 0\}$ denote the hypersurface defined by $f$. For $\ell = \sum_{i=1}^n a_i x_i \in \mathcal{S}_1^n$, we let $\ell^* \coloneqq (a_1, \ldots, a_n) \in \mathbb{C}^n$. For $X \subseteq \mathbb{C}^n$, the ideal of polynomials in $\mathcal{S}^n$ vanishing on $X$ is denoted by $I(X)$. The ideal generated by $f_1, \ldots, f_k \in \mathcal{S}^n$ is denoted by $\langle f_1, \ldots , f_k \rangle$. Given an ideal $I \subseteq \mathcal{S}^n$ we let $I_d$ denote the subspace of $I$ of degree-$d$ polynomials.

The set of $n \times m$ matrices with entries in a field $\mathsf{k}$ is denoted by $\mathsf{k}^{n \times m}$. For a matrix $A \in \mathsf{k}^{n \times m}$ and a multi-index $\alpha \in \mathbb{N}^n$, we let $A_\alpha$ be the $n \times |\alpha|$ matrix whose first $\alpha_1$ columns are the first column of $A$, next $\alpha_2$ columns are the second column of $A$, etc. We let $\det_d$, $\mathrm{per}_d \in \mathsf{k}[x_{ij} : i,j \in [d]]_d$ denote the degree-$d$ determinant and permanent polynomials, respectively. Recall that the permanent is defined by
\[\mathrm{per}_d = \sum_{\sigma \in \mathfrak{S}_d} \prod_{i=1}^d x_{i,\sigma(i)},\]
where $\mathfrak{S}_d$ denotes the symmetric group on $d$ letters.

The subsequent theorems are classical and easily verified. The first is the crux of this paper. The second shows that Waring rank is always defined (i.e., finite).
\begin{theorem}
\label{gendual}
Let $f \in \mathcal{S}_d^n$ and let $j \ge d$.
\begin{thmenum}
\item \cite[Lemma 1.15(i)]{iarrobino1999power} For all $\ell_1, \ldots, \ell_r \in \mathcal{S}^n_1$,  \label{genduala}
\[f(\partial \mathbf{x}) \sum_{i=1}^r \ell_i^j = d!\sum_{i=1}^r f(\ell_i^*)\ell_i^{j-d}.\]
\item \cite[Lemma 3.5]{comon2008symmetric} For all $g \in \mathcal{S}_d^n$, $f(\partial \mathbf{x}) g = g(\partial \mathbf{x}) f$.\label{gendualb}
\end{thmenum}
\end{theorem}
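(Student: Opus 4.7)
My plan for part (a) is to reduce to a spanning set of $f$'s by bilinearity. Since $\mathcal{S}_d^n$ is spanned (over $\mathbb{C}$) by $d$-th powers of linear forms, and both sides of the claimed identity are linear in $f$, it suffices to verify it when $f = m^d$ for some $m = b_1 x_1 + \cdots + b_n x_n \in \mathcal{S}_1^n$. In that case $f(\partial \mathbf{x}) = (b_1 \partial_1 + \cdots + b_n \partial_n)^d$ is the $d$-th iterate of the directional derivative $D_m$. Since $D_m \ell^k = k \langle m^*, \ell^* \rangle \ell^{k-1}$ by the chain rule, induction on $d$ gives
\[
D_m^d \ell^j \;=\; \tfrac{j!}{(j-d)!}\, \langle m^*, \ell^* \rangle^d \, \ell^{j-d}.
\]
Recognizing $\langle m^*, \ell^* \rangle^d = m^d(\ell^*) = f(\ell^*)$ yields the stated identity with the appropriate scalar (which collapses to $d!$ in the critical case $j = d$); summing over $i = 1, \ldots, r$ and extending by linearity in $f$ completes (a).

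For part (b) my plan is to expand both sides in the monomial basis and invoke the ``top-degree apolarity'' identity $\partial^\alpha x^\beta = \alpha! \, \delta_{\alpha,\beta}$ for $|\alpha| = |\beta| = d$, which follows immediately from $\partial_i x_i^{k} = k\, x_i^{k-1}$ combined with vanishing whenever $\beta \not\ge \alpha$. Writing $f = \sum_{|\alpha|=d} c_\alpha x^\alpha$ and $g = \sum_{|\beta|=d} e_\beta x^\beta$, bilinearity yields
\[
f(\partial \mathbf{x})\, g \;=\; \sum_{\alpha, \beta} c_\alpha e_\beta \, \partial^\alpha x^\beta \;=\; \sum_{|\alpha|=d} \alpha! \, c_\alpha e_\alpha,
\]
a scalar expression that is manifestly symmetric in the roles of $f$ and $g$; the analogous expansion of $g(\partial \mathbf{x})\, f$ produces the same scalar, establishing (b).

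Neither step presents a genuine obstacle: both identities are standard bookkeeping with iterated partial derivatives of monomials. The only mild subtleties are tracking the combinatorial factor in (a) (the falling factorial produced by repeated differentiation via the chain rule) and invoking a spanning-set reduction, which is why the cleanest proof benefits from working over a field of characteristic zero as the paper specifies. In positive characteristic one would instead argue via the monomial basis directly---expanding $\ell^j$ by the multinomial theorem and applying $\partial^\alpha$ term by term---or pass to the divided-power formalism mentioned in the paper overview.
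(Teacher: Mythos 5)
Your argument for both parts is correct. The paper itself gives no proof of \Cref{gendual}---it declares these facts ``classical and easily verified'' and cites \cite{iarrobino1999power} and \cite{comon2008symmetric}---so there is no in-paper proof to compare against; your spanning-set reduction for (a) and monomial-basis expansion for (b) are exactly the standard verifications one would supply.

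One point worth making explicit, which you notice but phrase as a hedge: the constant $d!$ in the paper's statement of (a) is only correct in the case $j=d$. Your computation $D_m^d\,\ell^j = \tfrac{j!}{(j-d)!}\langle m^*,\ell^*\rangle^d\,\ell^{j-d}$ is right, so the general identity should read
\[
f(\partial\mathbf{x})\sum_{i=1}^r \ell_i^j \;=\; \frac{j!}{(j-d)!}\sum_{i=1}^r f(\ell_i^*)\,\ell_i^{j-d},
\]
which specializes to the paper's formula with $d!$ only when $j=d$. (The discrepancy likely arises because Iarrobino--Kanev's Lemma~1.15(i) is stated for the contraction/apolarity action, which carries no factorials; translating to differentiation introduces $j!/(j-d)!$, not a fixed $d!$.) This slip in the paper is harmless: every application with $j>d$ (e.g.\ the proof of \Cref{trivialboundsb}, where one deduces $\mathbf{R}(\partial_{n+1}f)\le\mathbf{R}(f)$) uses only that $f(\partial\mathbf{x})\ell^j$ is a nonzero scalar multiple of $\ell^{j-d}$, so the value of the scalar is immaterial; and every application that actually uses the value $d!$ (such as \Cref{psrev}) has $j=d$. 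You should state this plainly rather than fold it into a parenthetical, since as written the theorem is literally false for $j>d$.
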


\begin{theorem}\cite[Corollary 1.16]{iarrobino1999power}
$\mathbf{R}(f) \le \dim \mathcal{S}_d^n = \binom{n+d-1}{d}$.\label{gendualc}
\end{theorem}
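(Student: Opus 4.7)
The plan is to show that the linear span $W := \mathrm{span}_{\mathbb{C}}\{\ell^d : \ell \in \mathcal{S}_1^n\}$ of all $d$-th powers of linear forms equals the whole space $\mathcal{S}_d^n$. Once this is established, any $f \in \mathcal{S}_d^n$ can be written as a $\mathbb{C}$-linear combination of at most $\dim \mathcal{S}_d^n = \binom{n+d-1}{d}$ such powers, i.e.\ $f = \sum_{i=1}^r c_i \ell_i^d$ with $r \le \binom{n+d-1}{d}$. Since we are working over $\mathbb{C}$, each scalar can be absorbed into the linear form via $c_i \ell_i^d = (c_i^{1/d}\ell_i)^d$, which yields a presentation of $f$ as in (\ref{wdecomp}) with $r$ summands and hence $\mathbf{R}(f)\le\binom{n+d-1}{d}$.

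To prove $W = \mathcal{S}_d^n$, I would argue by duality, combining \Cref{genduala} with the non-degeneracy of the apolar pairing. Assume for contradiction that $W$ is a proper subspace. Then there exists a nonzero linear functional $\phi \colon \mathcal{S}_d^n \to \mathbb{C}$ vanishing on $W$. The apolar pairing $\mathcal{S}_d^n \times \mathcal{S}_d^n \to \mathbb{C}$, $(g,f)\mapsto g(\partial\mathbf{x})f$, is non-degenerate: writing $g = \sum_\alpha c_\alpha x^\alpha$, one checks on the monomial basis that $g(\partial\mathbf{x})x^\beta = \beta!\, c_\beta$ when $|\beta|=d$, so the pairing recovers every coefficient of $g$. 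Consequently $\phi$ is represented by a unique nonzero $g \in \mathcal{S}_d^n$ via $\phi(f) = g(\partial\mathbf{x})f$.

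Now apply \Cref{genduala} in the case $j=d$: for every $\ell \in \mathcal{S}_1^n$,
\[
0 \;=\; \phi(\ell^d) \;=\; g(\partial\mathbf{x})\ell^d \;=\; d!\, g(\ell^*).
\]
Since $\ell \mapsto \ell^*$ is a bijection between $\mathcal{S}_1^n$ and $\mathbb{C}^n$, this says that $g$ vanishes identically on $\mathbb{C}^n$, which forces $g = 0$. This contradicts the non-triviality of $\phi$ and completes the argument.

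There is no real obstacle here; the only subtle point worth highlighting is the non-degeneracy of the apolar pairing, which is immediate from evaluation on monomials. The proof is essentially an observation that the identity $g(\partial\mathbf{x})\ell^d = d!\,g(\ell^*)$ converts the statement ``$d$-th powers span $\mathcal{S}_d^n$'' into the statement ``a polynomial vanishing at every point of $\mathbb{C}^n$ is the zero polynomial.''
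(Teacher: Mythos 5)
Your proof is correct. The paper cites this as Corollary 1.16 of Iarrobino--Kanev without reproducing a proof, and your argument is the standard one: non-degeneracy of the apolar pairing combined with the identity $g(\partial\mathbf{x})\ell^d = d!\,g(\ell^*)$ from \Cref{genduala} shows that $d$-th powers of linear forms span $\mathcal{S}_d^n$, after which extracting a basis and absorbing scalars over $\mathbb{C}$ gives $\mathbf{R}(f)\le\binom{n+d-1}{d}$. This matches the apolarity-based proof in the cited reference.
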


Importantly, \Cref{genduala,gendualb} imply that $g(\partial \mathbf{x})f$ can be computed with $\mathbf{R}(g)$ queries in \Cref{p1a}, as noted in \Cref{psrev}.
We will show in the next subsection that this is optimal, even if we are allowed to query $f$ adaptively. 

\begin{example}\label{ryserex}
The following Waring decomposition of $e_{n,d}$ is easily seen by inclusion-exclusion:
\begin{equation}\label{ryser}
d! \cdot e_{n,d} = \sum_{\substack{\alpha \in \{0,1\}^n \\ |\alpha| \le d}}(-1)^{|\alpha|+d} \binom{n-|\alpha|}{d-|\alpha|} \left  (\sum_{i=1}^n \alpha_i x_i\right )^d.
\end{equation}
In fact, this decomposition is \emph{synonymous} with inclusion-exclusion in many exact algorithms, as we now illustrate. For $A \in \mathbb{C}^{n \times n}$, let 
\[\mathrm{Prod}_A \coloneqq (A_{1,1}x_1 + \cdots + A_{1,n}x_n) \cdots (A_{n,1}x_1 + \cdots + A_{n,n} x_n) \in \mathcal{S}_n^n.\]
 It is easily seen that the coefficient of $x_1 \cdots x_n$ in $\mathrm{Prod}_A$ equals the permanent of $A$. In other words, $\mathrm{per}(A) = e_{n,n}(\partial \mathbf{x}) \mathrm{Prod}_A$. It follows directly from \Cref{gendual} and \Cref{ryser} that
\[\mathrm{per}(A) = \sum_{\alpha \in \{0,1\}^n} (-1)^{|\alpha|+n} \mathrm{Prod}_A(\alpha),\]
which is Ryser's formula for computing the permanent \cite{leech_1964}. As another example, applying \Cref{gendual} and \Cref{ryser} to the closed-walk generating polynomial \Cref{gencycles}, one finds that the number of Hamiltonian cycles in $G$ equals
\[\sum_{\alpha \in \{0,1\}^n} (-1)^{|\alpha|+n} \mathrm{tr}(A_G^n)(\alpha),\]
which was first given in \cite{kohn1977generating} and rediscovered several times thereafter \cite{karp1982dynamic,bax1993inclusion}. As a third example, let $S_1, \ldots, S_m \subseteq [k \cdot r]$, where $|S_i| = r$ for all $i$. Note that that the coefficient of $x_1 \cdots x_{kr}$ in $\mathrm{Part}_{S_1, \ldots, S_m} \coloneqq \left (\sum_{i=1}^m \prod_{j \in S_i} x_j \right )^k$ equals the number of ordered partitions of $[kr]$ into $k$ of the sets $S_i$. Therefore the number of such partitions equals
\[\sum_{\alpha \in \{0,1\}^{kr}}(-1)^{|\alpha|+kr} \mathrm{Part}_{S_1, \ldots, S_m}(\alpha),\]
which was given in \cite{bjorklund2006inclusion,bjorklund2009set}. The fastest known algorithms for computing the permanent and counting Hamiltonian cycles and set partitions follow from the straightforward evaluation of the above formulas. A similar perspective on these algorithms appeared earlier in \cite{barvinok1996two}.

Understanding these algorithms from the perspective of Waring decompositions is extremely insightful, and was our initial motivation. For example, it is clear from the above argument that \emph{any} Waring decomposition of $x_1 \cdots x_n$ yields an algorithm for the above problems -- there is nothing special about \Cref{ryser}. This immediately raises the question: what is $\mathbf{R}(x_1 \cdots x_n)$? This was only answered recently in \cite{ranestad2011rank}, where a lower bound on the \emph{degree} of a form's \emph{apolar subscheme} was used to show that $\mathbf{R}(x_1 \cdots x_n) = 2^{n-1}$.\footnote{A lower bound of $\binom{n}{\lfloor n/2 \rfloor}$ can be shown easily using the \emph{method of partial derivatives}, presented in the next subsection.} This lower bound shows that the above algorithms are, in a restricted sense, optimal. Similar observations have been made in \cite{gurvits,glynn2013permanent}.

Although the Waring decomposition of \Cref{ryser} is essentially optimal in the case when $n = d$, it is far from optimal in general. Indeed, \Cref{ryser} only shows that $\mathbf{R}(e_{n,d}) \le \binom{n}{\le d}$, whereas it was shown in \cite{lee2016power} that for $d$ odd, $\mathbf{R}(e_{n,d}) = \binom{n}{\le \lfloor d/2 \rfloor}$,
and for $d$ even,
\[ \binom{n}{\le d/2}- \binom{n-1}{d/2} \le \mathbf{R}(e_{n,d}) \le \binom{n}{\le d/2}.\]
\end{example}

\subsection{Apolarity and the Method of Partial Derivatives}
Fix $g \in \mathcal{S}_d^n$. For integers $u,v \ge 0$ such that $u+v=d$, let $\mathit{Cat}_g(u,v) : \mathcal{S}_u^n \to \mathcal{S}_v^n$ be given by 
\[\mathit{Cat}_g(u,v)(f) \coloneqq  f(\partial \mathbf{x})g.\]
These maps, called \textit{catalecticants}, were first introduced by J.J. Sylvester in 1852 \cite{sylvester1970principles}. Their importance is due in large part to the following method for obtaining Waring rank lower bounds, known as the \textit{method of partial derivatives} in complexity theory \cite[Section 6.2.2]{landsberg_2017}.

\begin{theorem}\cite[pg. 11]{iarrobino1999power} \label{methodderivs}
For all $g \in \mathcal{S}_d^n$ and integers $u,v \ge 0$ such that $u+v=d$, 
\[\mathbf{R}(g) \ge \mathrm{rank}(\mathit{Cat}_g(u,v)).\]
\end{theorem}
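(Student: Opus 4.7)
The plan is to exploit the linearity of the map $g \mapsto \mathit{Cat}_g(u,v)$ and reduce the problem to bounding the rank of $\mathit{Cat}_{\ell^d}(u,v)$ for a single $d$-th power of a linear form.

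First I would observe that the assignment $g \mapsto \mathit{Cat}_g(u,v)$ is linear in $g$, since $\mathit{Cat}_g(u,v)(f) = f(\partial \mathbf{x}) g$ and differentiation is linear in the polynomial being differentiated. Therefore, given an optimal Waring decomposition $g = \sum_{i=1}^r \ell_i^d$ with $r = \mathbf{R}(g)$, we obtain
\[\mathit{Cat}_g(u,v) \;=\; \sum_{i=1}^r \mathit{Cat}_{\ell_i^d}(u,v),\]
and by the subadditivity of matrix rank it suffices to show that $\mathrm{rank}(\mathit{Cat}_{\ell^d}(u,v)) \le 1$ for every $\ell \in \mathcal{S}_1^n$.

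The second step is exactly where \Cref{genduala} does the work. For any $f \in \mathcal{S}_u^n$, applying that identity (with the roles of degrees being $u$ and $d$, and $r=1$) yields
\[\mathit{Cat}_{\ell^d}(u,v)(f) \;=\; f(\partial \mathbf{x})\ell^d \;=\; u!\, f(\ell^*)\, \ell^{d-u} \;=\; u!\, f(\ell^*)\, \ell^{v}.\]
Hence the image of $\mathit{Cat}_{\ell^d}(u,v)$ lies in the one-dimensional subspace $\mathbb{C}\cdot \ell^v \subseteq \mathcal{S}_v^n$, so its rank is at most $1$. Combining with the previous display gives $\mathrm{rank}(\mathit{Cat}_g(u,v)) \le r = \mathbf{R}(g)$, as required.

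There is really no obstacle here beyond bookkeeping: the only thing to be careful about is the degree convention when invoking \Cref{genduala} (the statement is phrased for $f$ of degree $d$ acting on $j$-th powers with $j \ge d$, and we are applying it with $f$ of degree $u$ acting on $\ell^d$, which is consistent because $d \ge u$). Everything else — linearity in $g$ and subadditivity of rank — is routine. Note that the same argument works over any field or divided-power ring in which the identity of \Cref{genduala} holds, so the bound is quite robust.
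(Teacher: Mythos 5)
Your proof is correct and is the standard argument; the paper itself cites this statement to Iarrobino--Kanev without reproducing a proof, and the route you take — linearity of $g \mapsto \mathit{Cat}_g(u,v)$, subadditivity of matrix rank, and the rank-one property of $\mathit{Cat}_{\ell^d}(u,v)$ obtained from \Cref{genduala} — is exactly how it is established there. One small remark that does not affect the argument: the scalar should be $d!/v!$ rather than $u!$, since the general form of the differentiation identity is $f(\partial \mathbf{x})\ell^j = \frac{j!}{(j - \deg f)!} f(\ell^*)\ell^{j - \deg f}$ (the $d!$ in the paper's statement of \Cref{genduala} is the $j = d$ specialization); but since all you need is that the image of $\mathit{Cat}_{\ell^d}(u,v)$ lies in the line $\mathbb{C}\cdot\ell^v$, the precise value of the constant is immaterial to the rank-one claim.
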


\begin{remark}\label{derivsopt}
As a matrix, $\mathit{Cat}_g(u,v)$ has $\binom{n+u-1}{u}$ columns, indexed by the degree-$u$ monomials in $x_1, \ldots, x_n$, and $\binom{n+v-1}{v}$ rows, indexed by the degree-$v$ monomials in $x_1, \ldots, x_n$. Therefore the best rank lower bound \Cref{methodderivs} can give is $\binom{n+\lceil d/2 \rceil -1}{\lceil d/2 \rceil}$, which is obtained when $u = \lceil d/2 \rceil, v = \lfloor d/2 \rfloor$. In contrast, it is known \cite[Section 3.2]{landsberg2012tensors} that the rank for \textit{almost all} $g \in \mathcal{S}_d^n$  is at least $\lceil \binom{n+d-1}{d}/n \rceil$ (with respect to a natural distribution on forms), so the method of partial derivatives is far from optimal. Finding methods for proving better lower bounds is a significant barrier and a topic of great interest from both an algebraic-geometric and complexity-theoretic perspective; see \cite[Section 10.1]{landsberg_2017} and \cite{efremenko2018barriers}.
\end{remark}

\begin{example}\label{matrixrk}
It is a classical fact from linear algebra that for $g \in \mathcal{S}_2^n$, $\mathbf{R}(g) = \mathrm{rank}(\mathit{Cat}_g(1,1))$. Explicitly, this says that $g = \sum_{1 \le i \le j \le n} A_{ij} x_i x_j$ can be written as a sum of at most $r$ squares of linear forms if and only if the matrix $A = (A_{ij})$ has rank at most $r$. Hence Waring rank can be viewed as a higher dimensional generalization of symmetric matrix rank.
\end{example}

Let $g^\perp_j \coloneqq \ker \mathit{Cat}_g(j,d-j)$ be the set of degree-$j$ forms annihilating $g$ under the differentiation action. The next fact is known as the \emph{Apolarity Lemma} in the Waring rank literature.

\begin{lemma}\cite[Theorem 4.2]{teitler2014geometric}
Let $\ell_1, \ldots, \ell_r \in \mathcal{S}_1^n$ be pairwise linearly independent. Then for all $g \in \mathcal{S}_d^n$, $g \in \mathrm{span}\{\ell_1^d, \ldots, \ell_r^d\}$ if and only if $I(\{\ell_1^*, \ldots, \ell_r^*\})_d \subset g^\perp_d$.
\end{lemma}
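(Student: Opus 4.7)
The plan is to derive the Apolarity Lemma from the linear-algebraic fact that the \emph{apolarity pairing}
\[\langle f, h\rangle \coloneqq f(\partial \mathbf{x})\, h \in \mathbb{C}, \qquad f, h \in \mathcal{S}_d^n\]
is a nondegenerate symmetric bilinear form on the finite-dimensional space $\mathcal{S}_d^n$. Symmetry is \Cref{gendualb}. Nondegeneracy follows from a direct calculation in the monomial basis: $\partial^\alpha x^\beta = \frac{\beta!}{(\beta-\alpha)!} x^{\beta-\alpha}$ if $\beta \ge \alpha$ and $0$ otherwise, so when $|\alpha|=|\beta|=d$ the pairing $\langle x^\alpha, x^\beta\rangle$ equals $\alpha!$ if $\alpha=\beta$ and $0$ otherwise. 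Thus the Gram matrix is diagonal with nonzero entries and the pairing is nondegenerate, giving the usual double-orthogonal identity $(W^\perp)^\perp = W$ for every subspace $W \subseteq \mathcal{S}_d^n$.

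With this setup the proof is essentially one computation. Set $W \coloneqq \mathrm{span}\{\ell_1^d, \ldots, \ell_r^d\}$. By \Cref{genduala} applied with $j=d$ (so that $\ell_i^{j-d} = 1$),
\[\langle f, \ell_i^d\rangle \;=\; f(\partial \mathbf{x})\, \ell_i^d \;=\; d!\, f(\ell_i^*)\]
for every $f \in \mathcal{S}_d^n$. Hence $f \in W^\perp$ iff $f$ vanishes on the set $\{\ell_1^*, \ldots, \ell_r^*\}$, i.e.,
\[W^\perp = I(\{\ell_1^*, \ldots, \ell_r^*\})_d.\]
On the other hand, by the very definition of $g^\perp_d$, the set $g^\perp_d$ is precisely the orthogonal complement of the one-dimensional subspace $\mathbb{C}\cdot g$ under the apolarity pairing.

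Now the equivalence reduces to the duality identity. By nondegeneracy, $g \in W$ iff $g \in (W^\perp)^\perp$, which in turn is equivalent to $W^\perp \subseteq g^\perp_d$. Substituting our description of $W^\perp$ gives exactly the statement $I(\{\ell_1^*,\ldots,\ell_r^*\})_d \subseteq g^\perp_d$, completing both directions simultaneously.

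I do not anticipate a serious obstacle; essentially everything is a bookkeeping exercise once the pairing is identified and shown to be nondegenerate. The only subtle point is the role of the pairwise linear independence hypothesis: it is used to ensure that the $\ell_i^*$ represent $r$ distinct points in projective space, so that $I(\{\ell_1^*,\ldots,\ell_r^*\})$ has the expected geometric meaning and the ideal is unchanged if we rescale the $\ell_i$ (since homogeneity of $f$ gives $f(c \ell_i^*) = c^d f(\ell_i^*)$, vanishing conditions at $\ell_i^*$ and $c\ell_i^*$ coincide). The computational core of the argument, namely $\langle f, \ell_i^d\rangle = d!\, f(\ell_i^*)$, does not itself require this assumption.
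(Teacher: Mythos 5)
Your proof is correct, and it is the standard argument via the nondegenerate apolarity pairing. Note that the paper itself does not prove this lemma — it cites Teitler's survey — so you have supplied a genuine proof where the paper relies on a reference. The computation is clean: nondegeneracy from the diagonal Gram matrix in the monomial basis, the identity $\langle f, \ell_i^d\rangle = d!\,f(\ell_i^*)$ from \Cref{genduala} with $j=d$, hence $W^\perp = I(\{\ell_1^*,\ldots,\ell_r^*\})_d$, and finally $g \in W \Leftrightarrow W^\perp \subseteq (\mathbb{C}g)^\perp = g^\perp_d$ by double orthogonal complement. Your remark on the role of pairwise linear independence is also apt: the equivalence holds regardless, since a homogeneous degree-$d$ form vanishes at $\ell_i^*$ iff it vanishes at any nonzero scalar multiple, so the hypothesis only serves to ensure the $\ell_i^d$ and the points $[\ell_i^*]$ are genuinely distinct (avoiding degenerate phrasings), not to make the computation go through.
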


A complete answer to the complexity of \Cref{p1a} is now in hand.
\begin{proof}[Proof of \Cref{opteval}]
The upper bound is immediate from \Cref{gendualb}. To prove the lower bound we first show the following: for any pairwise linearly independent points $v_1, \ldots, v_m \in \mathbb{C}^n$ where $m < \mathbf{R}(g)$, there exists a $p \in \mathcal{S}_d^n$ such that $p \in I(\{v_1, \ldots, v_m\})$ but $g(\partial \mathbf{x}) p \neq 0$. If this were not the case, there exist pairwise linearly independent points $v_1, \ldots, v_m$ such that $I(\{v_1, \ldots, v_m\})_d \subset g^\perp_d$. But this implies that $g$ has rank at most $m$ by the Apolarity Lemma, a contradiction.

So now given any $f \in \mathcal{S}_d^n$, suppose that our algorithm queries $f$ at $v_1, \ldots, v_m$, which can be assumed to be pairwise linearly independent. By the above argument, there exists some $p \in \mathcal{S}_d^n$ such that $(p+f)(v_i) = p(v_i) + f(v_i) = f(v_i)$ for all $i \in [m]$, and hence the algorithm cannot distinguish $f$ from $p+f$, but at the same time $g(\partial \mathbf{x})f \neq g(\partial \mathbf{x})(p+f) $.
\end{proof}

\subsection{Support Rank, Nonnegative Support Rank, and $\varepsilon$-Support Rank}
We now introduce variants of Waring rank of algorithmic relevance.

\begin{definition}\label{supprdef}
The support rank and nonnegative support rank of $f \in \mathcal{S}_d^n$ are given by
\begin{align*}
\mathbf{R}_{\mathrm{supp}}(f) &\coloneqq \min (\mathbf{R}(g) : g \in \mathcal{S}_d^n, \mathrm{supp}(g) = \mathrm{supp}(f)),\\
\mathbf{R}_{\mathrm{supp}}^+(f) &\coloneqq  \min (\mathbf{R}(g) : g \in \mathbb{R}_{\ge 0}[x_1, \ldots, x_n]_d,\mathrm{supp}(g) = \mathrm{supp}(f)).\\
\intertext{Furthermore, if $f \in \mathbb{R}_{\ge 0}[x_1, \ldots, x_n]_d$, the $\varepsilon$-support rank of $f$ is given by}
\mathbf{R}_{\mathrm{supp}}^\varepsilon(f) &\coloneqq  \min (\mathbf{R}(g) : g \in \mathbb{R}[x_1, \ldots, x_n]_d, \forall \alpha \in \mathbb{N}^n_d, (1-\varepsilon) \cdot \partial^\alpha f \le \partial^\alpha g \le (1+\varepsilon) \cdot \partial^\alpha f).
\end{align*}
\end{definition}
Note that condition in the definition of $\mathbf{R}_{\mathrm{supp}}^\varepsilon$ is simply that the coefficient of $x^\alpha$ in $g$ is bounded by a factor of $(1 \pm \varepsilon)$ times the coefficient of $x^\alpha$ in $f$.

Roughly speaking, support rank corresponds to decision algorithms, nonnegative support rank to \emph{deterministic} decision algorithms, and $\varepsilon$-support rank to deterministic \emph{approximate counting} algorithms. This is now formalized.

\begin{definition}
For $g \in \mathcal{S}_d^n$ and $0 < \delta < 1$, a $g$-\textit{support intersection certification algorithm} with one-sided error $\delta$ is an algorithm which, given any $f \in \mathcal{S}_d^n$ as a black-box, outputs $``\mathrm{supp}(f) \cap \mathrm{supp}(g) = \emptyset''$ on all instances $f$ where $\mathrm{supp}(f) \cap \mathrm{supp}(g) = \emptyset$, and correctly outputs $``\mathrm{supp}(f) \cap \mathrm{supp}(g) \neq \emptyset''$ with probability at least $1-\delta$ on all instances where $\mathrm{supp}(f) \cap \mathrm{supp}(g) \neq \emptyset$.
\end{definition}

\begin{proposition}\label{supptest}
\begin{propenum}
\item For all $g \in \mathcal{S}_d^n$ and $\delta > 0$, there is a $g$-support intersection certification algorithm with one-sided error $\delta$ that makes $\mathbf{R}_{\mathrm{supp}}(g)$ queries.\label{supptesta}
\item For a fixed $g \in \mathcal{S}_d^n$ and all $f \in \mathbb{R}_{\ge 0}[x_1, \ldots, x_n]_d$ given as a black-box, there is a deterministic algorithm that decides if $\mathrm{supp}(g) \cap \mathrm{supp}(f)$ using $\mathbf{R}_{\mathrm{supp}}^+(g)$ queries.\label{supptestb}
\item For a fixed $g \in \mathbb{R}_{\ge 0}[x_1, \ldots, x_n]_d$ and all $f \in \mathbb{R}_{\ge 0}[x_1, \ldots, x_n]_d$ given as a black-box, there is a deterministic algorithm that computes a $(1 \pm \varepsilon)$-approximation to $g(\partial \mathbf{x})f$ using $\mathbf{R}_{\mathrm{supp}}^\varepsilon(g)$ queries.\label{supptestc}
\end{propenum}
\end{proposition}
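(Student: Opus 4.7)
The plan is to treat all three parts uniformly. For each, choose a ``surrogate'' polynomial $h$ attaining the relevant rank ($\mathbf{R}_{\mathrm{supp}}(g)$, $\mathbf{R}_{\mathrm{supp}}^+(g)$, or $\mathbf{R}_{\mathrm{supp}}^\varepsilon(g)$) and write $h = \sum_{i=1}^{r} \ell_i^d$ with $r = \mathbf{R}(h)$. The uniform key identity, obtained from \Cref{gendualb} by matching degree-$d$ monomials, is
\[
h(\partial\mathbf{x})f \;=\; f(\partial\mathbf{x})h \;=\; \sum_{\alpha \in \mathbb{N}_d^n} c_\alpha\,\alpha!\,h_\alpha,
\qquad f = \sum_\alpha c_\alpha x^\alpha,\; h = \sum_\alpha h_\alpha x^\alpha.
\]
Simultaneously, by \Cref{genduala}, $h(\partial\mathbf{x})f = d!\sum_{i=1}^{r} f(\ell_i^{*})$, so the left-hand side is computable with $r$ black-box queries to $f$. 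This is the pivot for all three parts; only the interpretation of the sum differs.

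For \Cref{supptesta}, the issue is that while every nonzero summand of $\sum_\alpha c_\alpha \alpha! h_\alpha$ arises from $\alpha \in \mathrm{supp}(f)\cap\mathrm{supp}(g)$, complex cancellation can still make the total vanish when the supports intersect. To break cancellation I would first replace $f$ by the formal scaling $\tilde f(\mathbf{x}) = f(t_1 x_1,\ldots,t_n x_n)$, turning the computation into the polynomial $h(\partial\mathbf{x})\tilde f = \sum_\alpha c_\alpha \alpha! h_\alpha\,t^\alpha$ in the indeterminates $t_i$. This polynomial is identically zero iff $\mathrm{supp}(f)\cap\mathrm{supp}(g) = \emptyset$; otherwise it has degree at most $d$, so sampling $t_1,\ldots,t_n$ uniformly from a finite set $S$ of size $\lceil d/\delta\rceil$ and applying Schwartz--Zippel bounds the one-sided error by $\delta$. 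Each sample is evaluated via \Cref{genduala} as $d!\sum_i f(t_1\ell_{i,1},\ldots,t_n\ell_{i,n})$, using exactly $r = \mathbf{R}_{\mathrm{supp}}(g)$ queries to the black-box.

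Parts \Cref{supptestb} and \Cref{supptestc} are where the positivity structure makes the algorithm deterministic and avoids the randomization above. For \Cref{supptestb}, both $c_\alpha \ge 0$ and $h_\alpha \ge 0$ by hypothesis, so every term of $\sum_\alpha c_\alpha \alpha! h_\alpha$ is nonnegative; hence the sum vanishes iff every term does, iff $\mathrm{supp}(f)\cap\mathrm{supp}(g) = \emptyset$. Compute $d!\sum_i f(\ell_i^{*})$ with $r=\mathbf{R}_{\mathrm{supp}}^+(g)$ queries and test for zero. For \Cref{supptestc}, the $\varepsilon$-support rank hypothesis translates to the coefficient-wise bound $(1-\varepsilon)g_\alpha \le h_\alpha \le (1+\varepsilon)g_\alpha$ for every $\alpha\in\mathbb{N}_d^n$; combining this termwise with $c_\alpha,g_\alpha \ge 0$ preserves the sandwich under summation, yielding
\[
(1-\varepsilon)\,g(\partial\mathbf{x})f \;\le\; h(\partial\mathbf{x})f \;\le\; (1+\varepsilon)\,g(\partial\mathbf{x})f.
\]
So returning $d!\sum_i f(\ell_i^{*})$ gives a deterministic $(1\pm\varepsilon)$-approximation in $r = \mathbf{R}_{\mathrm{supp}}^\varepsilon(g)$ queries.

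The only genuinely nontrivial step is the cancellation obstruction in part \Cref{supptesta}; once the symbolic-scaling trick is in place, everything reduces to \Cref{gendual} plus termwise positivity. It is worth noting that in parts \Cref{supptestb} and \Cref{supptestc} no randomness or scaling is needed precisely because the surrogate $h$ and the input $f$ jointly have nonnegative coefficients, ruling out the kind of destructive cancellation that forces randomization in the general case.
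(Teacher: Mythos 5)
Your proposal is correct and follows essentially the same route as the paper: the symbolic-scaling trick combined with Schwartz--Zippel for part~(a), termwise nonnegativity for part~(b), and the coefficientwise sandwich for part~(c), all pivoting on \Cref{gendual}. The only cosmetic difference is that you explicitly introduce the rank-minimal surrogate $h$ attaining $\mathbf{R}_{\mathrm{supp}}(g)$ (resp.\ $\mathbf{R}_{\mathrm{supp}}^{+}$, $\mathbf{R}_{\mathrm{supp}}^{\varepsilon}$), which the paper leaves implicit; this makes the argument slightly cleaner, since the paper's proof of (a) refers to $\mathbf{R}(g)$ queries while the claim is about $\mathbf{R}_{\mathrm{supp}}(g)$.
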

\begin{proof}
\begin{enumerate}[label=\alph*.]
\item Let $U \subseteq \mathbb{C}$, where $|U| \ge d/\delta$. Let $a_1, \ldots, a_n$ be indeterminates. Note that $g(\partial \mathbf{x}) f(a_1 x_1, \ldots,a_n x_n)$ is not identically zero in $\mathbb{C}[a_1, \ldots, a_n]$ if and only if $\mathrm{supp}(f) \cap \mathrm{supp}(g) \neq \emptyset$. Then by choosing $a_1, \ldots, a_n$ uniformly at random from $U$, $g(\partial \mathbf{x})$ $f(a_1 x_1, \ldots,a_n x_n)$ will evaluate to zero whenever $\mathrm{supp}(f) \cap \mathrm{supp}(g) = \emptyset$, and whenever $\mathrm{supp}(f) \cap \mathrm{supp}(g) \neq \emptyset$ this does not evaluate to zero with probability at least $1-\delta$ by the Schwartz-Zippel lemma. By \Cref{gendual}, $g(\partial \mathbf{x}) f(a_1 x_1, \ldots,a_n x_n)$  can be computed using $\mathbf{R}(g)$ queries, and the conclusion follows.
\item If both $f$ and $g$ have nonnegative coefficients, then $g(\partial \mathbf{x})f > 0$ if and only if $\mathrm{supp}(f) \cap \mathrm{supp}(g) \neq \emptyset$. The result follows from \Cref{gendual}.
\item This is immediate from \Cref{gendual}.\qedhere
\end{enumerate}
\end{proof}

It follows from a variation of the proof of \Cref{opteval} that \Cref{supptesta} is optimal for monomials:

\begin{proposition}\label{montest}
For all $\alpha \in \mathbb{N}^n$ and all $\delta < 1$, any $x^\alpha$-support intersection certification algorithm with one-sided error $\delta$ makes at least $\mathbf{R}_{\mathrm{supp}}(x^\alpha) = \prod_{i=1}^n (1+\alpha_i)/\min_{i \in [n]}(1+\alpha_i)$ queries.
\end{proposition}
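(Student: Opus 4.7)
The plan splits into two parts: compute $\mathbf{R}_{\mathrm{supp}}(x^\alpha)$, and prove a matching query lower bound.

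For the first, observe that $\mathrm{supp}(x^\alpha) = \{\alpha\}$, so any $g$ with $\mathrm{supp}(g) = \mathrm{supp}(x^\alpha)$ must equal $c \cdot x^\alpha$ for some nonzero scalar $c$; since Waring rank is invariant under nonzero scaling, $\mathbf{R}_{\mathrm{supp}}(x^\alpha) = \mathbf{R}(x^\alpha)$. The closed form $\mathbf{R}(x^\alpha) = \prod_{i=1}^n (1+\alpha_i)/\min_{i \in [n]}(1+\alpha_i)$ is the monomial Waring rank theorem of Carlini, Catalisano, and Geramita, which I would cite directly rather than reprove.

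For the lower bound I would adapt the apolarity argument in the proof of \Cref{opteval}. Consider the algorithm's execution on the zero input $f = 0$; since $\mathrm{supp}(0) = \emptyset$ is disjoint from $\{\alpha\}$, the one-sided error hypothesis forces the algorithm to output ``disjoint'' on $f = 0$ with probability $1$. Let $v_1, \ldots, v_m$ be the queries made on $f = 0$, taken WLOG pairwise linearly independent, and suppose for contradiction that $m < \mathbf{R}(x^\alpha)$. The key substep is to exhibit a $p \in \mathcal{S}_d^n$ with $p(v_i) = 0$ for all $i$ and $\alpha \in \mathrm{supp}(p)$. If no such witness existed, then every $q \in I(\{v_1, \ldots, v_m\})_d$ would satisfy $\alpha \notin \mathrm{supp}(q)$; since $x^\alpha(\partial \mathbf{x}) q = \alpha! \cdot (\text{coefficient of } x^\alpha \text{ in } q)$ by \Cref{gendualb}, this reads $I(\{v_1, \ldots, v_m\})_d \subseteq (x^\alpha)^\perp_d$, and the Apolarity Lemma would then place $x^\alpha \in \mathrm{span}\{\ell_1^d, \ldots, \ell_m^d\}$ for the linear forms $\ell_i$ with $\ell_i^* = v_i$. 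Absorbing scalars into the linear forms using $d$-th roots in $\mathbb{C}$ gives $\mathbf{R}(x^\alpha) \le m$, a contradiction. Given such a $p$, querying $f = p$ elicits exactly the all-zero responses the algorithm received on $f = 0$, so the algorithm follows the same branch and outputs ``disjoint'' --- an error, since $\mathrm{supp}(p) \cap \{\alpha\} = \{\alpha\} \neq \emptyset$.

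The step I expect to be most subtle is handling a genuinely randomized algorithm whose query sequence on $f = 0$ depends on internal coins. The one-sided error hypothesis forces \emph{every} realization of the randomness to output ``disjoint'' on $f = 0$, so the apolarity construction produces, for each random string $r$, a polynomial $p_r$ fooling the corresponding deterministic realization $D_r$; repackaging these into a single bad instance $p$ that fools the randomized algorithm on a $> \delta$ fraction of coin flips is the main technical obstacle and likely requires either a Yao-type distributional argument or a union-of-queries calculation after discretizing the randomness. For a deterministic algorithm the construction above already yields failure probability $1$, which contradicts any $\delta < 1$.
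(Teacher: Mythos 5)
Your computation of $\mathbf{R}_{\mathrm{supp}}(x^\alpha) = \mathbf{R}(x^\alpha)$ via the singleton-support observation is clean and correct, and the apolarity substep (finding $p$ vanishing at the query points with $\alpha \in \mathrm{supp}(p)$, assuming $m < \mathbf{R}(x^\alpha)$) matches the paper's core lemma. However, the gap you flag at the end is genuine, and the paper's proof avoids it by a different pivot that you should notice.

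You start from the input $f = 0$, observe the query points made on $0$, and then try to \emph{construct} a bad instance $p$ from those points. Since the query set depends on the internal coins $r$, you get a family $\{p_r\}$ rather than a single hard instance, and, as you say, repackaging these is nontrivial. The paper instead fixes an arbitrary hard instance \emph{first}: take any $f$ with $\alpha \in \mathrm{supp}(f)$. For each random string $r$, let $v_1^r, \ldots, v_m^r$ be the queries that $D_r$ makes on $f$; by apolarity there is $p_r \in I(\{v_i^r\})_d$ with $\alpha \in \mathrm{supp}(p_r)$, and one can pick $\lambda_r$ so that $\alpha \notin \mathrm{supp}(f + \lambda_r p_r)$. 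Then $D_r$ receives identical query answers on $f$ and on $f + \lambda_r p_r$; the latter has support disjoint from $\{\alpha\}$, so the no-false-negative guarantee forces $D_r$ to output ``disjoint'' on it, hence also on $f$. This holds for \emph{every} $r$ (with a different fooling partner $f + \lambda_r p_r$ per coin flip, but always the same $f$), so the algorithm errs on $f$ with probability $1$, contradicting $\delta < 1$. The crucial reversal is that the hard input $f$ is chosen once and for all, and the per-$r$ apolarity construction is used only to show each deterministic realization is fooled on that same $f$; you do not need a single $p$ that fools all $r$ simultaneously. Your deterministic-case argument is fine as is; swapping $f = 0$ for a fixed $f$ with $\alpha \in \mathrm{supp}(f)$ (and subtracting to land outside the support, rather than adding to land inside it) closes the randomized case without Yao or discretization.
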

\begin{proof}
The upper bound follows from \Cref{gendualb}; in fact, this shows that we can compute $\partial^\alpha f$ exactly using $\mathbf{R}(x^\alpha)$ queries.

For the lower bound, given any $f \in \mathcal{S}_d^n$ where $\alpha \in \mathrm{supp}(f)$, suppose a support intersection certification algorithm queries $f$ at pairwise linearly independent points $v_1, \ldots, v_m$, where $m < \mathbf{R}(x^\alpha)$. Then by the Apolarity Lemma, there exists a $p \in \mathcal{S}_d^n$ such that $p \in I(\{v_1, \ldots, v_m\})$ but $\partial^\alpha p \neq 0$ (see the proof of \Cref{opteval}). Note that the condition that $\partial^\alpha p \neq 0$ is equivalent to saying that $\alpha \in \mathrm{supp}(p)$. Therefore there exists some $\lambda \in \mathbb{C}$ such that $\alpha \notin \mathrm{supp}(f + \lambda p)$. But note that $(f+\lambda p)(v_i) = f(v_i) + \lambda p(v_i) = f(v_i)$ for all $i \in [m]$, and hence the algorithm cannot distinguish between $f$ and $f + \lambda p$. Since the algorithm has no false negatives, it must always give the incorrect answer on $f$. We conclude by the matching upper and lower bounds on $\mathbf{R}(x^\alpha)$ given in \cite{carlini2011solution}.
\end{proof}

\begin{theorem}\label{multlintest}
Any $e_{n,d}$-support intersection certification algorithm with one-sided error $\delta$ makes at least $2^{d-1}$ queries.
\end{theorem}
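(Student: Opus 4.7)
The plan is to prove \Cref{multlintest} by a short reduction to \Cref{montest}. Note first that $e_{d,d} = x_1 \cdots x_d$, and that $\mathbf{R}_{\mathrm{supp}}(x_1 \cdots x_d) = \mathbf{R}(x_1 \cdots x_d) = 2^{d-1}$ (the classical identity noted in \Cref{ryserex}, which also matches the $n=d$, $\alpha=(1,\ldots,1)$ case of the formula in \Cref{montest}: $\prod_i(1+\alpha_i)/\min_i(1+\alpha_i) = 2^d/2$). Thus \Cref{montest} already provides the desired bound of $2^{d-1}$ queries for the $x_1 \cdots x_d$-support intersection problem, and it suffices to exhibit a query-preserving reduction from $e_{n,d}$-support intersection to $x_1 \cdots x_d$-support intersection.

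Given an $e_{n,d}$-support intersection certification algorithm $A$ with one-sided error $\delta$ and query complexity $m$ (assume $n \ge d$, else $e_{n,d}=0$ and the statement is vacuous), I would construct an $x_1 \cdots x_d$-support intersection certification algorithm $A'$ as follows. On input $f \in \mathcal{S}_d^d$, $A'$ simulates $A$ on the polynomial $\tilde f \in \mathcal{S}_d^n$ obtained by viewing $f$ as an element of $\mathbb{C}[x_1, \ldots, x_n]$ with no dependence on $x_{d+1}, \ldots, x_n$. Whenever $A$ queries $\tilde f$ at a point $(w_1, \ldots, w_n) \in \mathbb{C}^n$, $A'$ answers by querying $f$ at $(w_1, \ldots, w_d) \in \mathbb{C}^d$ and returning that value. $A'$ then outputs whatever $A$ outputs, using the same random bits.

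The correctness of this reduction hinges on a single observation: the only degree-$d$ multilinear monomial in $\mathbb{C}[x_1, \ldots, x_n]$ that lies in the subring $\mathbb{C}[x_1, \ldots, x_d]$ is $x_1 \cdots x_d$ itself. Hence $\mathrm{supp}(\tilde f) \cap \mathrm{supp}(e_{n,d}) \neq \emptyset$ if and only if $x_1 \cdots x_d \in \mathrm{supp}(f)$, and so $A'$ correctly certifies $x_1 \cdots x_d$-support intersection with the same one-sided error $\delta$ and the same number of queries $m$. Applying \Cref{montest} to $A'$ (with $n=d$ and $\alpha=(1,\ldots,1)$) yields $m \ge 2^{d-1}$, as desired. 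There is no real obstacle beyond this: the substantive work, namely constructing obstructing polynomials via the Apolarity Lemma and the known value of $\mathbf{R}(x_1 \cdots x_d)$, has already been carried out in the proof of \Cref{montest}.
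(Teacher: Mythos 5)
Your proposal is correct and takes essentially the same approach as the paper: both reduce the $e_{n,d}$-support intersection problem to the $x_1\cdots x_d$ case by restricting to the first $d$ variables (equivalently, lifting a $d$-variable black-box to an $n$-variable one with no dependence on the extra variables, observing that the only degree-$d$ multilinear monomial surviving is $x_1\cdots x_d$), and then invoke \Cref{montest} together with $\mathbf{R}(x_1\cdots x_d)=2^{d-1}$. Your write-up is just a slightly more explicit rendering of the same reduction, spelling out how queries are answered.
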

\begin{proof}
Suppose for contradiction that such an algorithm made fewer queries. Then given $f$ as a black-box, we run this algorithm with access to $f(x_1, \ldots, x_d, 0, \ldots, 0)$. By definition, this algorithm always answers correctly if the coefficient of $x_1 \cdots x_d$ is zero, and answers correctly with probability at least $1-\delta$ if this coefficient is nonzero. But this gives an $x_1 \cdots x_d$-support intersection certification algorithm with one-sided error $\delta$ making fewer than $2^{d-1}$ queries. Since $\mathbf{R}(x_1 \cdots x_d) = 2^{d-1}$ \cite{ranestad2011rank}, this contradicts \Cref{{montest}}.
\end{proof}
\section{Support Ranks of Elementary Symmetric Polynomials}\label{sec3}
We are now ready to study $A(n,d)$ and its variants, which we now recall. 
\begin{problem}\label{eltsrank}
Determine $A(n,d) \coloneqq \mathbf{R}_{\mathrm{supp}}(e_{n,d})$, $A^+(n,d) \coloneqq \mathbf{R}^+_{\mathrm{supp}}(e_{n,d})$ and $A^\varepsilon(n,d) \coloneqq \mathbf{R}_{\mathrm{supp}}^\varepsilon(e_{n,d})$.
\end{problem}

Obviously $A(n,d) \le A^+(n,d) \le A^\varepsilon(n,d),$ and for all $n$, $A(n,1) = 1$. It follows from \cite{ranestad2011rank} that $A^\varepsilon(n,n) = 2^{n-1}$ and from \cite{lee2016power} that $A^\varepsilon(n,d) \le \binom{n}{\le \lfloor d/2 \rfloor}$; the latter turns out to be arbitrarily far from optimal, however.

We will be interested in \Cref{eltsrank} as $n$ goes to infinity. To facilitate this, we adopt the notation $A(\mathbb{N},d) \coloneqq \lim_{n \to \infty} A(n,d)$, defining $A^+(\mathbb{N},d)$ and $A^\varepsilon(\mathbb{N},d)$ analogously. We will show in \Cref{trivialboundsa} that $A(n,d), A^+(n,d)$, and $A^\varepsilon(n,d)$ are nondecreasing in $n$, in \Cref{detgen} that $A^+(\mathbb{N},d)$ is finite for each $d$, and in \Cref{aepsiloninf} that $A^\varepsilon(\mathbb{N},d)$ is infinite for $\varepsilon<1/2$ and $d>1$.

For notational convenience, we define
\begin{align*}
\mathfrak{E}(n,d) &\coloneqq \{f \in \mathcal{S}_d^n : \mathrm{supp}(f) = \mathrm{supp}(e_{n,d})\},\\
\mathfrak{E}^+(n,d) &\coloneqq \{f \in \mathfrak{E}(n,d) : \forall \alpha \in \{0,1\}^n_d,\ \partial^\alpha f \in \mathbb{R}^+\},\\
\mathfrak{E}^\varepsilon(n,d) &\coloneqq \{f \in \mathfrak{E}^+(n,d) : \forall \alpha \in \{0,1\}^n_d,\ \partial^\alpha f \in (1 \pm \varepsilon))\}.
 \end{align*}
 \begin{remark}
Our upper bounds to \Cref{eltsrank} will be obtained by the following general method. We start with some $f \in\mathcal{S}_d^m$ whose rank is known. We then find $L_1, \ldots,L_m \in \mathcal{S}_1^n$, where $n \gg m$, so that $f(L_1, \ldots, L_m) \in \mathfrak{E}(n,d)$. This will show that 
\[A(n,d) \le \mathbf{R}(f(L_1, \ldots, L_m)) \le \mathbf{R}(f).\]
For example, we first show that $A^+(\mathbb{N},d)< 6.75^d$ by taking $f$ to be the determinant of a generic Hankel matrix, and $\ell_1, \ldots , \ell_n$ to be given by rank-1 Hankel matrices (points on the \textit{rational normal scroll}). We later use this method to show that $A^\varepsilon(n,d) \le O(4.075^d \varepsilon^{-2} \log n)$ by taking $f$ to be a ``direct sum'' of $e_{\lfloor 1.55d \rfloor,d}$ and $L_1, \ldots, L_n$ to be given by a $(1+\varepsilon)$-balanced splitter. We note in \Cref{colorred} that color-coding can be viewed as taking $f$ to be a direct sum of $x_1 x_2 \cdots x_d$ and $L_1, \ldots, L_m$ to be a perfect hash family. A simple geometric property that $f$ and $L_1, \ldots, L_m$ must satisfy in this method is given by \Cref{geomchar}.
\end{remark}
\subsection{Lower Bounding $A(n,d)$ and the $d=2$ Case}\label{sec3.1}
We start with some simple relations between different values of $A(n,d)$ that will be used throughout this section.
\begin{proposition}\label{trivialbounds}
For all $n \ge d$, 
\begin{propenum}
\item $A(n,d) \le A(n+1,d)$,\label{trivialboundsa}
\item $A(n,d) \le A(n+1,d+1)$.\label{trivialboundsb}
\end{propenum}
Moreover, these statements remain valid when ``$A$'' is replaced with $A^+$ and $A^\varepsilon$.
\end{proposition}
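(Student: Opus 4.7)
The plan is to establish both inequalities by exhibiting natural linear operations that, when applied to a minimum-rank witness in $\mathfrak{E}(n+1, d)$ or $\mathfrak{E}(n+1, d+1)$, produce a witness in $\mathfrak{E}(n, d)$ whose Waring rank is no larger. In both cases these operations will also preserve the nonnegativity of coefficients and the $(1 \pm \varepsilon)$-bound on multilinear coefficients, so the arguments apply uniformly to $A$, $A^+$, and $A^\varepsilon$.

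For part (a), I would take any $g \in \mathfrak{E}(n+1, d)$ achieving Waring rank $A(n+1, d)$ and set $g'(x_1, \ldots, x_n) \coloneqq g(x_1, \ldots, x_n, 0)$. The degree-$d$ multilinear monomials in $x_1, \ldots, x_n$ are exactly those degree-$d$ multilinear monomials in $x_1, \ldots, x_{n+1}$ that do not use $x_{n+1}$, so $\mathrm{supp}(g') = \mathrm{supp}(e_{n,d})$, i.e., $g' \in \mathfrak{E}(n,d)$. The substitution $x_{n+1} \mapsto 0$ turns a decomposition $g = \sum_{i=1}^r \ell_i^d$ into $g' = \sum_{i=1}^r \tilde\ell_i^d$, where $\tilde\ell_i$ is obtained by zeroing out the $x_{n+1}$-coefficient of $\ell_i$; thus $\mathbf{R}(g') \le \mathbf{R}(g)$. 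Nonnegativity of coefficients is clearly preserved, and the coefficients of multilinear monomials in $g'$ are a subset of those of $g$, so the $1 \pm \varepsilon$ condition carries over.

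For part (b), given $g \in \mathfrak{E}(n+1, d+1)$ of minimum Waring rank, I would set $g'(x_1, \ldots, x_n) \coloneqq (\partial_{n+1} g)(x_1, \ldots, x_n, 0)$. Because $g$ is multilinear on its support, the coefficient of $x^\alpha$ in $g'$ for any $\alpha \in \{0,1\}^n_d$ equals the coefficient of $x^{\alpha + e_{n+1}}$ in $g$, hence is nonzero and has the same sign/size bounds. Thus $g' \in \mathfrak{E}(n,d)$, and moreover membership in $\mathfrak{E}^+(n,d)$ or $\mathfrak{E}^\varepsilon(n,d)$ is inherited from $g$ being in the corresponding space at level $(n+1, d+1)$. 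For the rank bound, writing $g = \sum_{i=1}^r \ell_i^{d+1}$ with $\ell_i = \sum_j a_{ij} x_j$ gives
\[
g' = (d+1) \sum_{i=1}^r a_{i,n+1}\, \tilde\ell_i^d,
\]
where $\tilde\ell_i = \sum_{j \le n} a_{ij} x_j$. Over $\mathbb{C}$ each scalar multiple $c \cdot L^d$ can be absorbed into a single $d$-th power (replace $L$ by $c^{1/d} L$), so $\mathbf{R}(g') \le r = A(n+1, d+1)$ (and analogously for the $+$ and $\varepsilon$ variants, which place no constraint on the underlying decomposition).

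No step is really an obstacle here; the only mild subtlety is making sure that the $\varepsilon$-bound on coefficients transfers through $\partial_{n+1}$ followed by $x_{n+1} \mapsto 0$, which is immediate once one notes that this composite operation sends the coefficient of $x^{\alpha + e_{n+1}}$ (for $\alpha$ multilinear of size $d$) directly to the coefficient of $x^\alpha$, with no factorial factor. Combining (a) and (b) then gives the claim for all three variants.
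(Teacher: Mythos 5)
Your proof is correct and takes essentially the same approach as the paper: part (a) via the substitution $x_{n+1}\mapsto 0$, and part (b) via $\partial_{n+1}$ (the paper invokes \Cref{genduala} for the rank bound rather than unfolding the decomposition explicitly, and omits the redundant $x_{n+1}\mapsto 0$ since $\partial_{n+1}f$ already depends only on $x_1,\ldots,x_n$ by multilinearity).
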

\begin{proof}
\begin{enumerate}[label=\alph*.]
\item Suppose $f \in \mathfrak{E}(n+1,d)$, and let $f'$ be obtained from $f$ by setting $x_{n+1} =0$. Then clearly $\mathbf{R}(f') \le \mathbf{R}(f)$ and $f' \in \mathfrak{E}(n,d)$. Therefore $A(n,d) \le A(n+1,d)$.
\item If $f \in \mathfrak{E}(n+1,d+1)$, then $\partial_{n+1} f \in \mathfrak{E}(n,d)$. Hence $A(n,d) \le \mathbf{R}(\partial_{n+1} f) \le \mathbf{R}(f)$, where the final inequality follows from \Cref{genduala}.
\end{enumerate}
It is easy to see that the same arguments hold if we replace $A(n,d)$ with $A^+(n,d)$ or $A^\varepsilon(n,d)$.
\end{proof}

\begin{theorem}\label{lowerbound}
For all $n \ge d$,
\[2^{d-1} \le A(n,d) \le A^+(n,d) \le A^\varepsilon(n,d).\]
\end{theorem}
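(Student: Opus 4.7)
The plan is to establish the chain $A(n,d) \le A^+(n,d) \le A^\varepsilon(n,d)$ directly from Definition~\ref{supprdef}, then reduce the nontrivial inequality $A(n,d) \ge 2^{d-1}$ to the Ranestad--Schreyer bound $\mathbf{R}(x_1\cdots x_d) = 2^{d-1}$ cited in \Cref{ryserex} by invoking the monotonicity in $n$ from \Cref{trivialboundsa}.

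First, the two rightmost inequalities. By unpacking the definitions, $\mathbf{R}_{\mathrm{supp}}^\varepsilon$ is the minimum of $\mathbf{R}(g)$ over a subset of the forms admissible for $\mathbf{R}_{\mathrm{supp}}^+$ (those whose coefficients lie in the band $1 \pm \varepsilon$ rather than just being nonnegative), and $\mathbf{R}_{\mathrm{supp}}^+$ is in turn a minimum over a subset of the forms admissible for $\mathbf{R}_{\mathrm{supp}}$ (those with nonnegative real coefficients rather than arbitrary complex coefficients). Applied to $e_{n,d}$, whose coefficients are all $1$, this gives $A(n,d) \le A^+(n,d) \le A^\varepsilon(n,d)$.

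Now for the lower bound $A(n,d) \ge 2^{d-1}$. By \Cref{trivialboundsa}, $A(n,d)$ is nondecreasing in $n$ for $n \ge d$, so it suffices to prove $A(d,d) \ge 2^{d-1}$. At $n = d$ the only multi-index in $\{0,1\}^d_d$ is $(1,1,\dots,1)$, so
\[
\mathfrak{E}(d,d) \;=\; \{c \cdot x_1 x_2 \cdots x_d : c \in \mathbb{C}\setminus\{0\}\}.
\]
Since we work over $\mathbb{C}$, any $d$-th root of $c$ exists, so if $x_1 \cdots x_d = \sum_{i=1}^r \ell_i^d$ is a Waring decomposition, then $c\cdot x_1 \cdots x_d = \sum_{i=1}^r (c^{1/d}\ell_i)^d$ is as well, and vice versa. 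Hence $\mathbf{R}(c \cdot x_1 \cdots x_d) = \mathbf{R}(x_1 \cdots x_d)$ for every nonzero $c$, giving
\[
A(d,d) \;=\; \mathbf{R}(x_1\cdots x_d) \;=\; 2^{d-1},
\]
where the last equality is the result of Ranestad--Schreyer \cite{ranestad2011rank} recalled in \Cref{ryserex}. Combining with monotonicity yields $A(n,d) \ge 2^{d-1}$ for all $n \ge d$.

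There is no real obstacle here: the entire lower bound is packaged in the cited identity $\mathbf{R}(x_1 \cdots x_d) = 2^{d-1}$, whose nontrivial ingredient is B\'ezout's theorem applied to apolar subschemes. The only thing to check carefully is the base case $n=d$, where the support condition is so restrictive that $\mathfrak{E}(d,d)$ collapses to a single projective point; the scalar invariance argument over $\mathbb{C}$ completes the reduction.
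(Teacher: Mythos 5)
Your proof is correct and follows essentially the same route as the paper's: the chain $A(n,d) \le A^+(n,d) \le A^\varepsilon(n,d)$ is a set-containment observation that the paper treats as immediate (it states these inequalities as "obvious" right after introducing \Cref{eltsrank}), and the lower bound reduces via \Cref{trivialboundsa} to the Ranestad--Schreyer evaluation $\mathbf{R}(x_1\cdots x_d) = 2^{d-1}$. Your explicit note that every element of $\mathfrak{E}(d,d)$ is a nonzero scalar multiple of $x_1\cdots x_d$ and that Waring rank over $\mathbb{C}$ is scale-invariant is a small detail the paper leaves implicit when asserting $A(d,d) = 2^{d-1}$, but the argument is the same.
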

\begin{proof}
It was shown in \cite{ranestad2011rank} that $\mathbf{R}(x_1 \cdots x_d) = 2^{d-1}$, and therefore $A(d,d) = 2^{d-1}$. The theorem is then immediate from \Cref{trivialboundsa}.
\end{proof}
We now give an insightful geometric characterization of $A(n,d)$.

\begin{proposition}\label{geomchar}
$A(n,d) \le r$ if and only if for some $m$ there exists $f \in \mathcal{S}_d^m$ and points $v_1, \ldots, v_n$ in $\mathbb{C}^m$ such that $\mathbf{R}(f) \le r$ and $f$ vanishes on the span of any $d-1$ of the points $v_1, \ldots, v_n$, but not on the span of any $d$ of them.
\end{proposition}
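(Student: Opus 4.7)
The plan is to recognize that the proposition is the geometric repackaging of the ``substitute linear forms into a low-rank $f$'' construction in the preceding Remark, with the polarization identity doing the translation. The central object is, for $f \in \mathcal{S}_d^m$ and points $v_1,\ldots,v_n \in \mathbb{C}^m$, the polynomial
\[ g(x) \coloneqq f(x_1 v_1 + \cdots + x_n v_n) \in \mathcal{S}_d^n. \]
Two properties of $g$ are immediate. First, if $f = \ell_1'^{\,d} + \cdots + \ell_r'^{\,d}$ is any Waring decomposition with $\ell_i' \in \mathcal{S}_1^m$, then each $\ell_i'(x_1 v_1 + \cdots + x_n v_n)$ is a linear form in $x_1,\ldots,x_n$, so $\mathbf{R}(g) \le \mathbf{R}(f)$. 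Second, letting $\tilde f$ be the polar form of $f$ (the unique symmetric multilinear map on $(\mathbb{C}^m)^d$ with $\tilde f(w,\ldots,w) = f(w)$), multilinear expansion gives
\[ g(x) = \sum_{\alpha \in \mathbb{N}^n_d} \binom{d}{\alpha}\, \tilde f\bigl(v^{(\alpha)}\bigr)\, x^\alpha, \]
where $v^{(\alpha)}$ lists each $v_i$ with multiplicity $\alpha_i$.

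Next I would read off the support condition from this expansion: $g \in \mathfrak{E}(n,d)$ iff (i) $\tilde f(v_{i_1},\ldots,v_{i_d}) \neq 0$ for every $d$ distinct indices, and (ii) $\tilde f(v^{(\alpha)}) = 0$ for every $\alpha \in \mathbb{N}^n_d$ with some $\alpha_i \ge 2$. I would then translate (i) and (ii) into the geometric conditions. Every non-multilinear $\alpha \in \mathbb{N}^n_d$ has support of size at most $d-1$, and applying the same polar expansion to $f\bigl(\sum_{k=1}^{d-1} c_k v_{i_k}\bigr)$ shows that $f$ vanishes on $\mathrm{span}(v_{i_1},\ldots,v_{i_{d-1}})$ iff every $\tilde f(v^{(\beta)})$ with $\mathrm{supp}(\beta) \subseteq \{i_1,\ldots,i_{d-1}\}$ vanishes; ranging over all $(d-1)$-subsets of $[n]$ recovers (ii). Given (ii), the expansion of $f\bigl(\sum_{k=1}^d c_k v_{i_k}\bigr)$ collapses to the single term $d!\, \tilde f(v_{i_1},\ldots,v_{i_d})\, c_1 \cdots c_d$, so non-vanishing of $f$ on $\mathrm{span}(v_{i_1},\ldots,v_{i_d})$ is equivalent to $\tilde f(v_{i_1},\ldots,v_{i_d}) \neq 0$, which is (i).

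Finally I would close both directions using the construction and the translation. For ($\Leftarrow$), given such $f$ and $v_1,\ldots,v_n$ with $\mathbf{R}(f) \le r$, the above yields $g \in \mathfrak{E}(n,d)$ with $\mathbf{R}(g) \le \mathbf{R}(f) \le r$, so $A(n,d) \le r$. For ($\Rightarrow$), given $g \in \mathfrak{E}(n,d)$ with $g = \ell_1^d + \cdots + \ell_r^d$ where $\ell_i = \sum_j c_{ij} x_j$, take $m = r$, $f \coloneqq y_1^d + \cdots + y_r^d \in \mathcal{S}_d^r$, and $v_j \coloneqq (c_{1j},\ldots,c_{rj}) \in \mathbb{C}^r$. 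A direct check gives $f(x_1 v_1 + \cdots + x_n v_n) = g$, so $g \in \mathfrak{E}(n,d)$ forces the two geometric conditions on $f$ and the $v_j$'s, while $\mathbf{R}(f) \le r$ holds by construction. The only technical point is the span-vanishing equivalence in the middle paragraph, and it is purely routine polarization; no real obstacle arises, since the proposition is simply the geometric shadow of linear substitution into a Waring decomposition.
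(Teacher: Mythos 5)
Your proposal is correct, but it routes through a different (and in one direction more elaborate) construction than the paper. For the implication $A(n,d)\le r\Rightarrow$ geometry, the paper simply takes $m=n$, lets $f$ be a minimum-rank member of $\mathfrak{E}(n,d)$ itself, and takes $v_1,\ldots,v_n$ to be the standard basis vectors $e_1,\ldots,e_n$: since $f$ is multilinear of degree $d$ with all multilinear coefficients nonzero, vanishing on every $(d-1)$-span and non-vanishing on every $d$-span are immediate. You instead peel off a Waring decomposition $g=\sum_{i=1}^r\ell_i^d$, take $m=r$, $f=\sum_{i=1}^r y_i^d$ (the Fermat form), and $v_j$ the dual points $(c_{1j},\ldots,c_{rj})$; this is in fact the proof of the paper's \Cref{fermat}, so you have proved a strictly stronger version of the forward direction. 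For the converse, both you and the paper substitute $g\coloneqq f(x_1v_1+\cdots+x_nv_n)$ and observe $\mathbf{R}(g)\le\mathbf{R}(f)$; the difference is that you verify $g\in\mathfrak{E}(n,d)$ via an explicit polarization identity $g=\sum_\alpha\binom{d}{\alpha}\tilde f(v^{(\alpha)})x^\alpha$ and a coefficient-matching translation between span-vanishing and the $\tilde f(v^{(\alpha)})=0$ conditions, whereas the paper argues more synthetically that multilinearity of $g$ follows from restricting to $(d-1)$-subspaces and the multilinear coefficients are nonzero because $g(\alpha)=f(\sum_i\alpha_iv_i)$ cannot vanish (else $f$ would vanish on the relevant $d$-span). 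Both arguments are sound; the paper's is shorter, while yours has the merit of making the polarization dictionary fully explicit and surfacing the Fermat-hypersurface reformulation as a by-product.
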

\begin{proof}
Suppose that $A(n,d) \le r$. By definition, there exists a $f \in \mathfrak{E}(n,d)$ with $\mathbf{R}(f) \le r$. It follows that $f$ vanishes on the span of the span of any $d-1$ of the standard basis vectors in $\mathbb{C}^n$, but not on the span of any $d$ of them.

Conversely, suppose there exists such an $f$ and points $v_1, \ldots, v_n$, and let 
\[f' \coloneqq f(x_1v_1 + \cdots + x_n v_n).\]
It is immediate that $\mathbf{R}(f') \le \mathbf{R}(f)$. Additionally, $f'$ must be multilinear as $f$ vanishes on the span of any $d-1$ of the points $v_1, \ldots, v_n$. But then for $\alpha \in \{0,1\}^n_d$, the coefficient of $x^\alpha$ in $f'$ is given by $f'(\alpha) = f(\sum_{i=1}^n \alpha_i v_i)$. If this was zero $f$ would vanish on the span of the $d$ points $\{v_i : i \in \mathrm{supp}(\alpha)\}$, a contradiction. This shows that $f' \in \mathfrak{E}(n,d)$, proving the claim.
\end{proof}

\begin{proposition}\label{fermat}
$A(n,d) \le r$ if and only if there exist $n$ points in $\mathbb{C}^r$ such that the span of any $d-1$ of them is contained in $\mathbf{V}(\sum_{i=1}^r x_i^d)$, but the span of any $d$ of them is not.
\end{proposition}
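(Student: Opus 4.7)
The plan is to derive this directly from the preceding characterization \Cref{geomchar} by specializing $f$ in one direction to the Fermat form $F_r \coloneqq \sum_{i=1}^r x_i^d$ and, in the other direction, projecting along the linear forms appearing in a minimal Waring decomposition.

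First I would handle the reverse implication, which is essentially immediate. Suppose points $w_1, \ldots, w_n \in \mathbb{C}^r$ exist with the stated geometric properties. Take $m = r$ and $f = F_r$ in \Cref{geomchar}. Since $F_r = x_1^d + \cdots + x_r^d$, we have $\mathbf{R}(F_r) \le r$. The hypothesis that the span of any $d-1$ of the $w_i$'s lies in $\mathbf{V}(F_r)$ means $F_r$ vanishes identically on each such span, while the hypothesis that the span of any $d$ of them is not contained in $\mathbf{V}(F_r)$ means $F_r$ does not vanish identically on any such span. Thus $F_r$ and the points $w_1, \ldots, w_n$ satisfy the hypotheses of \Cref{geomchar}, giving $A(n,d) \le r$.

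For the forward implication, suppose $A(n,d) \le r$. By \Cref{geomchar} there exist $m \ge 1$, a form $f \in \mathcal{S}_d^m$ with $\mathbf{R}(f) \le r$, and points $v_1, \ldots, v_n \in \mathbb{C}^m$ such that $f$ vanishes on the span of every $d-1$ of them but not on the span of any $d$ of them. Fix a minimal Waring decomposition $f = \ell_1^d + \cdots + \ell_r^d$ with $\ell_i \in \mathcal{S}_1^m$, and define the linear map
\[
\Phi \colon \mathbb{C}^m \to \mathbb{C}^r, \qquad \Phi(p) = (\ell_1(p), \ldots, \ell_r(p)).
\]
Set $w_i \coloneqq \Phi(v_i)$ for $i=1,\ldots,n$. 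The key identity is $f(p) = F_r(\Phi(p))$, so $p \in \mathbf{V}(f)$ if and only if $\Phi(p) \in \mathbf{V}(F_r)$. Since $\Phi$ is linear, it sends $\mathrm{span}\{v_{i_1}, \ldots, v_{i_k}\}$ onto $\mathrm{span}\{w_{i_1}, \ldots, w_{i_k}\}$. Hence if $f$ vanishes on $\mathrm{span}\{v_{i_1}, \ldots, v_{i_{d-1}}\}$, then $F_r$ vanishes on $\mathrm{span}\{w_{i_1}, \ldots, w_{i_{d-1}}\}$, and if $f$ is not identically zero on $\mathrm{span}\{v_{i_1}, \ldots, v_{i_d}\}$, then $F_r$ is not identically zero on $\mathrm{span}\{w_{i_1}, \ldots, w_{i_d}\}$. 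This shows $w_1, \ldots, w_n$ have the desired property.

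Both directions are short, so there is no real obstacle; the only subtlety to double-check is that the ``not contained'' condition on $d$-fold spans survives the map $\Phi$, which it does because $\Phi$ is linear and its image of a span is the span of the images, so nonvanishing of $f$ somewhere in a span of $v_i$'s transfers to nonvanishing of $F_r$ somewhere in the corresponding span of $w_i$'s. No dimensionality issue arises since we are free to let $r$ be arbitrary in \Cref{geomchar}'s hypothesis; projecting via $\Phi$ lands us in exactly $\mathbb{C}^r$, which is what the statement requires.
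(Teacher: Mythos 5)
Your proof is correct and takes essentially the same route as the paper: both directions hinge on \Cref{geomchar} together with the identity $f(p) = \sum_i \ell_i(p)^d$ for a Waring decomposition $f = \sum_i \ell_i^d$. The paper handles the forward direction directly from the definition of $A(n,d)$ (taking $f \in \mathfrak{E}(n,d)$, $v_i$ the standard basis, and defining the target points as $w_j = ((\ell_1^*)_j,\ldots,(\ell_r^*)_j)$), whereas you route through \Cref{geomchar} and then push the points forward under the explicit linear map $\Phi = (\ell_1,\ldots,\ell_r)$; unwinding your argument with $v_i = e_i$ recovers the paper's construction verbatim. One tiny imprecision: you write ``a minimal Waring decomposition $f = \ell_1^d + \cdots + \ell_r^d$,'' but a minimal decomposition has $\mathbf{R}(f)$ terms, which may be strictly fewer than $r$; this is harmless since you can pad with zero linear forms (or equivalently embed $\mathbb{C}^{\mathbf{R}(f)} \hookrightarrow \mathbb{C}^r$ and note $F_r$ restricts to $F_{\mathbf{R}(f)}$ there), but it should be stated rather than glossed over.
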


\begin{proof}
If $A(n,d) \le r$, then for some $f \in \mathfrak{E}(n,d)$ and linear forms $\ell_1, \ldots, \ell_r$, $f = \sum_{i=1}^r \ell_i^d$. Let $v_j \coloneqq (({\ell_1}^*)_j, ({\ell_2}^*)_j, \ldots, ({\ell_r}^*)_j)$ for all $j \in [n]$. Since $f$ is multilinear, $\sum_{i=1}^r x_i^d$ must vanish on the span of any $d-1$ of the points $v_1, \ldots, v_n$, and since each multilinear monomial has a nonzero coefficient, $\sum_{i=1}^r x_i^d$ does not vanish on the span of any $d$ of  $v_1, \ldots, v_n$. 

Conversely, suppose that there exists such a set of points. Since $\sum_{i=1}^r x_i^d$ has rank $r$, by \Cref{geomchar} we conclude that $A(n,d) \le r$. \qedhere
\end{proof}

\begin{corollary}\label{d3case}
$5 \le A(8,3) \le A(\mathbb{N},3)$.
\end{corollary}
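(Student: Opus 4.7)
The upper bound $A(8,3)\le A(\mathbb{N},3)$ is nothing but the monotonicity statement \Cref{trivialboundsa}: one has $A(8,3)\le A(9,3)\le A(10,3)\le\cdots$, and $A(\mathbb{N},3)$ is defined to be the limit. So the content of the claim lies entirely in the lower bound $A(8,3)\ge 5$, and my plan is to prove it by contradiction using the geometric characterization in \Cref{fermat} and the classical Cayley--Salmon theorem (every smooth cubic surface in $\mathbb{P}^3$ contains exactly $27$ lines).

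Suppose for contradiction that $A(8,3)\le 4$. By \Cref{fermat} there exist $8$ points $v_1,\ldots,v_8\in\mathbb{C}^4$ such that the linear span of any two of them is contained in the Fermat hypersurface $X\coloneqq \mathbf{V}(x_1^3+x_2^3+x_3^3+x_4^3)\subset\mathbb{C}^4$, while the span of any three is not. The plan is to push this into projective space: after checking that the $v_i$ give $8$ genuinely distinct points of $\mathbb{P}^3$ with no three collinear, each pair determines a distinct projective line on the projectivized Fermat cubic, yielding $\binom{8}{2}=28$ lines, which contradicts Cayley--Salmon.

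The first step is to rule out degeneracies. If $v_i$ and $v_j$ were proportional, then for any third index $k$ the span of $\{v_i,v_j,v_k\}$ would equal the span of $\{v_i,v_k\}$, which is in $X$ by hypothesis; this violates the triple condition. Hence the $v_i$ yield $8$ distinct points in $\mathbb{P}^3$. The same argument shows no three $v_i,v_j,v_k$ can be collinear in $\mathbb{P}^3$: collinearity would mean $\mathrm{span}(v_i,v_j,v_k)=\mathrm{span}(v_i,v_j)\subseteq X$, again a contradiction. Consequently each pair spans a distinct projective line, and that line lies on the projective cubic surface cut out by $x_1^3+x_2^3+x_3^3+x_4^3$, giving exactly $28$ distinct lines on this surface.

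To close the argument I will invoke the Cayley--Salmon theorem. The Fermat cubic surface in $\mathbb{P}^3$ is smooth because its gradient $(3x_1^2,3x_2^2,3x_3^2,3x_4^2)$ vanishes only at the origin, so it supports exactly $27$ lines, contradicting the $28$ we produced. I expect the only subtlety to be the two degeneracy checks above, where one has to argue carefully that proportionality or projective collinearity of a subfamily would force a 2-plane through the origin to lie in $X$ and thereby violate the triple condition; once those are clean, the Cayley--Salmon count makes the contradiction immediate.
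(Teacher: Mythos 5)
Your proof is correct and follows exactly the same route as the paper: invoke \Cref{fermat} to get $8$ points in $\mathbb{C}^4$, observe that the triple condition forbids any three from being projectively collinear so that the $\binom{8}{2}=28$ lines are distinct, and contradict the Cayley--Salmon count of $27$ lines on the (smooth) Fermat cubic. Your write-up is in fact slightly more careful than the paper's, since you explicitly check that no two $v_i$ are proportional and that the Fermat cubic is smooth before applying Cayley--Salmon.
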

\begin{proof}
Suppose for contradiction that $A(8,3) =  4$. By \Cref{fermat}, this implies that there are 8 points in $\mathbb{C}^4$ such that the planes spanned by any two of them are contained in $\mathbf{V}(x_1^3 + x_2^3 + x_3^3 + x_4^3)$, but the span of any three of them is not. Note that this is only possible if no three points are coplanar, and hence the $\binom{8}{2} = 28$ planes spanned by any two points are distinct. But by the Cayley-Salmon theorem, $\mathbf{V}(x_1^3 + x_2^3 + x_3^3 + x_4^3)$ contains exactly $27<28$ lines in the projective space $\mathbb{CP}^3$ \cite[Lemma 11.1]{gathmann2002algebraic}, a contradiction.
\end{proof}

\begin{remark}
A similar proof fails to show that $6 \le A(\mathbb{N},3)$, as $\mathbb{P}(\mathbf{V}(x_1^3 + \cdots + x_5^3))$ contains infinitely many lines (see \cite[Exercise 11.10.b]{gathmann2002algebraic}).
\end{remark}

The $d=2$ case of \Cref{eltsrank} is solved using linear algebra.

\begin{proposition}\label{d2case}
$A(\mathbb{N},2) = 3$.
\end{proposition}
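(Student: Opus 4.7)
The plan is to reduce the problem to a statement about symmetric matrices over $\mathbb{C}$ via \Cref{matrixrk}, and then solve the resulting rank minimization problem directly.

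First I would translate the problem. By \Cref{matrixrk}, for $g \in \mathcal{S}_2^n$, $\mathbf{R}(g)$ equals the rank of the symmetric matrix $M \in \mathbb{C}^{n \times n}$ with $M_{ii} = 2 \cdot [x_i^2]g$ and $M_{ij} = [x_i x_j] g$ for $i < j$. The condition $\mathrm{supp}(g) = \mathrm{supp}(e_{n,2})$ says precisely that $M$ has zero diagonal and nonzero off-diagonal entries. So $A(n,2)$ is the minimum rank of such a matrix. Any symmetric $M$ of rank $\leq r$ over $\mathbb{C}$ factors as $M = U U^T$ with $U \in \mathbb{C}^{n \times r}$; writing $u_1, \ldots, u_n \in \mathbb{C}^r$ for the rows, the problem becomes: find the minimum $r$ such that for all sufficiently large $n$ there exist $u_1, \ldots, u_n \in \mathbb{C}^r$ with $u_i^T u_i = 0$ for every $i$ and $u_i^T u_j \neq 0$ for every $i \neq j$.

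For the upper bound $A(\mathbb{N},2) \leq 3$, I would exhibit an explicit family. Parametrize the isotropic conic in $\mathbb{C}^3$ by $u(t) \coloneqq (1 - t^2,\, i(1 + t^2),\, 2t)$, which satisfies $u(t)^T u(t) = 0$ identically. A direct computation gives
\[u(s)^T u(t) = (1-s^2)(1-t^2) - (1+s^2)(1+t^2) + 4 s t = -2(s-t)^2,\]
which is nonzero whenever $s \neq t$. Picking any $n$ distinct scalars $t_1, \ldots, t_n \in \mathbb{C}$ therefore produces the required configuration, so $A(n,2) \leq 3$ for every $n$.

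For the lower bound $A(n,2) \geq 3$ for $n \geq 3$, I would argue by contradiction: suppose $r = 2$ suffices, so that vectors $u_1, \ldots, u_n \in \mathbb{C}^2$ exist with the stated properties. The isotropic quadric $v_1^2 + v_2^2 = 0$ in $\mathbb{C}^2$ is the union of the two lines $\mathrm{span}(1, i)$ and $\mathrm{span}(1, -i)$, so each $u_k$ is a scalar multiple of one of $(1,i)$ or $(1,-i)$. Two vectors on the same isotropic line have inner product $0$, while an $(1,i)$-multiple and a $(1,-i)$-multiple have nonzero inner product. Hence the pairwise-nonorthogonality condition forces at most one index to lie on each line, giving $n \leq 2$. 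Combined with \Cref{trivialboundsa}, this yields $A(\mathbb{N},2) \geq 3$.

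The two bounds together give $A(\mathbb{N},2) = 3$. I don't anticipate a real obstacle here: once the reduction to symmetric matrix rank is made, both directions are short and the only mild subtlety is producing (or guessing) the right rational parametrization of the conic in $\mathbb{C}^3$ for the upper bound.
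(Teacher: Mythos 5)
Your proof is correct and rests on the same reduction the paper uses (\Cref{matrixrk}): $A(n,2)$ is the minimum rank of an $n\times n$ symmetric matrix over $\mathbb{C}$ with zero diagonal and nonzero off-diagonal entries. Your upper bound construction is in fact the paper's: with $t_i = i$ your Gram matrix $\bigl(-2(t_i - t_j)^2\bigr)$ is a scalar multiple of the paper's witness $\bigl((i-j)^2\bigr)$, so you have re-derived the same matrix by parametrizing the isotropic conic, which is a nicer way to see where it comes from (and is precisely the $d=2$ instance of the Fermat-hypersurface picture in \Cref{fermat}). Where you genuinely diverge is the lower bound: the paper simply observes that the principal $3\times 3$ minor is $2abc \neq 0$, whereas you factor $M = UU^T$ and argue that the isotropic quadric in $\mathbb{C}^2$ is two lines, each of which can carry at most one of the $u_i$. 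Both arguments are short and correct; yours is a touch longer but explains geometrically \emph{why} rank $2$ fails, while the paper's is a one-line determinant check.
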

\begin{proof}
It suffices by \Cref{matrixrk} to show that for $n \ge 3$, the minimum rank of a symmetric $n \times n$ matrix with zeros on the diagonal and nonzero values elsewhere is 3. There is a lower bound of 3 since the principal $3 \times 3$ minor of any such matrix is easily seen to be nonzero. An upper bound of 3 is given by the matrix $((i-j)^2)_{i,j \in [n]}$.
\end{proof}

To understand $A^\varepsilon(n,2)$ we will need the following fact:
\begin{theorem}\cite[Theorem 9.3]{alon2003problems}\label{thanksalon}
Let $B$ be an $n$-by-$n$ real matrix with $b_{i,i} = 1$ for all $i$ and $|b_{i,j}| \le \varepsilon$ for all $i \neq j$. Then if $1/\sqrt{n} \le \varepsilon < 1/2$,
\[\mathrm{rank}(B) \ge \Omega\left (\frac{\log n \cdot \varepsilon^{-2}}{\log(\varepsilon^{-1})} \right ).\]
\end{theorem}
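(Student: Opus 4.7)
\emph{Proof proposal.} The plan is to combine an entry-wise (Hadamard) power argument with a trace--Frobenius inequality in order to extract both the $\log n$ and $\varepsilon^{-2}$ factors simultaneously. Let $r = \mathrm{rank}(B)$, and for an integer $k$ to be tuned later, let $B^{\circ k}$ denote the matrix with entries $(B^{\circ k})_{ij} = B_{ij}^k$. The argument rests on three ingredients. First, writing $B = UV^\top$ with rows $u_i, v_i \in \mathbb{R}^r$, the multinomial expansion $\langle u_i, v_j\rangle^k = \sum_{|\alpha|=k}\binom{k}{\alpha} u_i^\alpha v_j^\alpha$ displays $B^{\circ k}$ as a sum of $\binom{r+k-1}{k}$ rank-one matrices, so $\mathrm{rank}(B^{\circ k}) \le \binom{r+k-1}{k}$. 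Second, $(B^{\circ k})_{ii} = 1$ and $|(B^{\circ k})_{ij}| \le \varepsilon^k$ off-diagonal give $\mathrm{tr}(B^{\circ k}) = n$ and $\|B^{\circ k}\|_F^2 \le n + n^2\varepsilon^{2k}$. Third, for any square matrix $A$ with singular value decomposition $A = \sum_i \sigma_i u_i v_i^\top$, applying Cauchy--Schwarz to $\mathrm{tr}(A) = \sum_i \sigma_i \langle u_i, v_i\rangle$ yields $|\mathrm{tr}(A)| \le \sqrt{\mathrm{rank}(A)}\cdot \|A\|_F$.

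Combining these ingredients with $A = B^{\circ k}$ produces
\[
n^2 \;\le\; \binom{r+k-1}{k}\bigl(n + n^2 \varepsilon^{2k}\bigr).
\]
Next I would set $k = \lceil \log n /(2\log \varepsilon^{-1}) \rceil$, which is at least $1$ thanks to the hypothesis $\varepsilon \ge n^{-1/2}$. This choice enforces $n\varepsilon^{2k} \le 1$, so $\|B^{\circ k}\|_F^2 \le 2n$ and the above reduces to $\binom{r+k-1}{k} \ge n/2$. Bounding $\binom{r+k-1}{k} \le \bigl(e(r+k-1)/k\bigr)^k$ and taking $k$-th roots gives $(r+k-1)/k \gtrsim n^{1/k}/e$, and the chosen $k$ satisfies $n^{1/k} = \varepsilon^{-2}$. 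Therefore $r \ge \Omega(k\varepsilon^{-2}) = \Omega(\log n \cdot \varepsilon^{-2}/\log(\varepsilon^{-1}))$, as required.

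The main obstacle is the tuning of $k$: at $k=1$ the three ingredients alone yield only $r \ge \Omega(\varepsilon^{-2})$ with no $n$-dependence, whereas naive diagonal-dominance arguments at large $k$ recover the $\log n/\log(\varepsilon^{-1})$ factor without the $\varepsilon^{-2}$ improvement. The conceptual point is that the Cauchy--Schwarz bound $\mathrm{tr}(A)^2/\|A\|_F^2$ is already $\varepsilon^{-2}$-strong at $k=1$, and one boosts its $n$-dependence by Hadamard-powering to the critical threshold at which the off-diagonal Frobenius mass $n^2 \varepsilon^{2k}$ drops below the diagonal mass $n$; this is precisely why $k$ must be chosen so that $n\varepsilon^{2k} \asymp 1$. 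The lower bound $\varepsilon \ge n^{-1/2}$ enters exactly to guarantee $k \ge 1$ at this threshold.
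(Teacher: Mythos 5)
Your three ingredients — the Hadamard-power rank bound $\mathrm{rank}(B^{\circ k}) \le \binom{r+k-1}{k}$, the computation $\mathrm{tr}(B^{\circ k}) = n$, $\|B^{\circ k}\|_F^2 \le n + n^2\varepsilon^{2k}$, and the trace--Frobenius inequality $|\mathrm{tr}(A)| \le \sqrt{\mathrm{rank}(A)}\,\|A\|_F$ — are exactly the right ones and this is essentially Alon's argument. The gap is in the tuning of $k$, and it is not cosmetic.

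You set $k = \lceil K\rceil$ with $K = \log n/(2\log\varepsilon^{-1}) \ge 1$, and then assert "$n^{1/k} = \varepsilon^{-2}$." That identity only holds for $k = K$ exactly. With the ceiling you have $K \le k < K+1$, so $n^{1/k}$ can be as small as $n^{1/(K+1)} = \varepsilon^{-2K/(K+1)}$. When $K$ is slightly above $1$ this is roughly $\varepsilon^{-1}$, not $\varepsilon^{-2}$, and the resulting bound degrades to $\Omega(\log n\cdot\varepsilon^{-1}/\log\varepsilon^{-1})$. Concretely, at $\varepsilon = n^{-0.45}$ you get $K \approx 1.1$, $k = 2$, and your chain yields $r = \Omega(\sqrt{n})$, whereas the theorem demands $\Omega(\varepsilon^{-2}) = \Omega(n^{0.9})$. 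Your own diagnostic sentence — "$k$ must be chosen so that $n\varepsilon^{2k} \asymp 1$" — is correct, but the ceiling violates it: it only guarantees $n\varepsilon^{2k} \le 1$, and the quantity can be as small as $n^{-1/K}$, far below constant order.

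The fix is to take $k = \lfloor K\rfloor$, which gives $K-1 < k \le K$ and hence $1 \le n\varepsilon^{2k} < \varepsilon^{-2}$. In this regime the off-diagonal mass dominates, so one should bound $n + n^2\varepsilon^{2k} \le 2n^2\varepsilon^{2k}$, giving $\binom{r+k-1}{k} \ge 1/(2\varepsilon^{2k})$. Then the same binomial estimate gives
\[
r + k - 1 \;\ge\; \frac{k}{e}\left(\frac{1}{2\varepsilon^{2k}}\right)^{1/k} \;=\; \frac{k}{e}\cdot\frac{\varepsilon^{-2}}{2^{1/k}} \;\ge\; \frac{k\,\varepsilon^{-2}}{2e},
\]
which yields $r = \Omega(k\varepsilon^{-2}) = \Omega(\log n\cdot\varepsilon^{-2}/\log\varepsilon^{-1})$ uniformly (using $\varepsilon < 1/2$ to absorb the $-k+1$ term). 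So the overall route is correct, but the ceiling must become a floor together with the corresponding flip in how you bound $n+n^2\varepsilon^{2k}$.
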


\begin{proposition}\label{aepsilon2}
\begin{propenum}
\item If $1/\sqrt{n} \le \varepsilon < 1/2$, \label{aepsilon2a}
\[A^\varepsilon(n,2) \ge \Omega\left (\frac{\log n \cdot \varepsilon^{-2}}{\log(\varepsilon^{-1})} \right ).\]
\item For all $\varepsilon>0$, 
\[A^\varepsilon(n,2) \le O\left (\log n \cdot \varepsilon^{-2} \right ).\]\label{aepsilon2b}
\end{propenum}
\end{proposition}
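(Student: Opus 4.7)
By \Cref{matrixrk}, for any quadratic form $g \in \mathcal{S}_2^n$ the Waring rank $\mathbf{R}(g)$ coincides with the ordinary rank of its associated symmetric coefficient matrix $M$, so $A^\varepsilon(n,2)$ equals the minimum rank, over real symmetric $M \in \mathbb{R}^{n \times n}$ with $M_{ii} = 0$ and $2 M_{ij} \in [1-\varepsilon, 1+\varepsilon]$ for $i \neq j$, of $M$. The unifying device for both bounds is the rank-one shift $B \coloneqq J - 2M$, where $J$ denotes the all-ones matrix: one checks directly that $B_{ii} = 1$ and $|B_{ij}| \le \varepsilon$ for $i \neq j$, which is exactly the hypothesis of \Cref{thanksalon}, and that $\lvert \mathrm{rank}(B) - \mathrm{rank}(M) \rvert \le 1$ since $\mathrm{rank}(J) = 1$.

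For part (a), given any $g \in \mathfrak{E}^\varepsilon(n,2)$ achieving the minimum, I would take its Gram matrix $M$ and form $B = J - 2M$. Since we are in the range $1/\sqrt{n} \le \varepsilon < 1/2$, $B$ fits the hypotheses of \Cref{thanksalon}, giving $\mathrm{rank}(B) \ge \Omega(\log n \cdot \varepsilon^{-2}/\log(\varepsilon^{-1}))$; absorbing the contribution of the rank-one $J$ transfers the same lower bound to $\mathrm{rank}(M) = \mathbf{R}(g)$, and hence to $A^\varepsilon(n,2)$.

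For part (b), I would instead build $B$ first and recover $M$ from it. The plan is to draw $v_1, \ldots, v_n$ independently and uniformly from $\{\pm 1/\sqrt{r}\}^r$ with $r = \lceil C \varepsilon^{-2} \log n \rceil$ for a suitable absolute constant $C$; each has unit Euclidean norm, and by Hoeffding's inequality together with a union bound over the $\binom{n}{2}$ pairs some realization satisfies $|v_i \cdot v_j| \le \varepsilon$ for all $i \neq j$. Letting $V$ be the $r \times n$ matrix with these columns, $B \coloneqq V^\top V$ has the required diagonal-$1$, $\varepsilon$-off-diagonal form and rank at most $r$; setting $M \coloneqq (J - B)/2$ yields a real symmetric matrix with zero diagonal, off-diagonal entries in $(1 \pm \varepsilon)/2$, and rank at most $r + 1 = O(\log n \cdot \varepsilon^{-2})$, whose associated quadratic form exhibits the claimed bound. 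No genuine obstacle arises beyond identifying the rank-one shift $B = J - 2M$ as the correct gadget to match both the $\varepsilon$-rank lower bound framework of Alon and a random-projection upper bound; the rest is a routine calculation.
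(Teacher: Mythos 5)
Your proposal is correct and takes essentially the same route as the paper: both reduce via \Cref{matrixrk} to minimizing the rank of a symmetric matrix with zero diagonal and near-constant off-diagonal entries, use the rank-one shift by $J$ to put the matrix in the form required by \Cref{thanksalon} for the lower bound, and construct a low-rank Gram matrix of nearly-orthogonal unit vectors for the upper bound. The only cosmetic difference is that you build the unit vectors explicitly from random signs plus Hoeffding's inequality and a union bound, whereas the paper invokes the Johnson--Lindenstrauss Lemma; the two are interchangeable.
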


\begin{proof}
It follows from \Cref{matrixrk} that $A^\varepsilon(n,2)$ is the minimum rank among all real symmetric matrices $A$ with $A_{i,i} = 0$ and $A_{i,j} \in [1-\varepsilon,1+\varepsilon]$ for all $i \neq j$. Note that given any such $A$, the matrix $J - A$ (where $J$ denotes the all-ones matrix) has diagonal entries equal to 1, off-diagonal entries bounded in absolute value by $\varepsilon$, and rank at most rank$(A)+1$. Conversely, given any symmetric matrix $B$ with $b_{i,i} = 1$ for all $i$ and $|b_{i,j}| \le \varepsilon$ for all $i \neq j$, the matrix $J-B$ has zeros on the diagonal, off-diagonal entries in the range $[1-\varepsilon,1+\varepsilon]$, and rank at most $\mathrm{rank}(B)+1$. So it suffices to determine the minimum rank of such a matrix $B$. Given this observation, (a) is immediate from \Cref{thanksalon}. 

To show (b), let $m \coloneqq O(\log n/\varepsilon^2)$. By the Johnson-Lindenstrauss Lemma, there exist unit vectors $v_1, \ldots, v_n \in \mathbb{R}^m$ such that $|v_i \cdot v_j| \le \varepsilon$ for all $i \neq j$. It follows that the matrix $(v_i^T \cdot v_j)_{i,j \in [n]}$ has the desired properties and rank at most $m$.
\end{proof}
\begin{corollary}\label{aepsiloninf}
For all $0 < \varepsilon < 1/2$ and $d \ge 2$, $A^\varepsilon(\mathbb{N},d) = \infty$.
\end{corollary}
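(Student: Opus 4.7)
The plan is a quick bootstrap from the $d=2$ case, which we have already essentially handled in \Cref{aepsilon2a}, to arbitrary $d \ge 2$ via the monotonicity statement of \Cref{trivialboundsb}.

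First I would dispense with $d=2$. The lower bound of \Cref{aepsilon2a} says that for $1/\sqrt{n} \le \varepsilon < 1/2$,
\[
A^\varepsilon(n,2) \ge \Omega\!\left(\frac{\log n \cdot \varepsilon^{-2}}{\log(\varepsilon^{-1})}\right).
\]
For any fixed $\varepsilon \in (0, 1/2)$, once $n$ is large enough we have $\varepsilon \ge 1/\sqrt{n}$, so the bound applies for all sufficiently large $n$ and the right-hand side tends to infinity as $n \to \infty$. Hence $A^\varepsilon(\mathbb{N}, 2) = \infty$.

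Next I would lift this to general $d \ge 2$ using \Cref{trivialboundsb} (in its $A^\varepsilon$ form), which states $A^\varepsilon(m, k) \le A^\varepsilon(m+1, k+1)$. Iterating $d-2$ times gives, for every $n \ge d$,
\[
A^\varepsilon(n-d+2,\, 2) \;\le\; A^\varepsilon(n-d+3,\, 3) \;\le\; \cdots \;\le\; A^\varepsilon(n, d).
\]
Letting $n \to \infty$ and using the $d=2$ case yields $A^\varepsilon(\mathbb{N}, d) = \infty$, which is the desired conclusion.

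There is essentially no obstacle here, as both ingredients are already in hand. The only mild subtlety is that the lower bound in \Cref{aepsilon2a} has the hypothesis $\varepsilon \ge 1/\sqrt{n}$, but for fixed $\varepsilon$ this is satisfied for all but finitely many $n$, which is harmless when taking $n \to \infty$.
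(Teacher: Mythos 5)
Your proposal is correct and takes essentially the same approach as the paper: establish $A^\varepsilon(\mathbb{N},2)=\infty$ from \Cref{aepsilon2a} and then lift to $d\ge 2$ via the monotonicity $A^\varepsilon(n,d)\le A^\varepsilon(n+1,d+1)$ from \Cref{trivialboundsb}. The paper phrases the second step as a contradiction while you iterate the inequality directly, but these are the same argument.
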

\begin{proof}
Fix $0 < \varepsilon < 1/2$. By \Cref{aepsilon2a}, $A^\varepsilon(n,2)  \ge \Omega\left (\frac{\log n \cdot \varepsilon^{-2}}{\log(\varepsilon^{-1})} \right )$ for all $n \ge \varepsilon^{-2}$, and so $A^\varepsilon(\mathbb{N},2) = \infty$. Now suppose that $A^\varepsilon(\mathbb{N},d)$ is bounded above for some $d > 2$. Then by \Cref{trivialbounds}, for all $n$
\[A^\varepsilon(n,2) \le A^\varepsilon(n+d-2,d) \le A^\varepsilon(\mathbb{N},d),\]
a contradiction.
\end{proof}
\subsection{Upper Bounds via the Determinant}\label{sec3.2}
The relevance of the determinant to \Cref{eltsrank} is immediate from \Cref{geomchar}. The obvious but key observation is that for all $n, d$ with $n \ge d$, a generic set of $n$ rank-1 $d \times d$ matrices has the property that the sum of any $d$ of them is invertible, and hence the span of any $d-1$ of them is contained in $\mathbf{V}(\det_d)$ but the span of any $d$ of them is not. Applying \Cref{geomchar}, we conclude that $A(n,d) \le \mathbf{R}(\det_d)$. We now make this more explicit.
\begin{definition}
Let $d \le n$. For $A,B \in \mathbb{C}^{d \times n}$, let 
\begin{equation}\label{detsum}
g_{A,B} \coloneqq \sum_{\alpha \in \{0,1\}^n_d} \text{det}_d(A_\alpha B_\alpha)  x^\alpha.
\end{equation}
\end{definition}

\begin{proposition}\label{detgen}
For all $A ,B \in \mathbb{C}^{d \times n}$, 
\[\mathbf{R}(g_{A,B}) \le \mathbf{R}(\mathrm{det}_d) \le (5/6)^{\lfloor d/3 \rfloor} 2^{d-1} d!.\]
Furthermore, $A^+(\mathbb{N},d) \le \mathbf{R}(\mathrm{det}_d) \le (5/6)^{\lfloor d/3 \rfloor} 2^{d-1} d!$ and $A^+(\mathbb{N},d)$ exists.
\end{proposition}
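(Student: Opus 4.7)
The proposition bundles three claims, and I would treat them in the order: (i) $\mathbf{R}(g_{A,B}) \le \mathbf{R}(\det_d)$; (ii) the explicit bound $\mathbf{R}(\det_d) \le (5/6)^{\lfloor d/3 \rfloor} 2^{d-1} d!$; (iii) $A^+(\mathbb{N},d) \le \mathbf{R}(\det_d)$, together with existence of the limit.

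For (i), the plan is to recognize $g_{A,B}$ as a specialization of $\det_d$ via the Cauchy--Binet formula. Concretely, form the $d \times d$ matrix $M(x) \coloneqq A \cdot \mathrm{diag}(x_1,\dots,x_n) \cdot B^T$, whose entries are linear forms in $x_1,\dots,x_n$. Applying Cauchy--Binet,
\[
\det M(x) \;=\; \sum_{\alpha \in \{0,1\}^n_d} \det(A_\alpha)\,\det(B_\alpha)\, x^\alpha \;=\; g_{A,B}(x),
\]
using multiplicativity of the determinant to identify $\det(A_\alpha)\det(B_\alpha) = \det(A_\alpha B_\alpha)$. Thus $g_{A,B}$ is obtained from $\det_d \in \mathcal{S}_d^{d^2}$ by substituting linear forms in $x_1,\dots,x_n$ for the $d^2$ matrix entries; any Waring decomposition $\det_d = \sum_i c_i \ell_i^d$ pulls back to a decomposition of $g_{A,B}$ of the same length, yielding the rank bound.

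For (ii), I would build on the bound $\mathbf{R}(x_1 \cdots x_d) \le 2^{d-1}$ from \cite{ranestad2011rank}. The Leibniz expansion writes $\det_d$ as a signed sum of $d!$ multilinear monomials in disjoint variable sets (across rows), and substituting the optimal Waring decomposition for each monomial gives the crude bound $\mathbf{R}(\det_d) \le 2^{d-1} d!$. The improvement factor $(5/6)^{\lfloor d/3 \rfloor}$ comes from exploiting the classical fact $\mathbf{R}(\det_3) \le 5$ (in place of the generic $2^{2} \cdot 3! = 24$ that the monomial-by-monomial argument would give for a $3 \times 3$ block). The idea is to partition the row index set $[d]$ into $\lfloor d/3 \rfloor$ triples plus a residual, group the permutations in the Leibniz expansion according to how they restrict to each triple, and within each triple use the identity expressing the $3\times 3$ Leibniz expansion using a rank-5 Waring form rather than $24$ summands; this replaces a factor of $24$ by $20 = (5/6)\cdot 24$ per triple. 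The step I expect to be the most delicate is making sure the grouping really multiplies rank-wise: one needs to verify that the replacement can be applied independently in each triple while preserving the outer sum over the remaining permutation structure, which amounts to checking that the appropriate ``block-polynomial'' rewriting of $\det_d$ is valid.

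For (iii), I would instantiate part (i) with a matrix $A$ of generic position so that $g_{A,A}$ has strictly positive coefficients. Specifically, let $A \in \mathbb{C}^{d \times n}$ be a Vandermonde-type matrix $A_{ij} = t_j^{i-1}$ with distinct real $t_1,\dots,t_n$; then every $d \times d$ submatrix $A_\alpha$ has a nonzero Vandermonde determinant, so
\[
g_{A,A} \;=\; \sum_{\alpha \in \{0,1\}^n_d} \det(A_\alpha)^2 \, x^\alpha
\]
lies in $\mathfrak{E}^+(n,d)$. Part (i) gives $\mathbf{R}(g_{A,A}) \le \mathbf{R}(\det_d)$, so $A^+(n,d) \le \mathbf{R}(\det_d)$ for every $n \ge d$. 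Finally, by \Cref{trivialboundsa} the sequence $A^+(n,d)$ is nondecreasing in $n$, and we have just shown it is bounded above by the finite quantity $\mathbf{R}(\det_d)$ (finite by \Cref{gendualc}). Hence the limit $A^+(\mathbb{N},d)$ exists and satisfies the stated bound, with the combined upper bound from (ii) following immediately.
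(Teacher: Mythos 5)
Your parts (i) and (iii) follow the paper's proof essentially verbatim: Cauchy--Binet gives $g_{A,B} = \det_d(A\,\mathrm{diag}(x)\,B^T)$, hence $\mathbf{R}(g_{A,B}) \le \mathbf{R}(\det_d)$; and taking $A = B$ Vandermonde makes all the coefficients $\det(A_\alpha)^2 > 0$, which combined with the monotonicity from \Cref{trivialboundsa} gives existence of $A^+(\mathbb{N},d)$ and the bound. That is exactly what the paper does.

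Where you diverge is part (ii): the paper simply cites the bound $\mathbf{R}(\det_d) \le (5/6)^{\lfloor d/3 \rfloor} 2^{d-1} d!$ from Teitler, whereas you try to reprove it. Your sketch is internally inconsistent as written: you invoke $\mathbf{R}(\det_3) \le 5$, but then assert that this replaces the naive per-triple cost of $24$ by $20 = (5/6)\cdot 24$, which would instead correspond to a fictitious bound $\mathbf{R}(\det_3)\le 20$. If you actually used $\mathbf{R}(\det_3)\le 5$ on each triple (via iterated Laplace expansion into disjoint $3\times 3$ blocks, paying the cost $\binom{3k}{3,\dots,3}=d!/6^{k}$ for the ordered partitions, and then paying $\mathbf{R}(x_1^3\cdots x_k^3)=4^{k-1}$ by Ranestad--Schreyer to linearize the product of cubes), you would in fact get the stronger bound $\tfrac{d!}{6^k}\,5^k\,4^{k-1}$, not the quoted one. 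So the gap is real but harmless here: the argument you gesture at is not the one behind the constant $(5/6)^{\lfloor d/3\rfloor}$, and it is unnecessary anyway since the statement is standard to cite. You should either cite the bound outright, or if you insist on a self-contained proof, carry out the Laplace-expansion bookkeeping carefully and note that it actually improves on the quoted constant.
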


\begin{proof}
Let $X = \mathrm{diag}(x_1, \ldots, x_n)$. By the Cauchy-Binet formula it follows that $\det_d((A\cdot~X) \cdot B^T)~=~g_{A,B}$. The first statement then follows from the fact that $\mathbf{R}(\mathrm{det}_d) \le  (5/6)^{\lfloor d/3 \rfloor} 2^{d-1} d!$ \cite[Example 1.14]{teitler2014geometric}.

Note that by taking $A$ and $B$ to have positive minors\footnote{For instance, by taking the columns of $A$ and $B$ to be given by real Vandermonde vectors.}, $g_{A,B} \in \mathfrak{E}^+(n,d)$. This shows that $A^+(\mathbb{N},d) \le \mathbf{R}(\det_d)$. Since \Cref{trivialboundsa} shows that $(A^+(n,d))_n$ is nondecreasing, it follows that the limit $A^+(\mathbb{N},d)$ exists.
\end{proof}

\begin{remark}
The asymptotically best known lower bound on $\mathbf{R}(\det_d)$ is $\binom{d}{\lfloor d/2 \rfloor}^2$, which follows from the method of partial derivatives \cite{gurvits} \cite[Theorem 9.3.2.1]{landsberg2012tensors}. Therefore one cannot hope to improve the upper bound given by \Cref{detgen} exponentially beyond $4^d$ by finding a better upper bound on the Waring rank of the determinant.
\end{remark}

\begin{definition}
Let $h_d \in \mathcal{S}_d^{2d-1}$ be the determinant of a symbolic Hankel matrix (that is, the determinant of the $d \times d$ matrix whose $(i,j)$th entry is the variable $x_{i+j}$).
\end{definition}

\begin{theorem}\label{catupper}
\[A^+(\mathbb{N},d) \le \mathbf{R}(h_d) \le \binom{3d-2}{d} < 6.75^d.\]
\end{theorem}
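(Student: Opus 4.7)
The plan is to produce a positive element of $\mathfrak{E}^+(n,d)$ by substituting rank-$1$ Hankel linear forms into $h_d$, and then to bound $\mathbf{R}(h_d)$ by the trivial dimension bound.

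For the left inequality $A^+(\mathbb{N},d) \le \mathbf{R}(h_d)$, the idea is to exploit the rational normal curve parametrization of rank-$1$ symmetric Hankel matrices. Pick $n$ distinct real numbers $t_1,\dots,t_n$ and let $v(t) \coloneqq (1,t,t^2,\dots,t^{d-1})^T$, so that the $d\times d$ matrix $v(t)v(t)^T$ is a rank-$1$ Hankel matrix whose $(i,j)$-entry is $t^{i+j-2}$. Identifying the $2d-1$ Hankel variables of $h_d$ with $x_1,\dots,x_{2d-1}$ (reindexed so that the $(i,j)$-entry of the symbolic matrix is $x_{i+j-1}$), each entry $t_k^{i+j-2}$ is a specific scalar, so $v(t_k)v(t_k)^T$ determines $d^2$ linear forms $L_k^{(i,j)}$ in the variables $x_1,\dots,x_n$ when we set the $(i,j)$-entry equal to $\sum_{k=1}^n x_k t_k^{i+j-2}$. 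Substituting these $2d-1$ linear forms into $h_d$ yields
\[
g \coloneqq h_d\Bigl(\sum_{k=1}^n x_k L_k\Bigr) = \det\Bigl(\sum_{k=1}^n x_k\, v(t_k) v(t_k)^T\Bigr) = \det\bigl(V\,\mathrm{diag}(x)\,V^T\bigr),
\]
where $V$ is the $d\times n$ Vandermonde-type matrix with columns $v(t_1),\dots,v(t_n)$. By the Cauchy--Binet formula,
\[
g = \sum_{S \in \binom{[n]}{d}} \det(V_S)^2\, x^S = \sum_{S \in \binom{[n]}{d}} \Bigl(\prod_{i<j,\,i,j \in S}(t_j-t_i)\Bigr)^{\!2} x^S.
\]
Because the $t_k$ are distinct reals, every coefficient is strictly positive, so $g \in \mathfrak{E}^+(n,d)$. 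Substituting linear forms into a polynomial never increases its Waring rank, so $\mathbf{R}(g) \le \mathbf{R}(h_d)$, giving $A^+(n,d) \le \mathbf{R}(h_d)$ for every $n\ge d$, hence $A^+(\mathbb{N},d) \le \mathbf{R}(h_d)$.

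For the right inequality $\mathbf{R}(h_d)\le \binom{3d-2}{d}$, observe that $h_d$ is a homogeneous polynomial of degree $d$ in the $2d-1$ Hankel variables, i.e.\ $h_d \in \mathcal{S}_d^{2d-1}$. Applying the dimension bound in \Cref{gendualc} to $h_d$ gives
\[
\mathbf{R}(h_d) \le \dim \mathcal{S}_d^{2d-1} = \binom{(2d-1)+d-1}{d} = \binom{3d-2}{d}.
\]
Finally, the asymptotic $\binom{3d-2}{d} < 6.75^d$ follows from Stirling: $\binom{3d-2}{d} \le \binom{3d}{d} = \frac{(3d)!}{d!(2d)!} = \frac{27^d}{4^d}\cdot O(1/\sqrt{d}) = 6.75^d\cdot O(1/\sqrt{d})$, and one checks the small cases $d=1,2,3$ by hand ($1,6,35$ against $6.75,45.56,307.5$).

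The only non-routine step is the first one, and the key idea is recognizing that evaluating $h_d$ on the cone over the rational normal curve of rank-$1$ Hankel matrices turns the Cauchy--Binet expansion into a sum-of-squares of Vandermonde determinants, automatically producing the positivity required for $\mathfrak{E}^+(n,d)$. The rest is the generic Waring rank bound and a Stirling estimate.
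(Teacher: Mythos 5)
Your proof is correct and follows essentially the same route as the paper's: substitute the rank-$1$ Hankel matrices $v(t_k)v(t_k)^T$ coming from the rational normal curve, expand via Cauchy--Binet to get strictly positive squared Vandermonde determinants as coefficients, and finish with the generic dimension bound $\dim \mathcal{S}_d^{2d-1}=\binom{3d-2}{d}$ and Stirling. The only cosmetic difference is that the paper builds the Hankel matrix as $A\,\mathrm{diag}(x)\,A^T$ first and then recognizes it as an instance of $h_d$, whereas you begin from $h_d$ and substitute; the underlying computation is identical.
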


\begin{proof}
Let $a_1, a_2, \ldots, a_n$ be distinct elements of $\mathbb{R}$, let $A = (a_i^{j-1})_{i \in [n], j \in [d]} \in \mathbb{C}^{d \times n}$, and let $X = \mathrm{diag}(x_1, \ldots, x_n)$. By the Cauchy-Binet formula,
\[\text{det}_d((A \cdot X) \cdot A^T) = g_{A,A} = \sum_{\alpha \in \{0,1\}^n_d} \text{det}_d(A_\alpha A_\alpha)  x^\alpha= \sum_{\alpha \in \{0,1\}^n_d} \text{det}_d(A_\alpha)^2  x^\alpha.\]

Since $A$ is a Vandermonde matrix, $\det_d(A_\alpha)^2 > 0$ for all $\alpha \in \{0,1\}^n_d$. Hence $g_{A,A} \in \mathfrak{E}^+(n,d)$. Now observe that $(A \cdot X) \cdot A^T$ is a Hankel matrix; explicitly, it equals 
\[\sum_{i=1}^n (1,a_i^1, \ldots, a_i^{d-1})^T (1,a_i^1, \ldots, a_i^{d-1}) x_i.\]
Therefore $\text{det}_d(A X A^T) = h_d(A X A^T)$, and so $A^+(\mathbb{N},d) \le \mathbf{R}(h_d)$. Since $h_d$ is a degree-$d$ polynomial in $2d-1$ variables, by the dimension bound of \Cref{gendualc} we have that $\mathbf{R}(h_d) \le  \binom{3d-2}{d}$, and therefore $A^+(\mathbb{N},d) \le \binom{3d-2}{d}$. The theorem follows from Stirling's approximation.
\end{proof}
\begin{remark}\label{bestmax}
The above theorem can be slightly improved by using the state-of-the-art bound \cite{jelisiejew2013upper} on the maximum Waring rank in $\mathcal{S}_d^n$ of
\[\binom{n+d-2}{d-1} - \binom{n+d-6}{d-3},\]
valid when $n,d \ge 3$, which shows that
\[A^+(n,d) \le \mathbf{R}(h_d) \le \binom{3d-3}{d-1} - \binom{3d-7}{d-3}.\]
\end{remark}

It follows from \Cref{derivsopt} that the lower bound on $\mathbf{R}(h_d)$ given by the method of partial derivatives is at most $\binom{\lceil 5d/2 \rceil -1}{\lceil d/2 \rceil } < 3.5^d$. The next theorem shows that the actual lower bound obtained by the method of partial derivatives is exponentially worse than this.

\begin{theorem}\label{catlower}
For all integers $d,u,v>0$ such that $u+v = d$, 
\[\mathrm{rank}(Cat_{h_d}(u,v)) \le \binom{\lceil 3d/2 \rceil }{\lfloor d/2 \rfloor} < 2.6^d.\]
\end{theorem}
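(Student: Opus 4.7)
\textit{Proof proposal.} The plan is to exploit the $\mathrm{SL}_2$-symmetry of $h_d$. Under the identification of $\mathcal{S}_1^{2d-1}$ with the irreducible $\mathrm{SL}_2$-representation $V_{2d-2} \cong \mathrm{Sym}^{2d-2}(\mathbb{C}^2)$ via $x_k \leftrightarrow \binom{2d-2}{k-1}^{-1/2} s^{2d-1-k} t^{k-1}$, the Hankel determinant $h_d$ is (a nonzero scalar multiple of) the unique $\mathrm{SL}_2$-invariant of lowest degree in $\mathrm{Sym}^d(V_{2d-2})$ --- this is the classical fact that $\mathbf{V}(h_d)$ is the $(d-1)$-st secant variety of the rational normal curve, which is $\mathrm{SL}_2$-invariant. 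Consequently, for every pair $(u,v)$ with $u+v=d$, the catalecticant $\mathit{Cat}_{h_d}(u,v) : \mathrm{Sym}^u(V_{2d-2}) \to \mathrm{Sym}^v(V_{2d-2})$ is $\mathrm{SL}_2$-equivariant, so by Schur's lemma its image is an $\mathrm{SL}_2$-subrepresentation.

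First I would use the Cayley--Sylvester plethysm (equivalently, Hermite reciprocity $\mathrm{Sym}^u V_{2d-2} \cong \mathrm{Sym}^{2d-2} V_u$) to decompose the source and target into irreducibles $V_{2j}$, recording multiplicities $m_j^{(u)}$ and $m_j^{(v)}$. By Schur's lemma the contribution of the $V_{2j}$-isotypic component to the rank is at most $(2j+1)\cdot\min(m_j^{(u)}, m_j^{(v)})$.

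The main step is a vanishing argument: for $j$ above a certain threshold, the $V_{2j}$-component lies in the kernel. Here is where the specific form of $h_d$ enters --- a highest-weight vector of $V_{2j}$ inside $\mathrm{Sym}^u(V_{2d-2})$ has torus weight $2j$, and applying it as a differential operator to the $\mathrm{SL}_2$-invariant $h_d$ lowers the torus weight, so the image lies in the weight $-2j$ part of $\mathrm{Sym}^v(V_{2d-2})$. Since all torus weights in $\mathrm{Sym}^v(V_{2d-2})$ lie in $[-v(2d-2),v(2d-2)]$, values of $j$ beyond the cutoff where non-trivial relations among minors of the Hankel matrix force cancellation are killed; the critical threshold is where the "middle" of the weight polygon matches $\lfloor d/2\rfloor$ block patterns in the Hankel expansion of $h_d$. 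Summing $(2j+1) m_j^{(u)}$ over the surviving $j$ should collapse, via a generating-function/Molien identity for $\mathrm{SL}_2$-characters, to $\dim\mathrm{Sym}^{\lfloor d/2\rfloor}(\mathbb{C}^{\lceil 3d/2\rceil - \lfloor d/2\rfloor + 1}) = \binom{\lceil 3d/2\rceil}{\lfloor d/2\rfloor}$. The case $u < \lfloor d/2\rfloor$ is automatic from the trivial dimension bound $\dim\mathcal{S}_u^{2d-1} = \binom{2d+u-2}{u}$, which is already at most $\binom{\lceil 3d/2\rceil}{\lfloor d/2\rfloor}$, so only the extremal case $u = \lfloor d/2\rfloor$, $v = \lceil d/2\rceil$ (and its transpose) needs the full representation-theoretic argument.

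The hardest step will be identifying the exact cutoff threshold and verifying the combinatorial identity that the multiplicities of the surviving isotypics sum to $\binom{\lceil 3d/2\rceil}{\lfloor d/2\rfloor}$; this is where the specific structure of the Hankel determinant (as opposed to a generic $\mathrm{SL}_2$-invariant of the same bidegree) is essential, and I would expect either a direct highest-weight-vector computation or an inductive reduction on $d$ using the Laplace expansion of $h_d$ along its first row to supply the missing relations among the $(d-u)\times(d-u)$ Hankel minors that span the image.
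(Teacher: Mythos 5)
Your proposed route — via $\mathrm{SL}_2$-equivariance of the Hankel catalecticant, Hermite reciprocity, and a Schur-lemma bound on ranks of isotypic blocks — is genuinely different from what the paper does, but as written it has gaps that prevent it from yielding the stated bound.

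The paper's proof is entirely elementary: it pulls back $h_d$ to the Vandermonde form $g_{A,A}$ on $\mathbb{C}^{2d-1}$, observes that $Cat_{g_{A,A}}(u,v)$ is supported on multilinear index pairs $(I,J)$ with $I\cap J=\emptyset$, factors that submatrix as $D_1 Q D_2$ with invertible diagonals and $Q_{IJ}=\prod_{i\in I, j\in J}(a_i-a_j)^2$, and then bounds $\operatorname{rank}(Q)$ by writing $Q_{IJ}$ as a sum of at most $\binom{2v+u}{u}$ separable terms — the count being the number of partitions with at most $u$ parts each of size at most $2v$. No representation theory is invoked. Your plan buys conceptual clarity (the invariance of $h_d$ under $\mathrm{SL}_2$ is the "reason" its catalecticant has small rank) but trades away the concreteness that actually finishes the argument.

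Two steps in your proposal fail as stated. First, the torus-weight vanishing argument is vacuous in the regime you need it. The weights appearing in $\mathrm{Sym}^u(V_{2d-2})$ range over $[-u(2d-2),\,u(2d-2)]$ and those in $\mathrm{Sym}^v(V_{2d-2})$ over $[-v(2d-2),\,v(2d-2)]$; with $u\le v$ every weight $2j$ realizable on the source side is also realizable on the target side, so the bound "weight must lie in $[-v(2d-2),v(2d-2)]$" never eliminates an isotypic component. Without an actual cutoff the Schur-lemma bound reverts to $\sum_j (2j+1)\min(m_j^{(u)},m_j^{(v)})$, which in the symmetric case $u=v=d/2$ is the full dimension $\binom{5d/2-2}{d/2}\approx 3.5^d$, strictly worse than the target $\binom{\lceil 3d/2\rceil}{\lfloor d/2\rfloor}\approx 2.6^d$. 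So the vanishing mechanism must come from the internal structure of $h_d$ itself, and you acknowledge this is exactly the step you haven't supplied. Second, your claim that the case $u<\lfloor d/2\rfloor$ is "automatic from the trivial dimension bound" is false: for $d=8$, $u=3$ we have $\dim\mathcal{S}_3^{15}=\binom{17}{3}=680$, while $\binom{\lceil 3d/2\rceil}{\lfloor d/2\rfloor}=\binom{12}{4}=495$; for $d=10$, $u=4$ the gap is larger still ($\binom{22}{4}=7315$ vs.\ $\binom{15}{5}=3003$). So every value of $u$ other than the trivially small ones requires the missing argument, not just the extremal one. (The paper in fact proves the sharper pointwise bound $\binom{2v+u}{u}$ for each $u\le v$, which \emph{does} beat the naive dimension for all $u$; any successful repair of your approach would have to recover something of that strength.)
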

\begin{proof}
First note that if $A = \mathrm{Vandermonde}(a_1, \ldots, a_n; d)= (a_i^{j-1}) \in \mathbb{C}^{d \times n}$ with $a_1, \ldots, a_n$ distinct, $g_{A,A}$ equals $h_d$ up to a change of variables. This implies that $\mathrm{rank}(Cat_{h_d}(u,v)) = \mathrm{rank}(Cat_{g_{A,A}}(u,v))$. So we will equivalently work with $f \coloneqq g_{A,A}$. Furthermore we assume that $u \le v$; this is without loss of generality as $Cat_f(u,v) = Cat_f(v,u)^T$. We will then show that $\mathrm{rank}(Cat_f(u,v)) \le m \coloneqq \binom{2v+u}{u}$. As this is maximized when $u = \lfloor d/2 \rfloor$, $v = \lceil d/2 \rceil$, the theorem follows.

The matrix $Cat_f(u,v)$ has rows indexed by monomials $x^{\alpha}$, where $\alpha \in \mathbb{N}^{2d-1}_u$, and columns indexed by monomials $x^\beta$, where $\beta \in \mathbb{N}^{2d-1}_v$. Because $f$ is multilinear, the entries in a row indexed by a non-multilinear monomial $x^\alpha$ will be zero, as $x^\alpha$ annihilates $f$ under differentiation. Similarly, any column indexed by a non-multilinear monomial will have all entries equal to zero. Therefore it suffices to consider the submatrix $M$ of  $Cat_f(u,v)$ indexed by multilinear monomials. We identify the row/column corresponding to $x^\alpha$ with the set $\mathrm{supp}(\alpha) \subseteq [2d-1]$. 

Note that $M_{IJ}$ (the entry of $M$ at row $I$ and column $J$) equals 0 if $I$ and $J$ have a nonempty intersection, and equals $\prod_{i \neq j \in I \cup J} (a_i - a_j)^2$ otherwise. Hence the row indexed by $I$ is a multiple of $\prod_{i \neq j \in I} (a_i - a_j)^2$, and similarly the column indexed by $J$ is a multiple of $\prod_{i \neq j \in J} (a_i - a_j)^2$. Therefore $M = D_1 Q D_2$ for some invertible (diagonal) matrices $D_1$ and $D_2$, and so it suffices to upper bound the rank of $Q$. 

Next, observe that $Q_{IJ} = \prod_{i \in I, j \in J}(a_i - a_j)^2$. Write $I = \{i_1, \ldots, i_u\}$, $J = \{j_1, \ldots, j_v\}$. We now claim that there exist $g_1, h_1, \ldots, g_m , h_m$ with $g_i \in \mathcal{S}^u$, $h_i \in \mathcal{S}^v,$ such that 
\begin{equation}\label{decomp}
Q_{IJ} = \sum_{k=1}^m g_k(a_{i_1}, \ldots, a_{i_u})h_k(a_{j_1}, \ldots, a_{j_v}).
\end{equation}
To see this, view $Q_{IJ}$ as a polynomial in the variables $a_{i_1}, \ldots, a_{i_u}$ with coefficients in $\mathbb{C}[a_{j_1}, \ldots, a_{j_v}]$. This is a symmetric polynomial in $u$ variables, where the maximum degree of any variable in any monomial is $2v$. Therefore $Q_{IJ}$ can be written as in \Cref{decomp} as a sum over symmetrizations of monomials with total degree at most $u$ and maximum individual degree $2v$, for some coefficients $h_k$ in $\mathbb{C}[a_{j_1}, \ldots, a_{j_v}]$. The number of such symmetrizations of monomials is the number of partitions having maximum part size $2v$ and at most $u$ parts, which is $\binom{2v+u}{u} = m$.

Having shown this, it follows that 
\[Q = \sum_{k=1}^m (g_k(a_{i_1} ,\ldots, a_{i_u}))^T_{I \subseteq [2d-1], |I| = u} (h_k(a_{j_1} ,\ldots, a_{j_v}))_{J \subseteq [2d-1], |J| = v},\]
and so $Q$ has rank at most $m$. We conclude by Stirling's approximation.
\end{proof}
\begin{remark}
Numerical evidence suggests that  equality holds in \Cref{catlower} when $u= \lfloor d/2 \rfloor$. This would imply that $\mathbf{R}(h_d) = \Omega(2.59^d)$.
\end{remark}
\subsection{$A(n,d)$ in Positive Characteristic and Abelian Group Algebras}\label{sec3.3}
We briefly introduce a generalization of Waring rank to $\mathcal{S}_d^n(\mathsf{k}) \coloneqq \mathsf{k}[x_1, \ldots, x_n]_d$, where $\mathsf{k}$ is a field of arbitrary characteristic. This notion has been studied extensively as early as 1916 \cite{macaulay1994algebraic}, and directly corresponds to Waring rank in the case that $\mathrm{char}(\mathsf{k}) = 0$. For a thorough algebraic-geometric treatment of this subject, see \cite{iarrobino1999power}. Assume $\mathsf{k}$ is algebraically closed unless stated otherwise. 

\begin{definition}
For $\ell  = \sum_{i=1}^n a_i x_i \in \mathcal{S}_1^n(\mathsf{k})$, let
\[\ell^{[d]} \coloneqq \sum_{\alpha \in \mathbb{N}^n_d} a_1^{\alpha_1} \cdots a_n^{\alpha_n} x^\alpha \in  \mathcal{S}_d^n(\mathsf{k}).\]
\end{definition}
Note that $\ell^{[d]}$ is just $\ell^d$ without any multinomial coefficients. We remark that the projectivization of the set $\{\ell^{[d]} : \ell \in \mathcal{S}^n_1(\mathsf{k})\}$ is the classical \emph{Veronese variety} in algebraic geometry \cite[Corollary A.10]{iarrobino1999power}.
\begin{definition}
For $f \in \mathcal{S}_d^n(\mathsf{k})$, let $\mathbf{R}^\nu(f)$ be the minimum $r$ such that there exist linear forms $\ell_1, \ldots, \ell_r$ with
\[f = \sum_{i=1}^r \ell_i^{[d]},\]
and let 
\[\mathbf{R}_{\mathrm{supp}}^\nu(f) \coloneqq \min (\mathbf{R}^\nu(g) : g \in \mathcal{S}_d^n(\mathsf{k}) , \mathrm{supp}(g) = \mathrm{supp}(f)).\]
\end{definition}
The next proposition shows that the $d = j$ case of \Cref{genduala} holds (ignoring a factorial) with the above definition of rank in the case that $g$ is multilinear. Recall that this fact is key for algorithmic upper bounds. 
\begin{proposition}\label{veroneseworks}
Suppose that $g = \sum_{i=1}^r \ell_i^{[d]} \in \mathcal{S}_d^n$ is multilinear. Then for all $f \in \mathcal{S}_d^n$, 
\[g(\partial \mathbf{x}) f= \sum_{i=1}^r f(\ell_i^*).\]
\end{proposition}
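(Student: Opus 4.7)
The plan is a direct coefficient computation, leveraging two facts: (i) by definition, the coefficient of $x^\beta$ in $\ell_i^{[d]}$ is simply $(\ell_i^*)^\beta := \prod_j (\ell_i^*)_j^{\beta_j}$, with \emph{no} multinomial weight, and (ii) the multilinearity hypothesis on $g$ says that after summing over $i$, only the multilinear coefficients survive.

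First I would unwind the two sides of the claimed identity. Writing $f = \sum_{\beta \in \mathbb{N}^n_d} b_\beta x^\beta$, the right-hand side is
\[
\sum_{i=1}^r f(\ell_i^*) \;=\; \sum_{\beta \in \mathbb{N}^n_d} b_\beta \sum_{i=1}^r (\ell_i^*)^\beta.
\]
By definition of $\ell^{[d]}$, the inner sum $\sum_i (\ell_i^*)^\beta$ is exactly the coefficient of $x^\beta$ in $g = \sum_i \ell_i^{[d]}$. Since $g$ is assumed multilinear, this coefficient vanishes whenever $\beta \notin \{0,1\}^n_d$, so
\[
\sum_{i=1}^r f(\ell_i^*) \;=\; \sum_{\alpha \in \{0,1\}^n_d} b_\alpha c_\alpha,
\]
where $c_\alpha$ denotes the coefficient of $x^\alpha$ in $g$.

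Next I would compute the left-hand side. Writing $g = \sum_{\alpha \in \{0,1\}^n_d} c_\alpha x^\alpha$, we have $g(\partial \mathbf{x}) = \sum_\alpha c_\alpha \partial^\alpha$. For each multilinear $\alpha$ and each $\beta \in \mathbb{N}^n_d$, the standard identity $\partial^\alpha x^\beta = \frac{\beta!}{(\beta-\alpha)!}\, x^{\beta-\alpha}$ (when $\alpha \le \beta$, else zero) combined with $|\alpha| = |\beta| = d$ forces $\alpha = \beta$ for a nonzero contribution, and in that case the contribution is $\alpha! = 1$ (since $\alpha$ is multilinear). Hence
\[
g(\partial \mathbf{x}) f \;=\; \sum_{\alpha \in \{0,1\}^n_d} c_\alpha \cdot \alpha! \cdot b_\alpha \;=\; \sum_{\alpha \in \{0,1\}^n_d} c_\alpha b_\alpha,
\]
matching the expression for $\sum_i f(\ell_i^*)$ computed above.

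There is essentially no obstacle here; the computation is a bookkeeping exercise. The one subtlety worth highlighting is \emph{why} the factorial that normally appears in \Cref{genduala} disappears: it is precisely because $\ell_i^{[d]}$ is defined without multinomial coefficients (so the $d!$ that would arise from $\ell_i^d$ is absent), and because multilinearity of $g$ forces $\alpha! = 1$ in the only surviving terms of $\partial^\alpha x^\beta$. This is exactly the feature that makes the divided-power formulation the right one for algorithmic applications in arbitrary characteristic.
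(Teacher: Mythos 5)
Your proof is correct and is essentially the same direct coefficient computation as the paper's: identify the coefficient of $x^\beta$ in $g$ with $\sum_i (\ell_i^*)^\beta$, observe that multilinearity kills all non-multilinear $\beta$, and note that for multilinear $\alpha$ the operator $\partial^\alpha$ picks out the coefficient of $x^\alpha$ in $f$ with unit weight. You spell out explicitly why the factorial disappears (the $\alpha!=1$ step), which the paper compresses into the phrase ``since $g$ is multilinear,'' but there is no difference in substance.
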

\begin{proof}
Suppose that $g = \sum_\alpha b_\alpha x^\alpha$ and $\ell_i = (\sum_{j=1}^n c_{i,j} x_j)^{[d]}$. Note that $b_\alpha = \sum_{i=1}^r c_{i,1}^{\alpha_1} \cdots c_{i,n}^{\alpha_n}$. If $f = \sum_\alpha a_\alpha x^\alpha$, then since $g$ is multilinear, $g(\partial \mathbf{x}) f = \sum_\alpha a_\alpha b_\alpha$. On the other hand,
\[\sum_{i=1}^r f(c_{i,1}, \ldots, c_{i,n}) = \sum_{i=1}^r \sum_{\alpha} a_\alpha c_{i,1}^{\alpha_1} \cdots c_{i,n}^{\alpha_n} = \sum_\alpha a_\alpha b_\alpha.\qedhere\]
\end{proof}

\begin{definition}\label{genand}
Let $A_\mathsf{k}(n,d) \coloneqq \mathbf{R}_{\mathrm{supp}}^\nu(e_{n,d})$.
\end{definition}
It is easy to see that if $\mathsf{k} = \mathbb{C}$ and if $g$ is multilinear, $\mathbf{R}(g) = \mathbf{R}^\nu(g)$. This implies that $A_\mathbb{C}(n,d)=A(n,d)$, and so the above definition really does generalize $A(n,d)$.
\begin{theorem}\label{generallower}
For all $n \ge d$, $A_\mathsf{k}(n,d) \ge 2^{d-1}$.
\end{theorem}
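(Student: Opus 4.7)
The plan is to mirror the characteristic-zero proof of \Cref{lowerbound} in the $\mathbf{R}^\nu$-setting. The first step is the analog of \Cref{trivialboundsa}: if $f \in \mathcal{S}_d^{n+1}(\mathsf{k})$ has $\mathrm{supp}(f) = \mathrm{supp}(e_{n+1,d})$ and decomposes as $f = \sum_{i=1}^r \ell_i^{[d]}$, then the substitution $x_{n+1}=0$ commutes with the $(\,\cdot\,)^{[d]}$ operation, giving $f|_{x_{n+1}=0} = \sum_{i=1}^r (\ell_i|_{x_{n+1}=0})^{[d]}$, a decomposition of a polynomial with support $\mathrm{supp}(e_{n,d})$ using at most $r$ terms. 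Hence $A_\mathsf{k}(n,d)$ is nondecreasing in $n$, and it suffices to handle $n=d$.

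When $n=d$ the support $\mathrm{supp}(e_{d,d}) = \{x_1\cdots x_d\}$ consists of a single monomial, so every element of the corresponding support class is a nonzero scalar multiple of $x_1\cdots x_d$. Over an algebraically closed $\mathsf{k}$, a scalar can be absorbed into the linear form via $c\,\ell^{[d]} = (c^{1/d}\ell)^{[d]}$, so $\mathbf{R}^\nu$ is scale-invariant and $A_\mathsf{k}(d,d) = \mathbf{R}^\nu(x_1\cdots x_d)$. The theorem therefore reduces to the lower bound $\mathbf{R}^\nu(x_1\cdots x_d) \ge 2^{d-1}$.

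For this bound I would adapt the argument of Ranestad and Schreyer \cite{ranestad2011rank} already invoked in \Cref{lowerbound}. In characteristic zero they bound the Waring rank from below by half the length of the minimal apolar zero-dimensional subscheme; for $x_1\cdots x_d$ the apolar ideal in the contraction pairing is $(y_1^2,\ldots,y_d^2)$, whose vanishing scheme has length $2^d$, yielding the desired $2^{d-1}$. Both ingredients---the characteristic-free Apolarity Lemma for $\mathbf{R}^\nu$ and the combinatorial length computation for $\mathsf{k}[y_1,\ldots,y_d]/(y_1^2,\ldots,y_d^2)$---remain valid in arbitrary characteristic once the formalism is phrased via contraction as in \cite[Appendix A]{iarrobino1999power}, and the B\'ezout-type degree estimate at the heart of the Ranestad-Schreyer argument is itself characteristic-independent.

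The main obstacle is verifying this transfer carefully, since \cite{ranestad2011rank} is written over $\mathbb{C}$ and makes occasional use of characteristic-zero conveniences. However, the specific instance we need is the simplest possible case---the apolar scheme is a complete intersection cut out by $y_1^2,\ldots,y_d^2$ supported at a single point---so I expect that a direct verification using only the characteristic-free Apolarity Lemma together with the elementary length calculation $\dim_\mathsf{k} \mathsf{k}[y_1,\ldots,y_d]/(y_1^2,\ldots,y_d^2) = 2^d$ will suffice, avoiding any appeal to the full machinery of \cite{ranestad2011rank}.
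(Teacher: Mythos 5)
Your proposal follows the paper's proof essentially verbatim: establish monotonicity $A_\mathsf{k}(d,d)\le A_\mathsf{k}(n,d)$ via the specialization $x_{n+1}\mapsto 0$ (which commutes with $(\cdot)^{[d]}$), observe that $A_\mathsf{k}(d,d)=\mathbf{R}^\nu(x_1\cdots x_d)$ by scale-invariance over the algebraically closed $\mathsf{k}$, and invoke the Ranestad--Schreyer lower bound $\mathbf{R}^\nu(x_1\cdots x_d)\ge 2^{d-1}$. The paper simply cites \cite{ranestad2011rank} for this last step without comment, so your explicit discussion of why the apolarity-length argument transfers to positive characteristic (via the apolar ideal $(y_1^2,\ldots,y_d^2)$ of length $2^d$) is a careful elaboration of a point the paper glosses over, not a different approach.
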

\begin{proof}
It follows from an argument identical to that of \Cref{trivialboundsa} that $A(d,d) \le A(n,d)$ for all $n \ge d$. As it was shown in \cite{ranestad2011rank} that $\mathbf{R}^\nu(x_1 \cdots x_d) \ge 2^{d-1}$, the conclusion follows.
\end{proof}

\begin{definition}
Given $A \in \mathsf{k}^{d \times n}$, let
\begin{equation}
g_A \coloneqq \sum_{\alpha \in \mathbb{N}^n_d} \text{per}_d(A_\alpha) x^\alpha.
\end{equation}
\end{definition}

\begin{lemma}\label{persum}
Let $\mathsf{k}$ be arbitrary and let $A \in \mathsf{k}^{d \times n}$. Then $\mathbf{R}^\nu(g_A) \le 2^d - 1$.
\end{lemma}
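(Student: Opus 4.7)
The plan is to apply Ryser's inclusion--exclusion formula for the permanent of $A_\alpha$ and recognize, after summation over $\alpha \in \mathbb{N}^n_d$, a signed sum of Veronese powers $\ell_S^{[d]}$ indexed by subsets $S \subseteq [d]$ --- one term per nonempty subset.

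First I will use the Ryser-type identity: for any $d \times d$ matrix $B$ over $\mathsf{k}$,
\[\mathrm{per}(B) \;=\; \sum_{S \subseteq [d]} (-1)^{d-|S|} \prod_{k=1}^d \sum_{i \in S} B_{i,k}.\]
This follows by inclusion--exclusion on the image of a function $[d] \to [d]$ (surjections $[d] \to [d]$ coinciding with bijections) and is valid over any commutative ring. Applying it with $B = A_\alpha$, write $c \colon [d] \to [n]$ for the label function taking value $j$ exactly $\alpha_j$ times, so that the $k$-th column of $A_\alpha$ is the $c(k)$-th column of $A$. Then $\sum_{i \in S} (A_\alpha)_{i,k} = \sum_{i \in S} A_{i,c(k)}$ depends on $k$ only through $c(k)$, and grouping the $d$ factors by their column class yields
\[\prod_{k=1}^d \sum_{i \in S} A_{i,c(k)} \;=\; \prod_{j=1}^n \Bigl(\textstyle\sum_{i \in S} A_{i,j}\Bigr)^{\alpha_j}.\]

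Next, for each $S \subseteq [d]$ I will set $\ell_S \coloneqq \sum_{j=1}^n \bigl(\sum_{i \in S} A_{i,j}\bigr) x_j$, the linear form whose coefficient vector is the sum of the rows of $A$ indexed by $S$. By the very definition of $\ell^{[d]}$, the right-hand side displayed above is exactly the coefficient of $x^\alpha$ in $\ell_S^{[d]}$. Summing Ryser's identity for $\mathrm{per}(A_\alpha)$ over $\alpha \in \mathbb{N}^n_d$ then gives
\[g_A \;=\; \sum_{S \subseteq [d]} (-1)^{d-|S|} \ell_S^{[d]} \;=\; \sum_{\emptyset \ne S \subseteq [d]} (-1)^{d-|S|} \ell_S^{[d]},\]
with the $S=\emptyset$ term vanishing since $\ell_\emptyset = 0$. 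This presents $g_A$ as a signed sum of exactly $2^d-1$ Veronese powers.

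The final step is to absorb the $\pm 1$ signs into the linear forms by choosing, for each $S$, a scalar $\zeta_S \in \mathsf{k}$ with $\zeta_S^d = (-1)^{d-|S|}$, so that $(-1)^{d-|S|} \ell_S^{[d]} = (\zeta_S \ell_S)^{[d]}$. This sign-absorption is the only step whose validity depends on $\mathsf{k}$ and is the main obstacle to extending the argument beyond the algebraically closed setting: in characteristic $2$ the signs are trivially absent (which is precisely the regime of interest for the permanent-equals-determinant connection invoked later in the section), for odd $d$ one may take $\zeta_S = \pm 1$, and in general the standing algebraic-closure hypothesis provides $\zeta_S$. The resulting expression exhibits $g_A$ as a sum of $2^d-1$ Veronese powers, giving $\mathbf{R}^\nu(g_A) \le 2^d - 1$.
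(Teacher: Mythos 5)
Your proof is, in substance, the paper's own: both apply the Ryser inclusion--exclusion expansion of the permanent to write $g_A$ as the signed sum $\sum_{\emptyset\neq S\subseteq[d]}(-1)^{d-|S|}\ell_S^{[d]}$ of $2^d-1$ Veronese powers (the paper packages this as extracting the coefficient of $y_1\cdots y_d$ from an auxiliary generating polynomial, but the underlying calculation is the same). The one thing you add --- the explicit sign-absorption step --- is glossed over in the paper's proof, which stops at the signed expression even though $\mathbf{R}^\nu$ by definition demands a sum of pure Veronese powers with no scalar coefficients; your caveat is a real one, since for $d$ even and $\mathsf{k}=\mathbb{R}$ the lemma actually fails as stated (taking $A$ to be the $2\times2$ identity gives $g_A=x_1x_2$, which has no $\mathbf{R}^\nu$-decomposition over $\mathbb{R}$ at all, as the pure-square coefficients $\sum a_i^2$ and $\sum b_i^2$ would have to vanish). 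So ``arbitrary $\mathsf{k}$'' in the statement must be read against the section's standing algebraic-closure assumption, or restricted to $\mathrm{char}\,2$ and odd $d$, the two cases the paper actually uses downstream.
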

\begin{proof}
For $1 \le i \le d$, let $L_i \coloneqq \sum_{j =1}^n A_{ij} y_j \in \mathsf{k}[y_1, \ldots, y_n]$. Now consider 
\[\sum_{\alpha \in \mathbb{N}^n_d} L_1^{\alpha_1} \cdots L_n^{\alpha_n} x^\alpha \in \mathsf{k}[y_1, \ldots, y_n][x_1, \ldots,x_n].\]
 Note that the coefficient of $y_1 \cdots y_d$ in this polynomial is equal to $g_A$.  It then follows from inclusion-exclusion (or \Cref{ryser}) that this coefficient equals
\begin{equation}
\sum_{\alpha \in \{0,1\}^d} (-1)^{|\alpha|+d} (\sum_{i=1}^n x_i \sum_{j=1}^d  \alpha_j A_{i,j})^{[d]}.\label{perdecomp}\qedhere
\end{equation}
\end{proof}

\begin{theorem}\label{char2ez}
If $\mathsf{k}$ is infinite and $\mathrm{char}(\mathsf{k}) = 2$, $A_\mathsf{k}(n,d) \le 2^d-1$.
\end{theorem}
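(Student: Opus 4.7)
The plan is to exhibit a polynomial $g$ with $\mathrm{supp}(g) = \mathrm{supp}(e_{n,d})$ whose $\nu$-rank is at most $2^d - 1$. The natural candidate, in view of \Cref{persum}, is $g_A = \sum_{\alpha \in \mathbb{N}^n_d} \mathrm{per}_d(A_\alpha) x^\alpha$ for a suitable matrix $A \in \mathsf{k}^{d \times n}$. Since \Cref{persum} already guarantees $\mathbf{R}^\nu(g_A) \le 2^d - 1$ for every $A$, the entire task reduces to choosing $A$ so that $g_A$ has the correct support.

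The key observation is that in characteristic $2$ the permanent and the determinant coincide, so $\mathrm{per}_d(A_\alpha) = \det_d(A_\alpha)$ for all $\alpha \in \mathbb{N}^n_d$. When $\alpha \notin \{0,1\}^n_d$ the matrix $A_\alpha$ has a repeated column, so its determinant vanishes identically (two equal columns are linearly dependent, in any characteristic). Hence $g_A$ is automatically supported in $\{0,1\}^n_d$, and the only thing left is to ensure that every $d \times d$ minor of $A$ indexed by $\alpha \in \{0,1\}^n_d$ is nonzero.

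Since $\mathsf{k}$ is infinite, such an $A$ is easy to produce: I would choose distinct $a_1, \ldots, a_n \in \mathsf{k}$ and set $A = (a_j^{i-1})_{i \in [d],\, j \in [n]}$, a Vandermonde matrix. For $\alpha \in \{0,1\}^n_d$ with support $\{j_1, \ldots, j_d\}$ we then have
\[
\det_d(A_\alpha) = \prod_{k < \ell} (a_{j_\ell} - a_{j_k}) \neq 0,
\]
so $\mathrm{per}_d(A_\alpha) \neq 0$. Consequently $\mathrm{supp}(g_A) = \{0,1\}^n_d = \mathrm{supp}(e_{n,d})$, and combining this with \Cref{persum} yields
\[
A_\mathsf{k}(n,d) = \mathbf{R}_{\mathrm{supp}}^\nu(e_{n,d}) \le \mathbf{R}^\nu(g_A) \le 2^d - 1.
\]

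There is essentially no obstacle here beyond the characteristic-$2$ identity $\mathrm{per} = \det$; the only thing to be slightly careful about is justifying that a $d \times d$ matrix with a repeated column has vanishing determinant in every characteristic (which follows immediately from linear dependence of the columns, rather than from the sign-flip argument that degenerates when $-1 = 1$). Once that is noted, the proof is just the two-line assembly above.
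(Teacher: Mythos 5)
Your proof is correct and follows essentially the same route as the paper: take $g_A$ with $A$ having nonvanishing $d\times d$ minors, use $\mathrm{per}=\det$ in characteristic $2$ together with the repeated-column observation to pin down the support, and invoke \Cref{persum}. The only difference is that you explicitly instantiate $A$ as a Vandermonde matrix, whereas the paper simply asserts the existence of a matrix with nonvanishing minors; this is a harmless (and arguably helpful) concretization.
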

\begin{proof}
Let $A \in \mathsf{k}^{d \times n}$ be a matrix with non-vanishing $d \times d$ minors. Since $\mathrm{char}(\mathsf{k}) = 2$,
\[g_A = \sum_{\alpha \in \mathbb{N}^n_d} \text{det}_d(A_\alpha) x^\alpha.\]
If $\alpha \notin \{0,1\}^n$ then $A_\alpha$ has a repeated column and so $\det(A_\alpha) = 0$. Otherwise $\det(A_\alpha) \neq 0$. Therefore $g_A$ has the desired support. The conclusion follows from \Cref{persum}.
\end{proof}

\Cref{char2ez} gives the following $2^d \mathrm{poly}(n)$-time algorithm for testing if a polynomial $f \in \mathcal{S}_d^n(\mathsf{k})$ over a large enough field of characteristic 2 is supported on any multilinear monomial. For $U \subseteq \mathsf{k}$, where $|U| \ge 2d$, choose $a = (a_1, \ldots, a_n) \in U^n$ uniformly at random, and take $A \in \mathsf{k}^{d \times n}$ to have nonvanishing $d \times d$ minors. Then compute
\begin{equation}\label{c2algc}
\sum_{\alpha \in \{0,1\}^d} f(a_1 \sum_{j=1}^d  \alpha_j A_{1,j}, \ldots, a_n  \sum_{j=1}^d  \alpha_j A_{n,j}).
\end{equation}
It follows from \Cref{veroneseworks}, \Cref{char2ez}, and the Schwartz-Zippel lemma that this quantity is nonzero with probability at least $1/2$ when $f$ is supported on a multilinenar monomial, and zero otherwise. If $f = \sum_\alpha b_\alpha x^\alpha$, this algorithm computes

\[\sum_{\alpha \in \{0,1\}^n_d} b_\alpha a^\alpha \det(A_\alpha).\]

The ``option 2'' implementation of ``decide-multilinear" in \cite{koutis2008faster} is obtained exactly if instead we choose $A \in \mathbb{Z}_2^{d \times n}$ uniformly at random and take $a_1, \ldots, a_n = 1$. Similarly, the algorithm of \cite{williams2009finding} is obtained by choosing both $A \in \mathbb{Z}_2^{d \times n}$ and $a_1, \ldots, a_n \in \mathsf{k}$ uniformly at random. Additionally, the algorithm of \cite{bjorklund2010determinant} for detecting Hamiltonian cycles reduces to computing \Cref{c2algc} where $a_1, \ldots,  a_n =  1$, $A \in \mathsf{k}^{d \times n}$ is chosen uniformly at random, and the generating polynomial $f$ has the property that $\deg f \approx 3d/4$. This explains the relevance of ``determinant sums'' to \cite{bjorklund2010determinant} and shows that \cite{williams2009finding,koutis2008faster} were in fact also computing ``determinant sums''. This connection was made earlier in \cite{Brand2018Extensorcoding}.

The algorithms of \cite{williams2009finding,koutis2008faster} were presented in terms of a property of abelian group algebras. The following theorem elucidates the connection between support rank and this property.

\begin{theorem}\label{groupalg}
Let $G$ be an abelian group, and let $y_1, \ldots, y_n \in \mathsf{k}[G]$. For $\alpha \in \mathbb{N}^n$, let $f_\alpha \coloneqq \prod_{i=1}^n y_i^{\alpha_i}$. Define
\[T \coloneqq \{\alpha \in \mathbb{N}^n_d : f_\alpha(\mathrm{Id}_G) \neq 0\}.\]
Then $\mathbf{R}_{\mathrm{supp}}^\nu(\sum_{\alpha \in T} x^\alpha) \le |G|$.
\end{theorem}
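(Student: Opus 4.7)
The plan is to construct a specific polynomial $Q \in \mathcal{S}_d^n(\mathsf{k})$ with $\mathrm{supp}(Q) = T$ and then exhibit an explicit decomposition $Q = \sum_{j=1}^{|G|}\ell_j^{[d]}$, which immediately gives $\mathbf{R}_{\mathrm{supp}}^\nu(\sum_{\alpha \in T} x^\alpha) \le |G|$. The obvious candidate is
\[
Q \;\coloneqq\; \sum_{\alpha \in \mathbb{N}^n_d} f_\alpha(\mathrm{Id}_G)\, x^\alpha,
\]
whose support is $T$ by definition, so everything reduces to bounding $\mathbf{R}^\nu(Q)$.

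The heart of the argument is Pontryagin duality for the finite abelian group $G$, cleanly applicable in the semisimple case $\mathrm{char}(\mathsf{k})\nmid |G|$. Writing $y_i = \sum_g c_{i,g}\, g$ and using the orthogonal idempotents $e_\chi \coloneqq \frac{1}{|G|}\sum_{g \in G}\chi(g^{-1})\,g$ indexed by characters $\chi \in \hat G$, a direct computation shows $y_i\, e_\chi = \hat y_i(\chi)\, e_\chi$ where $\hat y_i(\chi) \coloneqq \sum_g c_{i,g}\,\chi(g)$. Since $1 = \sum_\chi e_\chi$ and the $e_\chi$ are orthogonal,
\[
f_\alpha \;=\; \prod_{i=1}^n y_i^{\alpha_i} \;=\; \sum_{\chi \in \hat G}\Bigl(\prod_i \hat y_i(\chi)^{\alpha_i}\Bigr)e_\chi.
\]
The identity coefficient of each $e_\chi$ equals $1/|G|$, so $f_\alpha(\mathrm{Id}_G) = \frac{1}{|G|}\sum_\chi \prod_i \hat y_i(\chi)^{\alpha_i}$, and summing over $\alpha \in \mathbb{N}_d^n$ gives the key identity
\[
Q \;=\; \frac{1}{|G|} \sum_{\chi \in \hat G} L_\chi^{[d]}, \qquad L_\chi \;\coloneqq\; \sum_{i=1}^n \hat y_i(\chi)\, x_i.
\]
Since $\mathsf{k}$ is algebraically closed, one absorbs the scalar $1/|G|$ into each $L_\chi$ via a $d$-th root, producing a decomposition of $Q$ into exactly $|\hat G| = |G|$ Veronese-type powers.

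The main obstacle is the modular case $\mathrm{char}(\mathsf{k})\mid |G|$, which is precisely the setting of the group-algebra algorithms of \cite{koutis2008faster,williams2009finding} (characteristic $2$, $G = \mathbb{Z}_2^m$) and so cannot be dismissed. There the idempotents $e_\chi$ fail to exist and $1/|G|$ is undefined. My intended workaround is to replace the character diagonalization by simultaneous upper-triangularization of the commuting left-multiplication operators $Y_i \in \mathrm{End}(\mathsf{k}[G])$, which is available over algebraically closed $\mathsf{k}$ by Lie's theorem. The $|G|$ joint pseudo-eigenvalue patterns $(\lambda_{i,j})_i$ again yield $|G|$ candidate forms $L_j = \sum_i \lambda_{i,j}\, x_i$, and the delicate step is an absorption argument showing that the strictly-upper-triangular nilpotent contributions to the $(\mathrm{Id}_G,\mathrm{Id}_G)$-entry of $\prod_i Y_i^{\alpha_i}$ can be folded into coefficient adjustments of these $|G|$ diagonal terms. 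A cleaner alternative is to lift the $y_i$ to the Witt-vector algebra $W(\mathsf{k})[G]$ (in characteristic zero), run the semisimple character argument in the generic fibre, and then reduce modulo the maximal ideal, using upper-semicontinuity of $\mathbf{R}_{\mathrm{supp}}^\nu$ under specialization to push the $|G|$-bound down to $\mathsf{k}[G]$.
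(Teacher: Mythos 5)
Your semisimple argument is correct and, once the notation is unwound, is the same as the paper's. The paper simultaneously diagonalizes the regular-representation matrices $\rho(y_i) = A\Lambda_iA^{-1}$; the diagonal entries of $\Lambda_i$ are exactly the character values $\hat y_i(\chi)$, and the columns of $A$ are the idempotents $e_\chi$ up to normalization. The paper's matrix $D$ with nonzero trace and the factor $D^{1/n}$ play the role that the scalar $1/|G|$ and its $d$-th root play in your write-up. So in the case $\mathrm{char}(\mathsf{k})\nmid |G|$ the two proofs coincide.

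You deserve credit for isolating the modular case as the real issue, and you should know that the paper's own proof does not escape it: the claim ``$\rho(y_i) = A\Lambda_iA^{-1}$ for diagonal $\Lambda_i$'' fails whenever $\mathrm{char}(\mathsf{k})\mid |G|$, since then $\mathsf{k}[G]$ is not semisimple and a typical $\rho(y)$ is not diagonalizable. For instance, for $G=\mathbb{Z}_2$ over $\overline{\mathbb{F}_2}$ and $y = 1+g$, one has $\rho(y)=\left(\begin{smallmatrix}1&1\\1&1\end{smallmatrix}\right)$, which is nilpotent and nondiagonalizable. This is precisely the regime ($\mathrm{char}(\mathsf{k})=2$, $G = \mathbb{Z}_2^m$) the paper subsequently applies the theorem in to recover the algorithms of Koutis and Williams, so the modular case is not a corner case that can be ignored.

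However, your proposed repairs for the modular case do not go through as described. For the upper-triangularization route, write $\rho(y_i)=AU_iA^{-1}$ with $U_i$ upper triangular. Then $f_\alpha(\mathrm{Id}_G)$ is the $(\mathrm{Id}_G,\mathrm{Id}_G)$-entry of $A\bigl(\prod_iU_i^{\alpha_i}\bigr)A^{-1}$, which is a \emph{fixed bilinear form} $\sum_{j,k}u_jv_k\bigl(\prod_iU_i^{\alpha_i}\bigr)_{jk}$ in \emph{all} entries of $\prod_iU_i^{\alpha_i}$, not just the diagonal ones. The off-diagonal (nilpotent) contributions are essential and cannot in general be folded into adjusted diagonal values. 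Concretely, with $G=\mathbb{Z}_2$, $\mathrm{char}(\mathsf{k})=2$, $y_1=1+g$, $y_2=a+bg$ and $d=2$, one finds $U_1=\left(\begin{smallmatrix}0&1\\0&0\end{smallmatrix}\right)$, $U_2=\left(\begin{smallmatrix}a+b&b\\0&a+b\end{smallmatrix}\right)$, and the identity-coefficient is $(U^\alpha)_{12}+(U^\alpha)_{22}$; the purely diagonal part of this gives zero for $\alpha=(1,1)$ even though $f_{(1,1)}(\mathrm{Id}_G)=a+b\neq 0$. Any weighted power-sum $\sum_jw_j\prod_i\lambda_{i,j}^{\alpha_i}$ using the diagonal eigenvalues $\lambda_{1,j}\in\{0,0\}$ vanishes whenever $\alpha_1>0$, so no ``coefficient adjustment'' of the diagonal terms can reproduce $Q$. (A rank-$2$ decomposition of a polynomial with the right support does exist, e.g.\ $x_1^{[2]}+(x_1+x_2)^{[2]}=x_1x_2+x_2^2$ in characteristic $2$, but it does not come from the diagonal of the upper triangularization.) The Witt-vector lift has a separate defect: identity coefficients that are nonzero over $W(\mathsf{k})$ can reduce to zero mod $p$, so the support set $T$ need not be preserved under specialization, and $\mathbf{R}_{\mathrm{supp}}^\nu$ is not an upper-semicontinuous function of a single polynomial's coefficients because it quantifies over all polynomials with the prescribed support. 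As things stand, both your proof and the paper's proof establish the statement only when $\mathrm{char}(\mathsf{k})\nmid |G|$, and the modular case needs a genuinely different argument.
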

\begin{proof}
Let $\rho$ be the regular representation of $G$; this extends linearly to a representation of $\mathsf{k}[G]$. Consider the  $|G| \times |G|$  matrices $\rho(y_1) , \ldots, \rho(y_n)$. Since $G$ is abelian, there exists an invertible matrix $A$ so that $\rho(y_i) = A \Lambda_i A^{-1}$ for all $i \in [n]$ and some diagonal matrices $\Lambda_1, \ldots, \Lambda_n$. 

By assumption, we have that for all $\alpha \in \mathbb{N}^n_d$, $f_\alpha(\mathrm{Id}_G) \neq 0$ if and only if $\alpha \in T$. Note that $f_\alpha(\mathrm{Id}_G) \neq 0$ if and only if for some $\lambda \neq 0$ and all $i \in |G|$,  $\rho(f_\alpha)_{i,i} = \lambda$. Letting $D \in \mathsf{k}^{|G| \times |G|}$ be a diagonal matrix with nonzero trace, it follows that $\mathrm{tr}(D\cdot\rho(f_\alpha)) \neq 0$ if and only if $\alpha \in T$. Note that
\begin{align*}
\mathrm{tr}(D \cdot \rho(f_\alpha)) &= \mathrm{tr}(D \cdot \rho(\prod_{i=1}^n y_i^{\alpha_i})) = \mathrm{tr}(D \cdot \prod_{i=1}^n \rho(y_i)^{\alpha_i}),\\
&= \mathrm{tr}(D \cdot \prod_{i=1}^n (A \Lambda_i A^{-1})^{\alpha_i}),\\
&= \mathrm{tr}(D \cdot \prod_{i=1}^n \Lambda_i^{\alpha_i}).
\end{align*}
Let $M_i\coloneqq D^{1/n} \Lambda_i$. By the above discussion, for all $\alpha \in \mathbb{N}^n_d$, $\mathrm{tr}(\prod_{i=1}^n M_i^{\alpha_i}) \neq 0$ if and only if $\alpha \in T$. 

Define the linear forms $\ell_i = \sum_{j=1}^n (M_j)_{i,i} x_i$ for all $i \in |G|$. We now claim that $P \coloneqq \sum_{i=1}^n \ell_i^{[d]}$ has the desired support. To see this, consider the coefficient of $x^\alpha$ in $P$, where $|\alpha| = d$. By definition, this is equal to
\[\sum_{i=1}^{|G|} (M_1)_{i,i}^{\alpha_1} \cdots (M_n)_{i,i}^{\alpha_n} = \mathrm{tr}(\prod_{i=1}^n M_i^{\alpha_i}),\]
and hence the claim holds.
\end{proof}

\Cref{groupalg} allows to to recover the approach of \cite{koutis2008faster,williams2009finding} from a support-rank perspective. Let $G = \mathbb{Z}_2^d$, and let $v_1, \ldots, v_n \in G$ be chosen independently and random. Then let $y_i \coloneqq \mathrm{Id}_G + v_i \in \mathsf{k}[G]$ for all $i$ in the statement of \Cref{groupalg}. The key fact used in \cite{koutis2008faster,williams2009finding} was that when char($\mathsf{k}) = 2$, $f_\alpha(\mathrm{Id}_G) = 0$ whenever $\alpha \notin \{0,1\}^n_d$, and for any $\alpha \in \{0,1\}^n_d$, $f_\alpha(\mathrm{Id}_G) \neq 0$ with probability at least $1/4$. The algorithms of \cite{koutis2008faster,williams2009finding} then follow by using the decomposition given by \Cref{groupalg}. Note that this algorithm does not use a decomposition of a multilinear polynomial supported on all multilinear monomials, but rather it samples a multilinear polynomial that is supported on a given multilinear monomial with \emph{constant probability}.

\subsection{A Recursive Approach for Bounding $A(n,d)$}\label{sec4}
In this section we provide a recursive method for upper bounding $A^+(n,d)$ and $A^\varepsilon(n,d)$. We will start with a recursive bound on $A^\varepsilon(n,d)$ for varying $n$ and fixed $d$, and later build upon this to give a recursive bound on $A^+(n,d)$ for all $n$ and $d$. 

\subsubsection{A Recursive Bound on $A^\varepsilon(n,d)$ for Fixed $d$}\label{sec4.1}
We will first need the following tool introduced in \cite{alon2007balanced}.

 \begin{definition}
 For $\delta > 1$, a $\delta$-balanced $(n,k,l)$-splitter $\mathcal{F}$ is a family of functions from $[n]$ to $[l]$ such that for some real number $c$, for all $S \subseteq[n]$ where $|S| = k$, the number of functions in $\mathcal{F}$ that are injective on $S$ is between $c/\delta$ and $c\delta$. 
 
A $\delta$-balanced $(n,k,k)$-splitter will be called a $\delta$-balanced $(n,k)$-perfect hash family. If $\mathcal{F}$ only satisfies the property that for each $S \subseteq [n]$, where $|S| = k$, there exists \emph{some} function in $\mathcal{F}$ that is injective on $S$, we call $\mathcal{F}$ an $(n,k,l)$-splitter.
 \end{definition}
 
 The next fact essentially appears in \cite{alon2007balanced}; we reproduce the proof for completeness. Here $(n)_k \coloneqq n(n-1) \cdots (n-k+1)$ denotes the falling factorial.
 \begin{lemma}\label{splitterbd}
For $1<\delta \le 2$, there exists a $\delta$-balanced $(n,k,l)$-splitter of size
 \[O \left (\frac{l^k \cdot  k \log n}{(l)_k (\delta-1)^2} \right) .\]
 \end{lemma}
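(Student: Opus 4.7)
The plan is a standard probabilistic construction followed by a multiplicative Chernoff bound and a union bound over all $k$-subsets of $[n]$.

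First I would let $p \coloneqq (l)_k / l^k$ denote the probability that a uniformly random function $f \colon [n] \to [l]$ is injective on a fixed set $S$ of size $k$, and set $c \coloneqq Np$, where $N$ will be determined momentarily. Sample $N$ functions $f_1, \ldots, f_N$ independently and uniformly at random from the set of all functions $[n] \to [l]$, and define $\mathcal{F} = \{f_1, \ldots, f_N\}$. For each fixed $S \subseteq [n]$ with $|S|=k$, let $X_S$ denote the number of $f_i$ that are injective on $S$; then $X_S$ is a sum of $N$ independent Bernoulli$(p)$ random variables with mean $c$.

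Next I would apply the multiplicative Chernoff bound: for $\eta \in (0,1)$,
\[
\Pr\bigl[|X_S - c| > \eta c\bigr] \le 2 \exp(-\eta^2 c/3).
\]
Choose $\eta \coloneqq (\delta-1)/2$; since $1 < \delta \le 2$ we have $1+\eta \le \delta$ and $1-\eta \ge 1/\delta$ (using $(\delta-1)/\delta \le \delta - 1$ and $\eta \le (\delta-1)/\delta$ after elementary manipulation, which is the step I would verify carefully), so the event $X_S \in [c/\delta, \, c\delta]$ is implied by $|X_S - c| \le \eta c$. Hence the "bad event" for $S$ has probability at most $2\exp(-(\delta-1)^2 c / 12)$.

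Finally, I would take a union bound over the $\binom{n}{k} \le n^k$ sets $S$: the probability that \emph{some} $S$ is bad is at most
\[
2 n^k \exp\!\bigl(-(\delta-1)^2 N p / 12\bigr).
\]
Choosing $N = C \cdot \dfrac{l^k \cdot k \log n}{(l)_k (\delta-1)^2}$ for a sufficiently large absolute constant $C$ makes this quantity strictly less than $1$, so some instantiation of $\mathcal{F}$ is a $\delta$-balanced $(n,k,l)$-splitter (with the common witness $c = Np$). This gives the stated size bound. The only delicate step is pinning down the constant factors in the relation between $\eta$ and $\delta$ so that $(1-\eta) \ge 1/\delta$ and $(1+\eta) \le \delta$ both hold, which is why the restriction $\delta \le 2$ is used; everything else is routine.
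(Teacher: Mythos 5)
Your proposal is correct and essentially reproduces the paper's own argument: sample independent uniformly random functions, apply a multiplicative Chernoff bound to the count of functions injective on each fixed $k$-set, and union bound over all $\binom{n}{k}$ such sets, choosing the family size so the failure probability is below $1$. The algebra reducing the two-sided interval $[c/\delta,\,c\delta]$ to $|X_S - c|\le \eta c$ with $\eta=(\delta-1)/2$ (needing $(\delta-1)(\delta-2)\le 0$, hence $\delta\le 2$) is exactly the role the hypothesis plays in the paper as well; the only difference is the unimportant constant in the Chernoff exponent.
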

\begin{proof}
Set $p \coloneqq \frac{(l)_k}{l^k}$ and $M \coloneqq \lceil \frac{8(k \log n +1)}{p(\delta-1)^2} \rceil$. Choose $M$ independent random functions from $[n]$ to $[l]$. For any $S \subseteq [n]$ of size $k$, the expected number of functions that are injective on $S$ is $pM$. By the Chernoff bound, the probability that the number of functions that are injective on $S$ is less than $pM/\delta$ or greater than $pM\delta$ is at most $2e^{-(\delta-1)^2 pM/8}$. Then by a union bound the expected number of such sets for which the number of 1-1 functions is not as desired is at most 
\[\binom{n}{k} 2e^{-(\delta-1)^2 pM/8} \le \binom{n}{k} 2e^{-(k \log n + 1)}<1.\qedhere\]
\end{proof}
 
\begin{theorem}\label{aepsilonbound}
Suppose $f \in \mathfrak{E}^{\varepsilon_0}(n_0,d)$ where $0 < \varepsilon_0 < 1$. Then for all $\varepsilon_0 < \varepsilon <1$ and all $n \ge d$,
\[A^\varepsilon(n,d) \le O\left (\frac{\mathbf{R}(f) \cdot n_0^d \cdot  d \log n}{(n_0)_d (\delta-1)^2} \right),\]
where $\delta \coloneqq \min(\frac{1+\varepsilon}{1+\varepsilon_0}, \frac{1-\varepsilon_0}{1-\varepsilon})$.
\end{theorem}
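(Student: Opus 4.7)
The plan is to use the splitter machinery from Lemma~\ref{splitterbd} to ``lift'' $f$ from $n_0$ to $n$ variables while preserving enough structure of its coefficients. Concretely, I would fix a $\delta$-balanced $(n, d, n_0)$-splitter $\mathcal{F}$ of size $|\mathcal{F}| = O(n_0^d \cdot d \log n / ((n_0)_d (\delta-1)^2))$, and to each $\phi \in \mathcal{F}$ associate the polynomial $f_\phi(x_1,\ldots,x_n) := f(L_1^\phi, \ldots, L_{n_0}^\phi)$, where $L_j^\phi := \sum_{i \in \phi^{-1}(j)} x_i$. I would then take the candidate polynomial to be $g := \tfrac{1}{c} \sum_{\phi \in \mathcal{F}} f_\phi$, with $c$ the balance constant of the splitter.

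The first key step is to identify the coefficients of $g$. Since $f \in \mathfrak{E}^{\varepsilon_0}(n_0,d)$ is multilinear in the $y_j$'s, every monomial of $f$ has the form $y_{j_1} \cdots y_{j_d}$ with distinct indices; after substituting $y_j \mapsto L_j^\phi$, any product of $x_i$'s arising has the property that the contributing $i$'s lie in \emph{distinct} fibers $\phi^{-1}(j_k)$ and are therefore pairwise distinct. Hence each $f_\phi$ is multilinear, and for $\alpha \in \{0,1\}^n_d$ the coefficient of $x^\alpha$ in $f_\phi$ equals the coefficient of $\prod_{j \in \phi(\mathrm{supp}(\alpha))} y_j$ in $f$ when $\phi$ is injective on $\mathrm{supp}(\alpha)$, and equals $0$ otherwise. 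Summing over $\mathcal{F}$ and using that each nonzero coefficient of $f$ is in $[1-\varepsilon_0, 1+\varepsilon_0]$ and that the number of $\phi$ injective on $\mathrm{supp}(\alpha)$ lies in $[c/\delta, c\delta]$, the coefficient of $x^\alpha$ in $g$ falls in $[(1-\varepsilon_0)/\delta,\, (1+\varepsilon_0)\delta]$. The definition $\delta = \min(\tfrac{1+\varepsilon}{1+\varepsilon_0}, \tfrac{1-\varepsilon_0}{1-\varepsilon})$ is precisely what forces this interval into $[1-\varepsilon, 1+\varepsilon]$, establishing $g \in \mathfrak{E}^\varepsilon(n,d)$.

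The rank bound then falls out: $f_\phi$ is obtained from $f$ by a linear change of variables, so $\mathbf{R}(f_\phi) \le \mathbf{R}(f)$, and subadditivity of Waring rank yields $\mathbf{R}(g) \le |\mathcal{F}| \cdot \mathbf{R}(f)$, which matches the claimed bound after invoking Lemma~\ref{splitterbd}. The main obstacle I expect is the multilinearity/coefficient accounting in the second paragraph: it is tempting to worry that the substitution introduces non-multilinear $x$-monomials, but this is ruled out because the fibers $\phi^{-1}(1), \ldots, \phi^{-1}(n_0)$ partition $[n]$, so distinct $y$-indices in a multilinear term of $f$ are necessarily pulled back to distinct $x$-indices. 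Everything else is routine: choosing $c$ via the splitter, verifying the inclusion of intervals (which is exactly the algebraic identity for the definition of $\delta$), and quoting Lemma~\ref{splitterbd} for the size bound.
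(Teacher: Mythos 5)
Your proposal is correct and follows essentially the same route as the paper: push $f$ through each function of a $\delta$-balanced $(n,d,n_0)$-splitter, average, check multilinearity and the coefficient window $[(1-\varepsilon_0)/\delta, (1+\varepsilon_0)\delta] \subseteq [1-\varepsilon, 1+\varepsilon]$, and finish with subadditivity of Waring rank together with \Cref{splitterbd}. The one thing you omit, which the paper handles in a line, is the degenerate case $n \le n_0$, where one should just invoke the monotonicity $A^\varepsilon(n,d) \le A^\varepsilon(n_0,d) \le \mathbf{R}(f)$ from \Cref{trivialboundsa} rather than build a splitter.
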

\begin{proof}
If $n \le n_0$ the theorem follows from \Cref{trivialboundsa}. Hence we will assume that $n > n_0$.

Let $\mathcal{F} = \{\pi_i : i \in [M]\}$ be a $\delta$-balanced $(n,d,n_0)$-splitter of minimal size $M$. For all $(i,j) \in [M]\times[n_0]$, define the linear forms $L_{i,j} = \sum_{k \in \pi_i^{-1}(j)} x_k$. Now we claim that for some constant $c$,
\[f' \coloneqq \frac{1}{c}  \sum_{i=1}^M f(L_{i,1},L_{i,2}, \ldots, L_{i,n_0}) \in A^{\varepsilon}(n,d).\]
First notice that since $f$ is multilinear and $L_{i,1}, \ldots, L_{i,n_0}$ are linear forms with disjoint supports for all $i$, $f'$ is also multilinear.  Next, by virtue of the fact that $f\in \mathfrak{E}^{\varepsilon_0}(n_0,d)$, the coefficient of any multilinear monomial $x^\alpha$ in $f(L_{i,1},L_{i,2}, \ldots, L_{i,n_0})$ is in the range $[1-\varepsilon_0,1+\varepsilon_0]$ if and only if $\pi_i$ is injective on $\mathrm{supp}(\alpha)$. Then because $\mathcal{F}$ is a $\delta$-balanced splitter, there are between $c/\delta$ and $c\delta$ such contributions to the coefficient of $x^\alpha$ in the above sum, for some fixed real number $c$. But this implies that the coefficient of $x^\alpha$ in $f'$ is between $(1-\varepsilon_0)/\delta$ and $(1+\varepsilon_0)\delta$, which by our choice of $\delta$ implies that $f \in \mathfrak{E}^{\varepsilon}(n,d)$. By subadditivity of rank, $\mathbf{R}(f') \le M \cdot \mathbf{R}(f)$, and the theorem follows by the bound on $M$ given by \Cref{splitterbd}.
\end{proof}

\begin{remark}
As Waring rank can be strictly subadditive, it is possible that the final step of the above lemma is far from optimal; see also \Cref{strassen}.
\end{remark}

\begin{theorem}\label{pracub}
For all $0<\varepsilon <1$, $A^\varepsilon(n,d) \le O(4.075^d \varepsilon^{-2} \log n)$.
\end{theorem}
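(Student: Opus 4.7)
The plan is to apply \Cref{aepsilonbound} with $f = e_{n_0,d}$, where $n_0 = \lfloor cd \rfloor$ for a constant $c>1$ chosen at the end to optimize the resulting base of the exponential. The key point is that every multilinear coefficient of $e_{n_0,d}$ equals $1$, so $e_{n_0,d} \in \mathfrak{E}^{\varepsilon_0}(n_0,d)$ for \emph{every} $\varepsilon_0 \in (0,1)$. The Lee--Kim bound from \cite{lee2016power} (quoted in the introduction) gives $\mathbf{R}(e_{n_0,d}) \le \binom{n_0}{\le \lfloor d/2 \rfloor}$, which I will plug into \Cref{aepsilonbound}.

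First I would set $\varepsilon_0 \coloneqq \varepsilon/2$. A short calculation shows that with this choice $\delta = \min(\tfrac{1+\varepsilon}{1+\varepsilon/2}, \tfrac{1-\varepsilon/2}{1-\varepsilon}) \ge 1 + \Omega(\varepsilon)$, so $(\delta-1)^{-2} = O(\varepsilon^{-2})$. Plugging into \Cref{aepsilonbound} yields
\[A^\varepsilon(n,d) \;\le\; O\!\left(\binom{n_0}{\le \lfloor d/2 \rfloor}\cdot \frac{n_0^{\,d}}{(n_0)_d} \cdot \frac{d\log n}{\varepsilon^{2}}\right).\]

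Second I would estimate the exponential order of the product $\binom{n_0}{\le \lfloor d/2\rfloor} \cdot n_0^d/(n_0)_d$ asymptotically in $d$, using Stirling. The binomial factor contributes $2^{c\,H(1/(2c))\,d + o(d)}$, where $H$ is the binary entropy. For the second factor,
\[\log_2\!\frac{n_0^{\,d}}{(n_0)_d} \;=\; -\sum_{i=0}^{d-1}\log_2\!\left(1-\tfrac{i}{n_0}\right) \;=\; d\bigl(\log_2 e \;-\; (c-1)\log_2\!\tfrac{c}{c-1}\bigr) + o(d).\]
Thus the base of the exponential is $2^{F(c)+o(1)}$ with
\[F(c) \;=\; c\,H\!\bigl(\tfrac{1}{2c}\bigr) + \log_2 e \;-\; (c-1)\log_2\!\bigl(\tfrac{c}{c-1}\bigr).\]

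Third I would numerically minimize $F$ over $c>1$. The optimum lies near $c=1.55$, where $F(1.55) \approx 2.027$, giving base $\approx 4.075$. Since $c$ can be chosen slightly away from the exact optimum and since the function has a well-behaved minimum, the polynomial slack (the factor of $d$, the sum $\binom{n_0}{\le \lfloor d/2\rfloor}$ versus the central binomial $\binom{n_0}{\lfloor d/2\rfloor}$, the $o(d)$ Stirling error, and the floor in $n_0 = \lfloor 1.55d\rfloor$) can all be absorbed into the $4.075^d$, giving $A^\varepsilon(n,d) \le O(4.075^d \varepsilon^{-2}\log n)$ as claimed.

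The only real content here is the numerical optimization of $F(c)$; everything else is routine Stirling together with the machinery of \Cref{aepsilonbound}. One minor technical point to verify is that the $n<n_0$ regime is handled by \Cref{trivialboundsa} (as in the proof of \Cref{aepsilonbound}), so the bound $\log n$ is never vacuous.
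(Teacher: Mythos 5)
Your proposal is correct and follows essentially the same route as the paper: apply \Cref{aepsilonbound} with $f=e_{n_0,d}$, $n_0 \approx 1.55d$, $\varepsilon_0 = \varepsilon/2$, invoke the bound $\mathbf{R}(e_{n_0,d}) \le \binom{n_0}{\le \lfloor d/2 \rfloor}$ from \cite{lee2016power}, and optimize $c$ via Stirling. Your entropy-form expression $F(c)$ is algebraically identical to the paper's rate $\log_2\bigl(\sqrt{2e}\,c\,(\tfrac{c-1}{e})^{c-1}(\tfrac{e}{c-1/2})^{c-1/2}\bigr)$, so the numerics agree.
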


\begin{proof}
Let $c \ge 1$ be a constant to be determined later. Taking $n_0 =\lceil cd \rceil$,  $f = e_{n_0,d}$, $\varepsilon_0 = \varepsilon/2$ in \Cref{aepsilonbound},
\[A^\varepsilon(n,d) \le O\left (\frac{\mathbf{R}(e_{n_0, d}) \cdot n_0^d \cdot  d \log n}{(n_0)_d (\delta-1)^2} \right)\]
where $\delta = \min(\frac{1+\varepsilon}{1+\varepsilon/2}, \frac{1-\varepsilon/2}{1-\varepsilon}) = \frac{1+\varepsilon}{1+\varepsilon/2} \ge \varepsilon/3+1$. Combining this with the upper bound on $\mathbf{R}(e_{n_0,d})$ given in \cite{lee2016power},
\begin{align*}
A^\varepsilon(n,d) &\le O\left (\binom{n_0}{\lfloor d/2 \rfloor} \frac{n_0^d}{(n_0)_{d}} \varepsilon^{-2} d^2 \log n \right )=O\left (\frac{\lceil cd \rceil^d}{\lfloor d/2 \rfloor!} \frac{\lceil d(c-1) \rceil!}{\lceil d(c-1/2) \rceil!}\varepsilon^{-2}d^2 \log n \right ).
\intertext{Applying Stirling's inequality,}
A^\varepsilon(n,d) &\le O\left (\left ( \frac{cd}{\sqrt{d/(2e)}} \right )^d \left (\frac{d(c-1)}{e} \right)^{d(c-1)} \left (\frac{e}{d(c-1/2)} \varepsilon^{-2} \right )^{d(c-1/2)}\varepsilon^{-2} d^2 \log n \right )\\
&= O\left (\left (\sqrt{2e} \cdot c \left (\frac{c-1}{e} \right )^{c-1} \left (\frac{e}{c-1/2} \right )^{c-1/2} \right )^d \varepsilon^{-2} d^2 \log n \right ).
\end{align*}
Using a computer we found that this is minimized when $c \approx 1.55$, in which case we obtain an upper bound of $O(4.075^d \varepsilon^{-2} \log n)$.
\end{proof}

\begin{remark}\label{colorred}
If we take $f = x_1 x_2 \cdots x_d$ and use the upper bound on $\mathbf{R}(x_1 \cdots x_d)$ given by \Cref{ryser}, it follows from \Cref{aepsilonbound} that
\[A^\varepsilon(n,d) \le (2^d-1) \frac{d^d}{d!} \varepsilon^{-2} = O((2e)^d\varepsilon^{-2} ) = O(5.44^d \cdot \varepsilon^{-2} ).\]

The decomposition implicit in the above bound is as follows. Let $\mathcal{F}$ be an $(1+\varepsilon)$-balanced $(n,d)$-perfect hash family. For $\pi \in \mathcal{F}$ and $i \in [d]$, let $L_{\pi,i} \coloneqq \sum_{j \in \pi^{-1}(i)} x_j$. Then for some $c >0$,
\[\frac{1}{c} \sum_{\pi \in \mathcal{F}} \sum_{\substack{\alpha \in \{0,1\}^{d}}} (-1)^{|\alpha|+d}\left ( \sum_{i=1}^{d} \alpha_i L_{\pi,i} \right )^d \in \mathfrak{E}^\varepsilon(n,d).\]
Applying this to the cycle-generating polynomial \Cref{gencycles}, one finds that a $(1\pm \varepsilon)$-approximation of the number of length-d cycles in the graph $G$ is given by
\[\frac{1}{c \cdot d!} \sum_{\pi \in \mathcal{F}} \sum_{\substack{\alpha \in \{0,1\}^{d}}} (-1)^{|\alpha|+d} f_G(\alpha_{\pi(1)}, \ldots, \alpha_{\pi(n)}).\]
This is equivalent to the color-coding algorithm for counting cycles described in \cite{alon2009balanced}, except we use inclusion-exclusion instead of dynamic programming to count the number of colorful simple cycles for a given coloring. Similarly, by replacing $\mathcal{F}$ with an $(n,d)$-perfect hash family one obtains an algorithm for detecting simple cycles that parallels the one given in \cite{alon1995color}. We note that using inclusion-exclusion rather than dynamic programming reduces the space complexity of the counting step from exponential to polynomial.

Furthermore, this bound is naturally derived by an application of color-coding. Using each function in a $(1+\varepsilon)$-balanced $(n,d)$-perfect hash family we color the variables $x_1, \ldots, x_n$ using $d$ colors. To each color we associate the linear form equal to the sum of the variables of that color. Since these linear forms have disjoint support, their product is multilinear. Summing the resulting products of linear forms for each function in the family, any given multilinear monomial appears with coefficient between $c/(1+\varepsilon)$ and $c(1+\varepsilon)$. The resulting polynomial is a sum of products of $|\mathcal{F}|$ linear forms, which can be written as a sum of powers of $O(|\mathcal{F}|2^d)$ linear forms using \Cref{ryser}.

An improvement to color-coding was made in \cite{huffner2008algorithm} based on the idea of using $n_0 \coloneqq \lceil 1.3d \rceil$ colors rather than $d$. We recover this result as follows. By applying \Cref{aepsilonbound} with $f=e_{n_0,d}$ and using the suboptimal bound on $\mathbf{R}(e_{n_0,d})$ given by \Cref{ryser},
\[A^+(n,d) \le O\left (\binom{1.3d}{d} \frac{(1.3d)^d}{(1.3d)_d} d \log n \right ) = O(4.32^d \log n).\]
In fact, the choice of $n_0 = \lceil 1.3d \rceil$ is optimal if we are using the rank bound of \Cref{ryser}; this follows from the same calculation done in \cite[Section 8]{gutin2018designing}. The algorithm resulting from this bound was virtually described in \cite{gutin2018designing,amini2009counting}.
\end{remark}
\subsubsection{A Recursive Bound on $A^+(n,d)$ for all $n$ and $d$}\label{sec4.2}
\begin{definition}
For $g \in \mathcal{S}_d^n$ and $s,t \in \mathbb{N}$, let
$$g^{\circledast (s,t)} \coloneqq \sum_{i=1}^s \prod_{j=1}^t g(x_{i,j,1}, x_{i,j,2}, \ldots , x_{i,j,n}) \in \mathbb{C}[x_{i,j,k} : (i,j,k) \in [s] \times [t] \times [n]].$$
\end{definition}
In words, $g^{\circledast (s,t)}$ is obtained from $g$ by taking the $t$-fold product of $g$ with itself using disjoint sets of variables, and then taking the $s$-fold sum of the resulting polynomial using disjoint sets of variables.

\begin{lemma}\label{cdastbd}
For all $g \in \mathcal{S}_d^n$, $\mathbf{R}(g^{\circledast(s,t)}) \le s ((d+1) \mathbf{R}(g))^t$.
\end{lemma}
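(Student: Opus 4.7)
The plan is to first invoke subadditivity of Waring rank to reduce to the case $s = 1$, then expand the $t$-fold product using an optimal Waring decomposition of $g$, and finally bound the rank of each resulting summand using the classical monomial rank formula.

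Concretely, $g^{\circledast(s,t)}$ is a sum of $s$ polynomials in pairwise disjoint variable sets, so subadditivity gives $\mathbf{R}(g^{\circledast(s,t)}) \le s \cdot \mathbf{R}(g^{\circledast(1,t)})$. Setting $r = \mathbf{R}(g)$, it suffices to show $\mathbf{R}(\prod_{j=1}^t g(x_{j,\cdot})) \le ((d+1)r)^t$. Taking an optimal decomposition $g = \sum_{i=1}^r \ell_i^d$ and letting $\ell_{i,j}$ denote the copy of $\ell_i$ in the $j$-th block of variables, the product expands as
\[\prod_{j=1}^t g(x_{j, \cdot}) = \sum_{(i_1, \ldots, i_t) \in [r]^t} \prod_{j=1}^t \ell_{i_j, j}^d.\]

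The core observation is that each summand $\prod_{j=1}^t \ell_{i_j, j}^d$ coincides, after an invertible linear change of coordinates, with the monomial $y_1^d y_2^d \cdots y_t^d$; this is legitimate because the $t$ linear forms involved lie in pairwise disjoint variable sets and may be assumed nonzero without loss of generality. Waring rank does not increase under linear substitution, and by the monomial rank formula of \cite{carlini2011solution} (the same result invoked in the proof of \Cref{montest}) applied with $\alpha = (d, \ldots, d) \in \mathbb{N}^t$, we have $\mathbf{R}(y_1^d \cdots y_t^d) = (d+1)^t/(d+1) = (d+1)^{t-1}$. Subadditivity across the $r^t$ summands then yields
\[\mathbf{R}\bigl(\prod_{j=1}^t g(x_{j, \cdot})\bigr) \le r^t \cdot (d+1)^{t-1} \le ((d+1) r)^t,\]
and multiplying by $s$ completes the proof.

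The only step beyond bookkeeping is the monomial rank upper bound, and there is no serious obstacle here: the bound is already used elsewhere in the paper, and in any case the required upper bound on $\mathbf{R}(y_1^d \cdots y_t^d)$ follows from a direct apolarity argument applied to the apolar ideal $(\partial_{y_1}^{d+1}, \ldots, \partial_{y_t}^{d+1})$. In fact any bound of the form $c^{t-1}$ with $c \le d+1$ would suffice for the stated conclusion, so the argument is quite robust to the precise rank estimate used.
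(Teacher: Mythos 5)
Your proof is correct and follows essentially the same route as the paper: reduce to $s=1$ by subadditivity, expand the $t$-fold product over an optimal Waring decomposition of $g$ into $r^t$ summands, and bound the rank of each summand $\prod_j \ell_{i_j,j}^d$ by that of the monomial $y_1^d\cdots y_t^d$. The paper uses the cruder estimate $\mathbf{R}(y_1^d\cdots y_t^d)\le (d+1)^t$ from its Ryser-type formula in the body, relegating the sharper $(d+1)^{t-1}$ you invoke from \cite{ranestad2011rank}/\cite{carlini2011solution} to a footnote, but the argument is otherwise identical.
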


\begin{proof}
By subadditivity of Waring rank, $\mathbf{R}(g^{\circledast(s,t)}) \le s \mathbf{R}(g^{\circledast(1,t)})$. Now letting $r = \mathbf{R}(g)$, there exist linear forms $\ell_{i,j} \in \mathbb{C}[x_{1,i,1}, \ldots, x_{1,i,n}]$ for $(i,j) \in [t] \times [r]$ so that
\[g^{\circledast(1,t)} = \prod_{i=1}^t \sum_{j=1}^r \ell_{i,j}^d =  \sum_{v \in [r]^t} \prod_{i=1}^t \ell_{i, v_i}^{d}.\]
Using the fact that $\mathbf{R}(\prod_{i=1}^t x_i^d) \le (d+1)^t$ (which follows from e.g. \Cref{ryser}\footnote{The slightly better bound of $(d+1)^{t-1}$ given in \cite{ranestad2011rank} can be used here.}), it follows that $\mathbf{R}(g^{\circledast(s,t)}) \le s \mathbf{R}(g^{\circledast(1,t)}) \le s ((d+1)\mathbf{R}(g))^t$.
\end{proof}
\begin{remark}\label{strassen}
The first step of the above lemma is to apply subadditivity of Waring rank to polynomials in disjoint sets of variables. \emph{Strassen's direct sum conjecture} claims that rank is actually additive in this case; see \cite{carlini2015progress} for more. It was recently shown in \cite{shitov2017counterexample} that the tensor version of this conjecture is false; if the polynomial version is also false, the upper bound of \Cref{cdastbd} may not be optimal.
\end{remark}

\begin{definition}
An $(n,d,n_0,d_0)$-perfect splitter, where $n \ge d$, $n_0 \ge d_0$, and $d_0 \mid d$, is a family of functions $\mathcal{F} = \{ \pi : [n] \to [d/d_0] \times [n_0]\}$ such that for all $S \subseteq [n]$ where $|S| = d$, there exists a $\pi \in \mathcal{F}$ such that for all $i \in [d/d_0]$, $\pi(S)$ contains $d_0$ elements whose first coordinate is $i$, and any two elements in $\pi(S)$ with the same first coordinate have differing second coordinates.
\end{definition}
In other words, we want the elements of $\pi(S)$ to be ``split evenly'' by their first coordinate, and those elements with the same first coordinate should have different second coordinates. As special cases, an $(n,d,d,d)$-perfect splitter is a $(n,d)$-perfect hash family, and when $n_0 \ge n$, an $(n,d,n_0,d_0)$-perfect splitter is a $(n,d,d_0)$-splitter.
\begin{definition}
For $n \ge d$, $n_0 \ge d_0$, and $d_0 \mid d$, let
\[\sigma(n,d,n_0,d_0) \coloneqq \left \lceil \left ( \frac{n_0^{d_0}}{(n_0)_{d_0}} \right )^{d/d_0} \frac{d_0!^{d/d_0} (d/d_0)^{d}}{d!} d \log n\right \rceil.\]
\end{definition}
\begin{proposition}\label{pslowerbd}
There exists an $(n,d,n_0,d_0)$-perfect splitter of size $\sigma(n,d,n_0,d_0)$.
\end{proposition}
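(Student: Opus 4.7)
The plan is to prove this via the probabilistic method, in the same vein as \Cref{splitterbd}: draw $M = \sigma(n,d,n_0,d_0)$ functions $\pi : [n] \to [d/d_0] \times [n_0]$ independently and uniformly at random, and show that with positive probability the resulting family is an $(n,d,n_0,d_0)$-perfect splitter. The key quantitative input is the probability $p$ that a single uniformly random $\pi$ is ``good'' on a fixed $S \subseteq [n]$ of size $d$, i.e., that $\pi(S)$ contains exactly $d_0$ elements in each block $\{i\} \times [n_0]$ and that the $d_0$ second coordinates within each block are distinct.

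First I would compute $p$ by an explicit count. Since the behavior of $\pi$ on $S$ is uniform over the $((d/d_0)n_0)^d$ functions $S \to [d/d_0] \times [n_0]$, it suffices to count the good such restrictions. A good restriction is specified by (i) an ordered partition of $S$ into $d/d_0$ classes of size $d_0$ (labeling each element by its first coordinate), contributing $d!/(d_0!)^{d/d_0}$ choices, followed by (ii) an injection from each of the $d/d_0$ classes into $[n_0]$ (assigning distinct second coordinates within each block), contributing $(n_0)_{d_0}^{d/d_0}$ choices. Therefore
\[
p \;=\; \frac{d!\,(n_0)_{d_0}^{d/d_0}}{(d_0!)^{d/d_0}\,(d/d_0)^{d}\,n_0^{d}},
\]
and a direct rearrangement gives $1/p = \bigl(n_0^{d_0}/(n_0)_{d_0}\bigr)^{d/d_0} \cdot d_0!^{d/d_0}(d/d_0)^{d}/d!$, which is exactly $\sigma(n,d,n_0,d_0)/(d\log n)$ before taking the ceiling.

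Then I would finish with the usual union bound. For each fixed $S$, the probability that none of the $M$ random functions is good on $S$ is at most $(1-p)^M \le e^{-pM}$. With $M = \sigma(n,d,n_0,d_0) \ge (d\log n)/p$ (interpreting $\log$ as the natural logarithm, as in \Cref{splitterbd}), this is at most $e^{-d\log n} = n^{-d}$. Summing over all $\binom{n}{d} < n^{d}$ choices of $S$ shows the expected number of ``bad'' sets is strictly less than $1$, so some realization of the $M$ random functions is a perfect splitter of size $\sigma(n,d,n_0,d_0)$, as required. There is no real obstacle here; the only mild care is to verify the counting in the computation of $p$ so that the awkward expression defining $\sigma$ matches $(d\log n)/p$ on the nose, and to use the strict inequality $\binom{n}{d}<n^d$ (rather than $\binom{n}{d}\le n^d$) so that the union bound is strict.
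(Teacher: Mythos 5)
Your proof is correct and takes essentially the same route as the paper's: compute the probability $p$ that a random function $\pi:[n]\to[d/d_0]\times[n_0]$ is good on a fixed $d$-subset $S$, then union-bound over all $\binom{n}{d}$ sets. Your direct count of good restrictions (ordered $d_0$-uniform partition of $S$ times blockwise injections into $[n_0]$) factors into exactly the $p_1 p_2^{d/d_0}$ the paper computes in two stages. One small point in your favor: you take $M=\sigma$ directly and verify $M\ge (d\log n)/p$, whereas the paper first rounds $c=\lceil 1/p\rceil$ and then takes $\lceil cd\log n\rceil$ functions, which can slightly exceed $\sigma$ as stated; your bookkeeping matches the stated bound more precisely.
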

\begin{proof}
We will consider the probability that a random function $\pi$ has the desired effect on a fixed subset $S \subseteq [n],$ where $|S| = d$. The conclusion will then follow from a union bound.

Let $\pi : [n] \to [d/d_0] \times [n_0]$ be chosen uniformly at random. The probability that each integer in $[d/d_0]$ appears equally often as the first coordinate in $\pi(S)$ equals
\[p_1 \coloneqq \frac{d!}{d_0!^{d/d_0} (d/d_0)^{d}}.\]
Assuming this happens, the probability that all elements in $\pi(S)$ with a given first coordinate are assigned different second coordinates equals
\[p_2 \coloneqq \frac{(n_0)_{d_0}}{n_0^{d_0}},\]
and so with probability $p_2^{d/d_0}$ this happens for all $d/d_0$ choices of the first coordinate. Hence if we generate $c = \lceil (p_1 p_2^{d/d_0})^{-1} \rceil$ independent and uniformly random functions, some function has the desired effect on $S$ with probability at least $1-e^{-1}$. Therefore if we generate $\lceil cd \log n \rceil$ random functions, the expected number of subsets for which no function has the desired effect on equals 
\[\binom{n}{d} e^{-\lceil d \log n \rceil} < 1.\qedhere\]
\end{proof}

\begin{theorem}\label{masterthm}
Let $f \in \mathfrak{E}^+(n_0,d_0)$. Then for all integers $n,d$ where $n \ge d$,
\[A^+(n,d) \le s ((d_0+1) \mathbf{R}(f))^{\lceil d/d_0 \rceil},\]
where 
\[s = \sigma(n+\lceil d/d_0 \rceil d_0 - d, \lceil d/d_0 \rceil d_0,n_0,d_0).\]
\end{theorem}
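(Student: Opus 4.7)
The plan is to explicitly construct a witness polynomial $g' \in \mathfrak{E}^+(n,d)$ whose Waring rank meets the stated bound, by pulling back the direct power sum $f^{\circledast(s,t)}$ along a perfect splitter and then differentiating away extra dummy variables. First I would set $t \coloneqq \lceil d/d_0 \rceil$ and $n' \coloneqq n + t d_0 - d$ so that $n' - n \in \{0, 1, \ldots, d_0 - 1\}$, and invoke \Cref{pslowerbd} to fix an $(n', t d_0, n_0, d_0)$-perfect splitter $\mathcal{F} = \{\pi_1, \ldots, \pi_s\}$ of size $s = \sigma(n', t d_0, n_0, d_0)$. For each $(i,j,k) \in [s] \times [t] \times [n_0]$ I would define $L_{i,j,k} \coloneqq \sum_{\ell \in \pi_i^{-1}(j,k)} x_\ell$ in the variables $x_1, \ldots, x_{n'}$, and set
\[g \coloneqq \sum_{i=1}^s \prod_{j=1}^t f(L_{i,j,1}, \ldots, L_{i,j,n_0}).\]
This is precisely the polynomial obtained from $f^{\circledast(s,t)}$ by the linear substitution $y_{i,j,k} \mapsto L_{i,j,k}$, so applying this substitution to a Waring decomposition of $f^{\circledast(s,t)}$ immediately yields $\mathbf{R}(g) \le \mathbf{R}(f^{\circledast(s,t)}) \le s((d_0+1)\mathbf{R}(f))^t$ via \Cref{cdastbd}.

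The main technical step, and the one I expect to be most delicate, is the coefficient analysis showing that the perfect-splitter property exactly matches the multilinearity requirement. For fixed $i$, the supports of $\{L_{i,j,k}\}$ partition $[n']$, so multilinearity of $f$ forces each product $\prod_j f(L_{i,j,\cdot})$ to be multilinear of degree $t d_0$ in $x_1, \ldots, x_{n'}$. Expanding, the coefficient of $x^T$ (for $T \subseteq [n']$ with $|T| = t d_0$) vanishes unless $|T \cap \pi_i^{-1}(\{j\} \times [n_0])| = d_0$ for every $j$ and the elements of each such intersection have pairwise distinct second coordinates under $\pi_i$ -- that is, unless $\pi_i$ splits $T$ perfectly -- in which case the coefficient is a product of $t$ positive coefficients of $f$. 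By the splitter property some $\pi_i$ splits each such $T$, so after summing over $i$ every degree-$t d_0$ multilinear coefficient of $g$ is strictly positive while all non-multilinear monomials vanish.

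Finally I would set $g' \coloneqq \partial_{x_{n+1}} \partial_{x_{n+2}} \cdots \partial_{x_{n'}}\, g$ and verify that $g' \in \mathfrak{E}^+(n,d)$ with $\mathbf{R}(g') \le \mathbf{R}(g)$. Since $g$ is multilinear and homogeneous of degree $t d_0$, differentiating with respect to each dummy variable annihilates any monomial missing one, leaving a polynomial of degree $t d_0 - (n'-n) = d$ depending only on $x_1, \ldots, x_n$. For every $S \subseteq [n]$ with $|S| = d$, the coefficient of $x^S$ in $g'$ equals the positive coefficient of $x^{S \cup \{n+1,\ldots,n'\}}$ in $g$ (well-defined because the choice of $n'$ forces the padded set to have size exactly $t d_0$). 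Differentiating a Waring decomposition $g = \sum_m c_m \ell_m^{t d_0}$ term by term produces $g' = \sum_m c'_m \ell_m^d$, where $c'_m$ is a scalar depending on $c_m$ and the coefficients of $\ell_m$ in the dummy variables, so $\mathbf{R}(g') \le \mathbf{R}(g)$ and the claimed bound $A^+(n,d) \le s((d_0+1)\mathbf{R}(f))^{\lceil d/d_0 \rceil}$ follows.
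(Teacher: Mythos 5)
Your proof is correct and takes essentially the same approach as the paper's: substitute a perfect splitter into the direct power sum $f^{\circledast(s,t)}$, bound the rank via \Cref{cdastbd}, and verify multilinearity and positivity of coefficients via the splitter property. The only presentational difference is that where the paper handles $d_0 \nmid d$ by invoking \Cref{trivialboundsb}, you inline that proposition's proof by explicitly padding with dummy variables and differentiating them away — which is the same mechanism.
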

\begin{proof}
 We start with the case that $d = t \cdot d_0$ for some $t \in \mathbb{N}$. Let $\mathcal{F} = \{\pi_i : i \in [s]\}$ be an $(n,d,n_0,d_0)$-perfect splitter of minimal size.  For $(i,j,k) \in [s] \times [t] \times [n_0]$, let $L_{i,j,k} \coloneqq \sum_{m \in \pi_i^{-1}(j,k)} x_m$. We now claim that $g^{\circledast(s,t)}(L_{i,j,k}) \in \mathfrak{E}^+(n,d)$. To see this, first note that for any $i$, the linear forms $\{L_{i,j,k}: (j,k) \in [t] \times [n_0]\}$ have disjoint support. Since $f$ is multilinear, it follows that 
\[f_i \coloneqq f(L_{i,1,1}, \ldots, L_{i,1,n_0}) \cdots f(L_{i,t,1}, \ldots, L_{i,t,n_0})\]
is multilinear for all $i$, and therefore so is $f^{\circledast(s,t)}(L_{i,j,k})$.
 
Now consider the coefficient of some degree-$d$ multilinear monomial $x^{\alpha}$ in $f_i$. Since $f$ has nonnegative coefficients, this will be nonnegative. Furthermore, if $\pi_i$ splits the set $\mathrm{supp}(\alpha)$ evenly by first coordinate and all elements in $\pi_i(\mathrm{supp}(\alpha))$ with the same first coordinates have different coordinates, this coefficient will be strictly positive by definition of the linear forms $L_{i,j,k}$. Since $\mathcal{F}$ is a perfect splitter, each degree-$d$ multilinear monomial will then appear with a positive coefficient. Therefore by \Cref{pslowerbd},
\[A^+(n,d) \le \mathbf{R}(f^{\circledast(s,t)}) \le s ((d_0+1) \mathbf{R}(f))^{d/d_0}.\] 

Now suppose that $d_0 \nmid d$. By \Cref{trivialboundsb}, we have that 
\[A^+(n,d) \le A^+(n+ \lceil d/d_0 \rceil d_0 - d, \lceil d/d_0 \rceil d_0),\]
which is at most $s ((d_0+1) \mathbf{R}(f))^{\lceil d/d_0 \rceil}$ by a reduction to the case when $d_0 \mid d$.
\end{proof}

Note that by taking $d_0 = d$ in the above theorem, we find that 
\[A^+(n,d) \le O\left (\frac{A^+(n_0,d) \cdot n_0^d \cdot  d \log n}{(n_0)_d} \right),\]
recovering \Cref{aepsilonbound} in the case of nonnegative support rank.

\begin{example}\label{d410}
\Cref{masterthm} suggests bounding $A^+(\mathbb{N},d)$ for small values of $d$ as an approach to improve the upper bounds of this section. For example, suppose that $A^+(\mathbb{N},4) \le 10$. Then we have that for all $n_0 \ge 4$ and all $n,d$,
\begin{align*}
A(n,d) &\le \sigma(n+ 4\lceil d/4 \rceil - d, \lceil d/4 \rceil 4,n_0,4) 5^{\lceil d/4 \rceil-1} 10^{\lceil d/4 \rceil}\\
&= O\left (\left (\frac{n_0^4}{{n_0}_{(4)})} \right )^{d/4} \frac{4!^{d/4} (d/4)^d}{d!} \log \binom{n}{d}  50^{d/4} \right )\\
&= O\left (\left (\frac{n_0^4}{{n_0}_{(4)})} \right )^{d/4} (e \cdot 1200^{1/4}/4)^d d \log n \right )\\
&= O\left (\left (\frac{n_0^4}{{n_0}_{(4)})} \right )^{d/4} 3.9998^d d \log n \right).
\end{align*}
Taking $n_0 \ge 33700$, we conclude that $A(n,d) \le O( 3.9999^d \log n)$. 

In contrast, the best upper bound we know on $A^+(\mathbb{N},4)$ is 79, which follows from \Cref{bestmax}. When used in \Cref{masterthm} this only shows that $A^+(n,d) \le O(6.706^d \log n)$.
\end{example}

\section{Applications}\label{sec4}

We first give a proof of \Cref{appl1}.

\begin{proof}[Proof of \Cref{appl1}]
Set $n_0 \coloneqq \lceil 1.55d \rceil$, $p \coloneqq (n_0)_d/n_0^d$, and $M \coloneqq \lceil3\varepsilon^{-2}/p \rceil$. Let $\mathcal{F} \coloneqq \{\pi_i : [n] \to [n_0], i \in [M]\}$ be a family of independent and uniformly random functions and let $L_{\pi_i,j} \coloneqq \sum_{j \in \pi_i^{-1}(j)} x_j$. The algorithm is to compute 
\[\frac{1}{pM} \sum_{i \in [M]} e_{n_0,d}(L_{\pi_i,1}(\partial \mathbf{x}), \ldots, L_{\pi_i,n_0}(\partial \mathbf{x}))f.\]
This can be rewritten in terms of evaluations of $f$ as follows. For $S \subseteq [n_0]$ and $i \in [n_0]$, let $\delta_{S,i} = -1$ if $i \in S$ and $\delta_{S,i}= 1$ otherwise. Then by \Cref{genduala} and the upper bound on $e_{n_0,d}$ given in \cite{lee2016power}, for $d$ odd this is equal to
 \[\frac{1}{p M \cdot 2^{d-1}}\sum_{i \in [M]} \sum_{\substack{S \subset [n_0]\\ |S| \le \lfloor d/2 \rfloor}} (-1)^{|S|} \binom{n_0-\lfloor d/2 \rfloor - |S| -1}{\lfloor d/2 \rfloor - |S|} f(\delta_{S,\pi_i(1)}, \ldots, \delta_{S,\pi_i(n)}),\]
and for $d$ even is equal to
 \[\frac{1}{p M \cdot 2^{d-1} (n_0-d) }\sum_{i \in [M]} \sum_{\substack{S \subset [n_0]\\ |S| \le d/2 }} (-1)^{|S|} \binom{n_0-d/2  - |S| -1}{d/2 - |S|}(n_0-2|S|) f(\delta_{S,\pi_i(1)}, \ldots, \delta_{S,\pi_i(n)}).\]
 
Note that this algorithm makes $M \sum_{i=0}^{\lfloor d/2 \rfloor} \binom{n_0}{i}$ evaluations of $f$, which from the same calculation of \Cref{pracub} is at most $O(4.075^d \varepsilon^{-2})$.  The stated time and space bounds then follow from the straightforward evaluation of these expressions.

We now show that this algorithm is correct. Write $f = \sum_{\alpha \in \mathbb{N}^n} a_\alpha x^\alpha$ and let $\pi : [n] \to [n_0]$ be chosen uniformly at random. Define the linear forms $L_i \coloneqq \sum_{j \in \pi^{-1}(i)} x_j$ for all $i \in [n]$, and write $e_{n_0,d}(L_1, \ldots, L_{n_0}) = \sum_{\alpha \in \{0,1\}^n_d} b_\alpha x^\alpha$. Let $Y_\pi \coloneqq e_{n_0,d}(L_1(\partial \mathbf{x}), \ldots, L_{n_0}(\partial \mathbf{x})) f = \sum_{\alpha \in \{0,1\}^n} a_\alpha b_\alpha$.

First observe that for any fixed $\alpha \in \{0,1\}^n_d$, $b_\alpha = 1$ with probability $p$, and $b_\alpha = 0$ with probability $1-p$. By linearity of expectation, it follows that $\mathbb{E}[Y_\pi] = p \cdot e_{n,d}(\partial \mathbf{x})f$. Moreover,
\[\mathrm{Var}[Y_\pi] = \sum_{\alpha} \mathrm{Var}[a_\alpha b_\alpha] + \sum_{\beta \neq \alpha} \mathrm{Cov}[a_\alpha b_\alpha,a_\beta b_\beta] = \sum_{\alpha} a_\alpha^2 \mathrm{Var}[b_\alpha] + \sum_{\beta \neq \alpha} a_\alpha a_\beta \mathrm{Cov}[b_\alpha,b_\beta].\]
As the probability that $b_\alpha = b_\beta = 1$ is at most $p$ for all $\alpha,\beta$, we have that 
\[\mathrm{Cov}[b_\alpha,b_\beta] = \mathbb{E}[b_\alpha b_\beta] - \mathbb{E}[b_\alpha]\mathbb{E}[b_\beta] \le p,\]
and hence $\mathrm{Var}[Y_\pi] \le p (e_{n,d}(\partial \mathbf{x})f)^2$.

If we repeat this process $M$ times, choosing $M$ independent and random functions $\pi_1, \ldots, \pi_M$ and computing $Z \coloneqq \frac{1}{M}(Y_{\pi_1}+ \cdots + Y_{\pi_M})$, then $\mathbb{E}[Z] = p \cdot e_{n,d}(\partial \mathbf{x})f$ and $Var[Z] = Var[Y_\pi]/M \le p \cdot (e_{n,d}(\partial \mathbf{x})f)^2/M$. By Chebychev's inequality, the probability that $Z$ is smaller or bigger than its expectation by $\varepsilon p e_{n,d}(\partial \mathbf{x})f$ is at most $\varepsilon^{-2}/pM$, which by choice of $M$ is at most $1/3$. Dividing by $p$ we obtained the desired approximation. Note that this is exactly the algorithm described above.
\end{proof}
\begin{remark}
In order to derandomize \Cref{appl1}, it would suffice to give a near-optimal construction of a $(1+\varepsilon)$-balanced $(n,d,1.55d)$-splitter, as first defined in \cite{alon2007balanced}. We note that such a construction was given for (``unbalanced'') $(n,k,\alpha k)$-splitters for all $\alpha \ge 1$ in \cite{gutin2018designing}. Furthermore, note that for any \emph{fixed} values of $n$ and $d$, \Cref{appl1} can be made deterministic by taking $\mathcal{F}$ to be a $(1+\varepsilon)$-balanced $(n,d,1.55d)$-splitter of optimal size.
\end{remark}
\subsection{Counting Subgraphs of Bounded Treewidth}
We now prove \Cref{twcount}.
\begin{definition}
For graphs $G,H$, where $|G| = n$ and $|H| = d$, let
\[P_{H,G}(x_1, \ldots, x_n) \coloneqq \sum_{\Phi \in \mathrm{Hom}(H,G)} \prod_{v \in V(H)}x_{\Phi(v)} \in \mathcal{S}_d^n.\]
\end{definition}

The key fact is that $P_{H,G}$ can be computed by a small arithmetic circuit in the case when $H$ has small treewidth. For this we use the following lemma, proven in \cite{Brand2018Extensorcoding,fomin2012faster}.
\begin{lemma}\cite[Lemma 16]{Brand2018Extensorcoding}\label{fasteval}
Let $G$ and $H$ be graphs where $|G| = n$ and $|H| = d$. Then there is an arithmetic formula $C$ of size $O(d \cdot n^{\mathrm{tw}(H)+1})$ computing $P_{H,G}$. Furthermore, this formula can be constructed in time $O(1.76^d) + |C| \cdot \mathrm{polylog}(|C|)$.

\begin{proof}[Proof of  \Cref{twcount}]
We first construct a formula $C$ computing $P_{H,G}$ using \Cref{fasteval}. Note that $C$ can be evaluated on inputs in $\{\pm 1\}^n$ in time $O(n^{\mathrm{tw}(H)+1})$, and the maximum bit-complexity of $P_{H,G}$ on $\{\pm 1\}^n$ is $\log f(1,1,\ldots, 1) = \log (|\mathrm{Hom}(H,G)|) \le d \log n$.

Next note that $e_{n,d}(\partial \mathbf{x})P_{H,G}$ equals the number of injective homomorphisms from $H$ to $G$. Using \Cref{appl1} and the formula $C$ we first compute a $(1\pm \varepsilon)$ approximation to this number in time $4.075^d  n^{\mathrm{tw}(H)+O(1)}  \varepsilon^{-2} \log \varepsilon^{-1}$. In order to obtain a $(1 \pm \varepsilon)$ approximation to $\mathrm{Sub}(H,G)$ we divide this by $|\mathrm{Aut}(H,H)|$, which can be computed exactly in $O(1.01^d)$ time by using a $\mathrm{poly}(d)$-time reduction to graph isomorphism \cite{mathon1979note} and the quasi-polynomial time graph isomorphism algorithm of \cite{babai2016graph}.

The total time taken is \[O(1.76^d) + |C| \cdot \mathrm{polylog}(|C|) + 4.075^d \cdot n^{\mathrm{tw}(H)+O(1)} \cdot \varepsilon^{-2} \mathrm{polylog}(\varepsilon^{-1}) + O(1.01^d),\] 
\[\le 4.075^d \cdot n^{\mathrm{tw}(H)+O(1)} \cdot \varepsilon^{-2}\mathrm{polylog}(\varepsilon^{-1}).\qedhere\]
\end{proof}
\end{lemma}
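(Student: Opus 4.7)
The plan is to prove the lemma by a standard dynamic-programming construction over a tree decomposition of $H$, producing the formula in a bottom-up sweep. First I would invoke a known exact treewidth algorithm on $H$ (e.g.\ Fomin--Villanger or Tamaki-style) to produce a tree decomposition of width $\mathrm{tw}(H)$ in time $O(1.76^d)$, and convert it into a nice tree decomposition with $O(d)$ bag nodes (introduce, forget, join, leaf), each of size at most $\mathrm{tw}(H)+1$. This preprocessing is where the $O(1.76^d)$ term originates; the subsequent arithmetic-formula construction will be essentially linear in the output size.

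Next, for each bag node $t$ and each map $\phi \colon B_t \to V(G)$ I would introduce a sub-formula $F_{t,\phi}$ which is intended to equal the sum over all partial homomorphisms $\Psi \colon V_t \to V(G)$ from the subgraph induced on the union of bags below $t$ such that $\Psi|_{B_t} = \phi$, weighted by $\prod_{v \in V_t \setminus B_t} x_{\Psi(v)}$. The key design choice is to emit the variable $x_{\Psi(v)}$ only at the moment $v$ is \emph{forgotten}; this way no vertex contributes to its monomial twice, so join nodes will need only a product and no spurious division. The recurrences are then straightforward:
\begin{itemize}
\item Leaf with $B_t = \emptyset$: $F_{t,\phi} = 1$.
\item Introduce node adding $v$: $F_{t,\phi} = F_{t',\phi|_{B_{t'}}} \cdot \prod_{uv \in E(H),\, u \in B_{t'}} [\phi(u)\phi(v) \in E(G)]$, where the bracketed factor is a constant $0$ or $1$ baked into the formula.
\item Forget node removing $v$: $F_{t,\phi} = \sum_{w \in V(G)} x_w \cdot F_{t',\phi \cup \{v \mapsto w\}}$.
\item Join node with children $t_1, t_2$ sharing the bag $B_t$: $F_{t,\phi} = F_{t_1,\phi} \cdot F_{t_2,\phi}$.
\end{itemize}
At the root (which in a nice tree decomposition has empty bag), the sub-formula evaluates to $P_{H,G}$.

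To bound the size of the resulting formula, note that each of the $O(d)$ bag nodes contributes at most $n^{\mathrm{tw}(H)+1}$ sub-formulas (one per choice of $\phi$), and the local recurrence at each node adds only $O(n)$ additional gates (for the forget-node sum) or $O(1)$ (for introduce and join). This yields total size $O(d \cdot n^{\mathrm{tw}(H)+1})$ as required; wiring everything together while visiting each gate only a polylogarithmic number of times gives the $|C| \cdot \mathrm{polylog}(|C|)$ construction cost on top of the treewidth computation.

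The main obstacle I anticipate is the treewidth step itself: producing a width-$\mathrm{tw}(H)$ decomposition in time $O(1.76^d)$ is a nontrivial external result rather than an observation, and hitting that exact base requires invoking the best available exact algorithms for treewidth. A secondary subtlety, which the design above is intended to avoid, is double-counting at join nodes; had one defined $F_{t,\phi}$ to include the bag monomial $\prod_{v \in B_t} x_{\phi(v)}$, join nodes would need a division by $\prod_{v \in B_t} x_{\phi(v)}$, which is not a legal arithmetic formula gate. Emitting $x_v$ only when $v$ is forgotten cleanly circumvents this issue.
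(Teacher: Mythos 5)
The paper does not prove Lemma~\ref{fasteval} itself; it cites it from Brand, Dell, and Husfeldt, and the proof environment appearing inside the lemma block in the source is actually the proof of Theorem~\ref{twcount} (which you did not attempt). So you are supplying a from-scratch argument for a black box. On those terms, the dynamic program over a nice tree decomposition is the right construction, the four recurrences are correct, emitting $x_{\Psi(v)}$ only when $v$ is forgotten is exactly the trick that makes join nodes a clean product with no double-counting, and the size count ($O(d)$ bag nodes, at most $n^{\mathrm{tw}(H)+1}$ table entries per bag, $O(n)$ gates per forget entry and $O(1)$ per introduce/join entry) together with the Fomin--Villanger $O(1.76^d)$ preprocessing are all sound.

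The one genuine gap is the word \emph{formula}. Your DP produces an arithmetic \emph{circuit}, not a formula: at an introduce node adding $v$, a single entry $F_{t',\phi|_{B_{t'}}}$ is referenced by all $n$ of the entries $F_{t,\phi}$ as $\phi(v)$ ranges over $V(G)$, so the fan-out is $n$. Unrolling that sharing into a tree replicates the sub-DAG hanging below, and compounded across the $\Theta(d)$ introduce nodes along a root-to-leaf chain this gives a formula far larger than $O(d\cdot n^{\mathrm{tw}(H)+1})$ --- already for $H$ a path ($\mathrm{tw}=1$) the polynomial is essentially iterated matrix multiplication, whose formula complexity is $n^{\Theta(\log d)}$, not $O(dn^2)$. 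This is harmless for the paper's application, since Theorem~\ref{twcount} only needs to evaluate $P_{H,G}$ at points of $\{\pm 1\}^n$ in time $O(n^{\mathrm{tw}(H)+1})$, which a circuit of the stated size does just as well; but as a proof of the lemma \emph{as written} your argument establishes a weaker claim (circuit size rather than formula size). To close the gap you would either have to argue that the cited lemma really means ``circuit,'' or reproduce Brand et al.'s actual representation rather than the naive tree-decomposition DP.
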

\subsection{Lower Bounds on Perfectly Balanced Hash Families}

In this section we show how the bounds on $\mathbf{R}(e_{n,d})$ given in \cite{lee2016power} imply lower bounds on the size of perfectly balanced hash families.
\begin{definition}\cite[Definition 1]{alon2009balanced}
Let $n > \ell \ge k > 0$. A family of functions $\mathcal{F} = \{\pi : [n] \to [l]\}$ is said to be a perfectly-$k$ balanced hash family if for some $c \in \mathbb{N}$ and all $S \subseteq [n]$, the number of functions in $\mathcal{F}$ that are injective on $S$ equals $c$.
\end{definition}
\begin{theorem}\label{hashlower}
Let $\mathcal{F}$ be a perfectly-$k$ balanced hash family from $[n]$ to $[l]$. Then 
\begin{enumerate}[label=\alph*.]
\item If k is odd,
\[|\mathcal{F}| \ge \frac{\sum_{i=0}^{\lfloor k/2 \rfloor} \binom{n}{i}}{\sum_{i=0}^{\lfloor k/2 \rfloor}\binom{l}{i}}.\]
\item If k is even, \[|\mathcal{F}| \ge \frac{\left ( \sum_{i=0}^{ k/2} \binom{n}{i} \right ) - \binom{n-1}{k/2}}{\sum_{i=0}^{k/2}\binom{l}{i}}.\]
\end{enumerate}
\end{theorem}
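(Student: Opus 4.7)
The plan is to turn a perfectly-$k$ balanced hash family into an explicit way to write $e_{n,k}$ as a sum of ``pulled-back'' copies of $e_{l,k}$, and then invoke the asymptotically sharp bounds on $\mathbf{R}(e_{n,k})$ from~\cite{lee2016power}. Concretely, for each $\pi \in \mathcal{F}$ and each $i \in [l]$, define the linear form $L_{\pi,i} \coloneqq \sum_{j \in \pi^{-1}(i)} x_j$, and consider
\[ F \coloneqq \sum_{\pi \in \mathcal{F}} e_{l,k}(L_{\pi,1}, \ldots, L_{\pi,l}). \]
For each fixed $\pi$, the linear forms $L_{\pi,1}, \ldots, L_{\pi,l}$ have pairwise disjoint variable supports, so every product of $k$ distinct $L_{\pi,i}$'s is multilinear in $x_1, \ldots, x_n$; hence each summand, and therefore $F$, is multilinear.

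Next I would compute the coefficient of $x^\alpha$ in $F$ for any $\alpha \in \{0,1\}^n_k$. The coefficient of $x^\alpha$ in $e_{l,k}(L_{\pi,1}, \ldots, L_{\pi,l})$ is $1$ if $\pi$ is injective on $\mathrm{supp}(\alpha)$ and $0$ otherwise: the only way to obtain $x^\alpha$ from expanding a product $L_{\pi,i_1} \cdots L_{\pi,i_k}$ with $i_1 < \cdots < i_k$ is to have $\{i_1, \ldots, i_k\} = \pi(\mathrm{supp}(\alpha))$ with $\pi|_{\mathrm{supp}(\alpha)}$ injective, and in that case each variable in $\mathrm{supp}(\alpha)$ must be picked from its uniquely determined factor. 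By the defining property of $\mathcal{F}$, the total coefficient of $x^\alpha$ in $F$ equals the constant $c$, independent of $\alpha$. Therefore $F = c \cdot e_{n,k}$, which in particular is nonzero, so $c \neq 0$.

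Now I apply subadditivity of Waring rank together with the trivial observation $\mathbf{R}(e_{l,k}(L_{\pi,1}, \ldots, L_{\pi,l})) \le \mathbf{R}(e_{l,k})$ (pulling back by a linear map cannot increase rank) to conclude
\[ \mathbf{R}(e_{n,k}) = \mathbf{R}(F) \le |\mathcal{F}| \cdot \mathbf{R}(e_{l,k}), \qquad \text{hence} \qquad |\mathcal{F}| \ge \frac{\mathbf{R}(e_{n,k})}{\mathbf{R}(e_{l,k})}. \]
Plugging in the bounds of~\cite{lee2016power} recalled in \Cref{ryserex} -- namely $\mathbf{R}(e_{n,k}) = \binom{n}{\le \lfloor k/2 \rfloor}$ and $\mathbf{R}(e_{l,k}) \le \binom{l}{\le \lfloor k/2 \rfloor}$ for odd $k$, and $\mathbf{R}(e_{n,k}) \ge \binom{n}{\le k/2} - \binom{n-1}{k/2}$ and $\mathbf{R}(e_{l,k}) \le \binom{l}{\le k/2}$ for even $k$ -- yields both stated inequalities.

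There is essentially no obstacle: the whole argument is the observation that the ``balance'' condition is exactly what is needed for $F$ to be a scalar multiple of $e_{n,k}$, after which the rank bounds do all the work. The only point that deserves a moment of care is verifying that the expansion of $e_{l,k}(L_{\pi,1}, \ldots, L_{\pi,l})$ contributes exactly $1$ (and not $k!$ or some other multinomial factor) to each multilinear monomial on which $\pi$ is injective; this is immediate from the disjointness of the supports of the $L_{\pi,i}$, which forces the ordered factorization to be unique.
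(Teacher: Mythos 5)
Your proof is correct and follows essentially the same route as the paper: form $\sum_{\pi \in \mathcal{F}} e_{l,k}(L_{\pi,1},\ldots,L_{\pi,l})$, observe it is a nonzero scalar multiple of $e_{n,k}$ by the balance property, then apply subadditivity of Waring rank and the bounds of \cite{lee2016power}. The only difference is that you spell out the coefficient computation that the paper leaves implicit; the substance is identical.
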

\begin{proof}
Suppose that $k$ is odd, and let $\mathcal{F}$ be a perfectly $k$ balanced hash family from $[n]$ to $[l]$. For each $\pi \in \mathcal{F}$ define the linear forms $L_{\pi(i)} \coloneqq \sum_{j \in \pi^{-1}(i)} x_j$. Consider the polynomial 
\[f \coloneqq \sum_{\pi \in F} e_{k,l}(L_{\pi,1}, \ldots, L_{\pi,l}).\]
Since $\mathcal{F}$ is a perfectly balanced hash family it follows that, up to scaling, $f = e_{n,k}$, and hence $\mathbf{R}(f) = \sum_{i=0}^{\lfloor k/2 \rfloor} \binom{n}{i}$. On the other hand, by subadditivity of rank, we have that $\mathbf{R}(f) \le |\mathcal{F}| \mathbf{R}(e_{k,l}) = |\mathcal{F}|\sum_{i=0}^{\lfloor k/2 \rfloor} \binom{l}{i}$. Hence

\[|\mathcal{F}| \ge \frac{\sum_{i=0}^{\lfloor k/2 \rfloor} \binom{n}{i}}{\sum_{i=0}^{\lfloor k/2 \rfloor} \binom{l}{i}}.\]
The case for $k$ even is shown similarly.
\end{proof}
\section{Open Problems}\label{sec5}

\begin{question}
For all integers $u,v$ such that $u+v=d$, what is the minimum rank of a matrix with rows indexed by subsets of $[n]$ of size $u$ and columns indexed by subsets of $[n]$ of size $v$, such that entry $(I,J)$ is nonzero if and only if $I \cap J = \emptyset$, and entry $(I,J)$ equals entry $(K,L)$ whenever $I \cup J = K \cup L$? It follows from the method of partial derivatives that this quantity is a lower bound on $A(n,d)$. \Cref{catlower} shows that this is at most $2.6^d$.
\end{question}

\begin{question}
How many points are there in $\mathbb{C}^n$ such that the spaces spanned by any $d-1$ of them are contained in $\mathbf{V}(e_{n,d})$, but the spaces spanned by any $d$ of them are not? It is easy to see that $\mathbf{V}(e_{3,2})$ contains infinitely many such points; could it be that for all $d$ and some fixed $c \in \mathbb{N}$, $\mathbf{V}(e_{d+c,d})$ contains infinitely many such points? This would imply that $A(\mathbb{N},d) \le 2^d \mathrm{poly}(d)$. There is an obvious set of $n$ such points that was implicit in the upper bound of \Cref{pracub}, namely the standard basis vectors.
\end{question}
\begin{question}
Similarly, how many matrices in $\mathbb{C}^{n \times n}$ have the property that the span of any $d-1$ of them is contained in $\mathbf{V}(\mathrm{per}_d)$, but not the span of any $d$ of them? If there exist infinitely many points then it follows from \Cref{geomchar} and the fact that $\mathbf{R}(\mathrm{per}_d) \le 4^{d-1}$ \cite{landsberg2012tensors} that $A(\mathbb{N},d) \le 4^{d-1}$.
\end{question}

\begin{question}
Do all $(g,\varepsilon)$-support intersection certification algorithms require $\mathbf{R}_{\mathrm{supp}}(g)$ queries? \Cref{montest} shows that this is the case for monomials. Similarly, are $\mathbf{R}_\mathrm{supp}^\varepsilon(g)$ queries required to compute a $(1 \pm \varepsilon)$ approximation of $f(\partial \mathbf{x})g$ in the general black-box setting? \Cref{opteval} shows that this is true when $\varepsilon = 0$.
\end{question}

\begin{remark}
\Cref{masterthm} can be made algorithmic by using an explicit construction of a perfect splitter. The only such constructions we know however are far from optimal; that is, they give families of functions much larger than $\sigma(n,d,n_0,d_0)$ in general.
\end{remark}

\section{Acknowledgments}
I am very grateful to Ryan O'Donnell for numerous comments and suggestions, as well as feedback on an earlier draft of this paper. In particular, I would like to thank him for the proof of \Cref{detgen}. I would also like to thank Ryan Williams for comments on an earlier draft.
\bibliographystyle{alpha}
\bibliography{refs} 

\newcommand{\etalchar}[1]{$^{#1}$}
\begin{thebibliography}{AOGSS17}

\bibitem[ACDM18]{recent}
V.~{Arvind}, A.~{Chatterjee}, R.~{Datta}, and P.~{Mukhopadhyay}.
\newblock {Fast Exact Algorithms Using Hadamard Product of Polynomials}.
\newblock {\em ArXiv e-prints}, July 2018.

\bibitem[ADH{\etalchar{+}}08]{alon2008biomolecular}
Noga Alon, Phuong Dao, Iman Hajirasouliha, Fereydoun Hormozdiari, and S~Cenk
  Sahinalp.
\newblock Biomolecular network motif counting and discovery by color coding.
\newblock {\em Bioinformatics}, 24(13):i241--i249, 2008.

\bibitem[AFS09]{amini2009counting}
Omid Amini, Fedor~V Fomin, and Saket Saurabh.
\newblock Counting subgraphs via homomorphisms.
\newblock In {\em International Colloquium on Automata, Languages, and
  Programming}, pages 71--82. Springer, 2009.

\bibitem[AG07]{alon2007balanced}
Noga Alon and Shai Gutner.
\newblock Balanced families of perfect hash functions and their applications.
\newblock In {\em International Colloquium on Automata, Languages, and
  Programming}, pages 435--446. Springer, 2007.

\bibitem[AG09]{alon2009balanced}
Noga Alon and Shai Gutner.
\newblock Balanced hashing, color coding and approximate counting.
\newblock In {\em International Workshop on Parameterized and Exact
  Computation}, pages 1--16. Springer, 2009.

\bibitem[Alo03]{alon2003problems}
Noga Alon.
\newblock Problems and results in extremal combinatorics--{I}.
\newblock {\em Discrete Mathematics}, 273(1-3):31--53, 2003.

\bibitem[ALSV13]{alon2013approximate}
Noga Alon, Troy Lee, Adi Shraibman, and Santosh Vempala.
\newblock The approximate rank of a matrix and its algorithmic applications:
  approximate rank.
\newblock In {\em Proceedings of the forty-fifth annual ACM symposium on Theory
  of computing}, pages 675--684. ACM, 2013.

\bibitem[AOGSS17]{anari2016nash}
Nima Anari, Shayan Oveis~Gharan, Amin Saberi, and Mohit Singh.
\newblock Nash social welfare, matrix permanent, and stable polynomials.
\newblock In {\em 8th Innovations in Theoretical Computer Science Conference
  (ITCS 2017)}. Schloss Dagstuhl-Leibniz-Zentrum fuer Informatik, 2017.

\bibitem[AR02]{arvind2002approximation}
Vikraman Arvind and Venkatesh Raman.
\newblock Approximation algorithms for some parameterized counting problems.
\newblock In {\em International Symposium on Algorithms and Computation}, pages
  453--464. Springer, 2002.

\bibitem[AYZ95]{alon1995color}
Noga Alon, Raphael Yuster, and Uri Zwick.
\newblock Color-coding.
\newblock {\em Journal of the ACM (JACM)}, 42(4):844--856, 1995.

\bibitem[Bab16]{babai2016graph}
L{\'a}szl{\'o} Babai.
\newblock Graph isomorphism in quasipolynomial time.
\newblock In {\em Proceedings of the forty-eighth annual ACM symposium on
  Theory of Computing}, pages 684--697. ACM, 2016.

\bibitem[Bar96]{barvinok1996two}
Alexander~I Barvinok.
\newblock Two algorithmic results for the traveling salesman problem.
\newblock {\em Mathematics of Operations Research}, 21(1):65--84, 1996.

\bibitem[Bax93]{bax1993inclusion}
Eric~T Bax.
\newblock Inclusion and exclusion algorithm for the {H}amiltonian path problem.
\newblock {\em Information Processing Letters}, 47(4):203--207, 1993.

\bibitem[BBDS12]{blin2012parameterized}
Guillaume Blin, Paola Bonizzoni, Riccardo Dondi, and Florian Sikora.
\newblock On the parameterized complexity of the repetition free longest common
  subsequence problem.
\newblock {\em Information Processing Letters}, 112(7):272--276, 2012.

\bibitem[BCZ17]{blaser2017border}
Markus Bl{\"a}ser, Matthias Christandl, and Jeroen Zuiddam.
\newblock The border support rank of two-by-two matrix multiplication is seven.
\newblock {\em arXiv preprint arXiv:1705.09652}, 2017.

\bibitem[BDH18]{Brand2018Extensorcoding}
Cornelius Brand, Holger Dell, and Thore Husfeldt.
\newblock Extensor-coding.
\newblock In {\em Symposium on Theory of Computing}. ACM, 2018.

\bibitem[BDYW11]{barak2011rank}
Boaz Barak, Zeev Dvir, Amir Yehudayoff, and Avi Wigderson.
\newblock Rank bounds for design matrices with applications to combinatorial
  geometry and locally correctable codes.
\newblock In {\em Proceedings of the forty-third annual ACM symposium on Theory
  of computing}, pages 519--528. ACM, 2011.

\bibitem[BH06]{bjorklund2006inclusion}
Andreas Bj{\"o}rklund and Thore Husfeldt.
\newblock Inclusion--exclusion algorithms for counting set partitions.
\newblock In {\em 2006 47th Annual IEEE Symposium on Foundations of Computer
  Science (FOCS)}, pages 575--582. IEEE, 2006.

\bibitem[BHK09]{bjorklund2009set}
Andreas Bj{\"o}rklund, Thore Husfeldt, and Mikko Koivisto.
\newblock Set partitioning via inclusion-exclusion.
\newblock {\em SIAM Journal on Computing}, 39(2):546--563, 2009.

\bibitem[BIP19]{burgisser2019no}
Peter B{\"u}rgisser, Christian Ikenmeyer, and Greta Panova.
\newblock No occurrence obstructions in geometric complexity theory.
\newblock {\em Journal of the American Mathematical Society}, 32(1):163--193,
  2019.

\bibitem[Bj{\"o}10]{bjorklund2010determinant}
Andreas Bj{\"o}rklund.
\newblock Determinant sums for undirected hamiltonicity.
\newblock In {\em Foundations of Computer Science (FOCS), 2010 51st Annual IEEE
  Symposium on}, pages 173--182. IEEE, 2010.

\bibitem[CCC15]{carlini2015progress}
Enrico Carlini, Maria~Virginia Catalisano, and Luca Chiantini.
\newblock Progress on the symmetric {S}trassen conjecture.
\newblock {\em Journal of Pure and Applied Algebra}, 219(8):3149--3157, 2015.

\bibitem[CCG11]{carlini2011solution}
Enrico Carlini, Maria~Virginia Catalisano, and Anthony~V Geramita.
\newblock The solution to {W}aring's problem for monomials.
\newblock {\em arXiv preprint arXiv:1110.0745}, 2011.

\bibitem[CGLM08]{comon2008symmetric}
Pierre Comon, Gene Golub, Lek-Heng Lim, and Bernard Mourrain.
\newblock Symmetric tensors and symmetric tensor rank.
\newblock {\em SIAM Journal on Matrix Analysis and Applications},
  30(3):1254--1279, 2008.

\bibitem[CHI{\etalchar{+}}18]{chiantini2018polynomials}
Luca Chiantini, Jonathan~D Hauenstein, Christian Ikenmeyer, Joseph~M Landsberg,
  and Giorgio Ottaviani.
\newblock Polynomials and the exponent of matrix multiplication.
\newblock {\em Bulletin of the London Mathematical Society}, 50(3):369--389,
  2018.

\bibitem[CU13]{cohn2013fast}
Henry Cohn and Christopher Umans.
\newblock Fast matrix multiplication using coherent configurations.
\newblock In {\em Proceedings of the twenty-fourth annual ACM-SIAM symposium on
  Discrete algorithms}, pages 1074--1086. Society for Industrial and Applied
  Mathematics, 2013.

\bibitem[EGOW18]{efremenko2018barriers}
Klim Efremenko, Ankit Garg, Rafael Oliveira, and Avi Wigderson.
\newblock Barriers for rank methods in arithmetic complexity.
\newblock In {\em 9th Innovations in Theoretical Computer Science Conference
  (ITCS 2018)}. Schloss Dagstuhl-Leibniz-Zentrum fuer Informatik, 2018.

\bibitem[FLR{\etalchar{+}}12]{fomin2012faster}
Fedor~V Fomin, Daniel Lokshtanov, Venkatesh Raman, Saket Saurabh, and
  BV~Raghavendra Rao.
\newblock Faster algorithms for finding and counting subgraphs.
\newblock {\em Journal of Computer and System Sciences}, 78(3):698--706, 2012.

\bibitem[Gat14]{gathmann2002algebraic}
Andreas Gathmann.
\newblock Algebraic geometry.
\newblock 2014.

\bibitem[Gly13]{glynn2013permanent}
David~G Glynn.
\newblock Permanent formulae from the {V}eronesean.
\newblock {\em Designs, codes and cryptography}, 68(1-3):39--47, 2013.

\bibitem[GRWZ18]{gutin2018designing}
Gregory Gutin, Felix Reidl, Magnus Wahlstr{\"o}m, and Meirav Zehavi.
\newblock Designing deterministic polynomial-space algorithms by color-coding
  multivariate polynomials.
\newblock {\em Journal of Computer and System Sciences}, 95:69--85, 2018.

\bibitem[GS13]{guillemot2013finding}
Sylvain Guillemot and Florian Sikora.
\newblock Finding and counting vertex-colored subtrees.
\newblock {\em Algorithmica}, 65(4):828--844, 2013.

\bibitem[Gur04]{gurvits2004combinatorial}
Leonid Gurvits.
\newblock Combinatorial and algorithmic aspects of hyperbolic polynomials.
\newblock {\em arXiv preprint math/0404474}, 2004.

\bibitem[Gur06]{gurvits2006hyperbolic}
Leonid Gurvits.
\newblock Hyperbolic polynomials approach to {V}an der
  {W}aerden/{S}chrijver-{V}aliant like conjectures: sharper bounds, simpler
  proofs and algorithmic applications.
\newblock In {\em Proceedings of the thirty-eighth annual ACM symposium on
  Theory of computing}, pages 417--426. ACM, 2006.

\bibitem[Gur08]{gurvits}
Leonid Gurvits.
\newblock Ryser (or polarization) formula for the permanent is essentially
  optimal: the {W}aring rank approach.
\newblock {\em Los Alamos Technical report, LA-UR-08-06583}, 2008.

\bibitem[HWZ08]{huffner2008algorithm}
Falk H{\"u}ffner, Sebastian Wernicke, and Thomas Zichner.
\newblock Algorithm engineering for color-coding with applications to signaling
  pathway detection.
\newblock {\em Algorithmica}, 52(2):114--132, 2008.

\bibitem[IK99]{iarrobino1999power}
Anthony Iarrobino and Vassil Kanev.
\newblock {\em Power sums, {G}orenstein algebras, and determinantal loci}.
\newblock Springer Science \& Business Media, 1999.

\bibitem[Jel13]{jelisiejew2013upper}
Joachim Jelisiejew.
\newblock An upper bound for the waring rank of a form.
\newblock {\em arXiv preprint arXiv:1305.6957}, 2013.

\bibitem[Kar82]{karp1982dynamic}
Richard~M Karp.
\newblock Dynamic programming meets the principle of inclusion and exclusion.
\newblock {\em Operations Research Letters}, 1(2):49--51, 1982.

\bibitem[KGK77]{kohn1977generating}
Samuel Kohn, Allan Gottlieb, and Meryle Kohn.
\newblock A generating function approach to the traveling salesman problem.
\newblock In {\em Proceedings of the 1977 annual conference}, pages 294--300.
  ACM, 1977.

\bibitem[Kou08]{koutis2008faster}
Ioannis Koutis.
\newblock Faster algebraic algorithms for path and packing problems.
\newblock In {\em International Colloquium on Automata, Languages, and
  Programming}, pages 575--586. Springer, 2008.

\bibitem[KW09]{koutis2009limits}
Ioannis Koutis and Ryan Williams.
\newblock Limits and applications of group algebras for parameterized problems.
\newblock In {\em International Colloquium on Automata, Languages, and
  Programming}, pages 653--664. Springer, 2009.

\bibitem[KW15]{koutis2015algebraic}
Ioannis Koutis and Ryan Williams.
\newblock Algebraic fingerprints for faster algorithms.
\newblock {\em Communications of the ACM}, 59(1):98--105, 2015.

\bibitem[Lan12]{landsberg2012tensors}
Joseph~M Landsberg.
\newblock Tensors: geometry and applications.
\newblock {\em Representation theory}, 381:402, 2012.

\bibitem[Lan17]{landsberg_2017}
J.~M. Landsberg.
\newblock {\em Geometry and Complexity Theory}.
\newblock Cambridge Studies in Advanced Mathematics. Cambridge University
  Press, 2017.

\bibitem[Lee16]{lee2016power}
Hwangrae Lee.
\newblock Power sum decompositions of elementary symmetric polynomials.
\newblock {\em Linear Algebra and its Applications}, 492:89--97, 2016.

\bibitem[Mac94]{macaulay1994algebraic}
Francis~Sowerby Macaulay.
\newblock {\em The algebraic theory of modular systems}, volume~19.
\newblock Cambridge University Press, 1994.

\bibitem[Mat79]{mathon1979note}
Rudolf Mathon.
\newblock A note on the graph isomorphism counting problem.
\newblock {\em Information Processing Letters}, 8(3):131--136, 1979.

\bibitem[Ned09]{nederlof2009fast}
Jesper Nederlof.
\newblock Fast polynomial-space algorithms using m{\"o}bius inversion:
  Improving on steiner tree and related problems.
\newblock In {\em International Colloquium on Automata, Languages, and
  Programming}, pages 713--725. Springer, 2009.

\bibitem[RS11]{ranestad2011rank}
Kristian Ranestad and Frank-Olaf Schreyer.
\newblock On the rank of a symmetric form.
\newblock {\em Journal of Algebra}, 346(1):340--342, 2011.

\bibitem[Rys64]{leech_1964}
H.J. Ryser.
\newblock {\em Combinatorial Mathematics}.
\newblock Carus Mathematical Monographs. Cambridge University Press, 1964.

\bibitem[Shi17]{shitov2017counterexample}
Yaroslav Shitov.
\newblock A counterexample to {S}trassen's direct sum conjecture.
\newblock {\em arXiv preprint arXiv:1712.08660}, 2017.

\bibitem[Syl52]{sylvester1970principles}
J.J. Sylvester.
\newblock On the principles of the calculus of forms.
\newblock {\em Cambridge and Dublin Mathematical Journal}, 7:52--97, 1852.

\bibitem[Tei14]{teitler2014geometric}
Zach Teitler.
\newblock Geometric lower bounds for generalized ranks.
\newblock {\em arXiv preprint arXiv:1406.5145}, 2014.

\bibitem[VW09]{vassilevska2009finding}
Virginia Vassilevska and Ryan Williams.
\newblock Finding, minimizing, and counting weighted subgraphs.
\newblock In {\em Proceedings of the forty-first annual ACM symposium on Theory
  of computing}, pages 455--464. ACM, 2009.

\bibitem[WGE16]{walter2016multi}
Michael Walter, David Gross, and Jens Eisert.
\newblock Multi-partite entanglement.
\newblock {\em arXiv preprint arXiv:1612.02437}, 2016.

\bibitem[Wil09]{williams2009finding}
Ryan Williams.
\newblock Finding paths of length {$k$} in {$O^*(2^k)$} time.
\newblock {\em Information Processing Letters}, 109(6):315--318, 2009.

\end{thebibliography}
\end{document}